\documentclass[11pt]{article}
\usepackage{bbm}
\usepackage{framed} 
\usepackage{dsfont}
\usepackage{url} 
\usepackage[bottom=1in, footskip=0.5in]{geometry}
\usepackage{complexity}
\usepackage{booktabs}
\usepackage{amsmath,amssymb}
\usepackage{amsthm}
\usepackage{thmtools} 
\usepackage{thm-restate}
\usepackage{nicefrac}
\usepackage{microtype}
\usepackage{calc}
\usepackage{enumerate}
\usepackage{enumitem}
\usepackage[usenames,dvipsnames]{xcolor}
\usepackage[colorlinks,citecolor=blue,linkcolor=BrickRed]{hyperref}
\usepackage{textgreek}
\usepackage{setspace}
\usepackage{parskip}

\usepackage{fullpage}
\allowdisplaybreaks

\usepackage{graphicx}
\usepackage[font=footnotesize,labelfont=bf]{subcaption}
\usepackage[font=footnotesize,labelfont=bf]{caption}
\usepackage[nobreak=true,skipabove=7pt]{mdframed}
\usepackage{appendix}
\usepackage[colorinlistoftodos]{todonotes}

\usepackage{algorithm}
\usepackage{algcompatible}
\usepackage[noend]{algpseudocode}

\usepackage{xr}
\usepackage{array}
\usepackage{xspace}
\usepackage[capitalise]{cleveref}

\usepackage{chngcntr}
\usepackage{mathtools}

\usepackage{nameref}

\usepackage{tikz}
\usetikzlibrary{patterns}

\DeclareMathOperator*{\Ber}{Ber}

\DeclareMathOperator*{\expectation}{\mathbb{E}}
\let\poly\relax
\DeclareMathOperator*{\poly}{poly}

\DeclareMathOperator*{\probability}{\Pr}

\newcommand\eps{\epsilon}

\newcommand\cA{\mathcal{A}}
\newcommand\abs[1]{\lvert #1 \rvert}

\newcommand{\prob}{\probability\probarg}
\DeclarePairedDelimiterX{\probarg}[1]{(}{)}{%
	\ifnum\currentgrouptype=16 \else\begingroup\fi
	\activatebar#1
	\ifnum\currentgrouptype=16 \else\endgroup\fi
}

\newcommand{\probover}[1]{\probability_{#1}\probarg}

\newcommand{\expect}{\expectation\expectarg}
\DeclarePairedDelimiterX{\expectarg}[1]{[}{]}{%
	\ifnum\currentgrouptype=16 \else\begingroup\fi
	\activatebar#1
	\ifnum\currentgrouptype=16 \else\endgroup\fi
}

\newcommand{\expectover}[1]{\expectation_{#1}\expectarg}

\DeclarePairedDelimiterX{\wklarg}[1]{(}{)}{%
	\ifnum\currentgrouptype=16 \else\begingroup\fi
	\activatebars#1
	\ifnum\currentgrouptype=16 \else\endgroup\fi
}

\newcommand{\innermid}{\nonscript\;\delimsize\vert\nonscript\;}
\newcommand{\activatebar}{%
	\begingroup\lccode`\~=`\|
	\lowercase{\endgroup\let~}\innermid 
	\mathcode`|=\string"8000
}

\newcommand{\innermids}{\nonscript\;\delimsize\vert\delimsize\vert\nonscript\;}
\newcommand{\activatebars}{%
	\begingroup\lccode`\~=`\|
	\lowercase{\endgroup\let~}\innermids 
	\mathcode`|=\string"8000
}

\newcommand\opt{\textsc{Opt}\xspace}
\newcommand\copt{c\left(\textsc{Opt}\right)\xspace}

\newcommand\lpopt{\textsc{LP}_{\textsc{Opt}}}
\newcommand\alg{\textsc{Alg}\xspace}
\newcommand\calg{c(\textsc{Alg})\xspace}

\counterwithin{equation}{section}

\newcommand{\Continue}{\textbf{continue}\xspace}

\usepackage{eqparbox}

\newcommand\wKL[2]{\textsc{KL}_c\left(#1 \mid \mid #2\right)}
\newcommand\supp[1]{\text{support}\left( #1 \right)}

\newcommand\cE{\mathcal{E}}

\newcommand\loc{\textsc{LearnOrCover}\xspace}

\newcommand\setcov{\textsc{SetCover}\xspace}

\newcommand\nmfl{\textsc{NonMetricFacilityLocation}\xspace}
\newcommand\ewst{\textsc{EdgeWeightedSteinerTree}\xspace}
\newcommand\ewsf{\textsc{EdgeWeightedSteinerForest}\xspace}
\newcommand\nwst{\textsc{NodeWeightedSteinerTree}\xspace}
\newcommand\nwsf{\textsc{NodeWeightedSteinerForest}\xspace}

\newcommand\pcnwsf{\textsc{PCNWSF}\xspace}

\newtheorem{theorem}{Theorem}[section]

\newtheorem{lemma}[theorem]{Lemma}

\newtheorem{claim}[theorem]{Claim}

\newtheorem{invariant}{Invariant}

\newtheorem{assumption}[theorem]{Assumption}

\newlength{\continueindent}
\usepackage{etoolbox}
\makeatletter
\newcommand*{\ALG@customparshape}{\parshape 2 \leftmargin \linewidth \dimexpr\ALG@tlm+\continueindent\relax \dimexpr\linewidth+\leftmargin-\ALG@tlm-\continueindent\relax}
\apptocmd{\ALG@beginblock}{\ALG@customparshape}{}{\errmessage{failed to patch}}
\makeatother

\makeatletter
\def\thm@space@setup{%
	\thm@preskip=\parskip \thm@postskip=0pt
}
\makeatother

\usepackage{etoolbox}
\usepackage{tikz}
\usetikzlibrary{tikzmark}
\usetikzlibrary{calc}

\errorcontextlines\maxdimen

\newcommand{\ALGtikzmarkcolor}{black}
\newcommand{\ALGtikzmarkextraindent}{4pt}
\newcommand{\ALGtikzmarkverticaloffsetstart}{-.5ex}
\newcommand{\ALGtikzmarkverticaloffsetend}{-.5ex}
\makeatletter
\newcounter{ALG@tikzmark@tempcnta}

\newcommand\ALG@tikzmark@start{%
	\global\let\ALG@tikzmark@last\ALG@tikzmark@starttext%
	\expandafter\edef\csname ALG@tikzmark@\theALG@nested\endcsname{\theALG@tikzmark@tempcnta}%
	\tikzmark{ALG@tikzmark@start@\csname ALG@tikzmark@\theALG@nested\endcsname}%
	\addtocounter{ALG@tikzmark@tempcnta}{1}%
}

\def\ALG@tikzmark@starttext{start}
\newcommand\ALG@tikzmark@end{%
	\ifx\ALG@tikzmark@last\ALG@tikzmark@starttext
	\else
	\tikzmark{ALG@tikzmark@end@\csname ALG@tikzmark@\theALG@nested\endcsname}%
	\tikz[overlay,remember picture] \draw[\ALGtikzmarkcolor] let \p{S}=($(pic cs:ALG@tikzmark@start@\csname ALG@tikzmark@\theALG@nested\endcsname)+(\ALGtikzmarkextraindent,\ALGtikzmarkverticaloffsetstart)$), \p{E}=($(pic cs:ALG@tikzmark@end@\csname ALG@tikzmark@\theALG@nested\endcsname)+(\ALGtikzmarkextraindent,\ALGtikzmarkverticaloffsetend)$) in (\x{S},\y{S})--(\x{S},\y{E});%
	\fi
	\gdef\ALG@tikzmark@last{end}%
}

\apptocmd{\ALG@beginblock}{\ALG@tikzmark@start}{}{\errmessage{failed to patch}}
\pretocmd{\ALG@endblock}{\ALG@tikzmark@end}{}{\errmessage{failed to patch}}
\makeatother

\algblock[with]{With}{EndWith}
\algblockdefx[With]{With}{EndWith}%
[1]{\textbf{with} #1 \textbf{do}}%
{}

\makeatletter
\ifthenelse{\equal{\ALG@noend}{t}}%
{\algtext*{EndWith}}
{}%
\makeatother

\title{Stochastic Gradient Meets Randomized Rounding: New Algorithms for Node-Weighted Steiner Problems}
\author{}
 \author{
        Joseph Koutsoutis\thanks{Computer Science Department, Rutgers University, New Brunswick, NJ 08816. Emails:\texttt{\{jsk245,jdl217,jy1149\}@scarletmail.rutgers.edu}, \texttt{roie.levin@rutgers.edu}. }
 	\and
        Jesse Lerner$^*$
 	\and
 	Roie Levin$^*$
        \and
        Jiawei Yu$^*$
 }
\date{}

\begin{document}
    \maketitle
    \begin{abstract}

We give a new $O(\log n)$ approximation algorithm for \textsc{NodeWeightedSteinerTree} and \textsc{NodeWeightedSteinerForest}. Our algorithm matches the bounds of Klein \& Ravi [J. Algorithms '95] which are best possible unless \P = \NP, but have the advantage that they work in the \emph{online} setting when the terminal pairs are revealed in \emph{random order}.

To obtain our results, we combine the LearnOrCover framework due to Gupta, Kehne, Levin [FOCS '21] with the Augmented Greedy algorithm of Berman \& Coulston [STOC '97] for online \emph{edge-weighted} Steiner Forest. Neither algorithm suffices on its own, but the analyses dovetail to imply our guarantee. Run offline, the algorithm reduces to a very simple randomized rounding scheme that (in spirit) reduces Node Weighted Steiner Forest to Edge Weighted Steiner Forest, and we hope this idea finds further applications.
\end{abstract}
    
    \section{Introduction}

In the Steiner Tree problem, we are given a graph $G = (V, E)$ with non-negative edge weights $\{c_e\}_e \in \mathbb{R}_{\geq0}$, as well as a set of $k$ terminals $T \subseteq V$. The goal is to buy a minimal cost subgraph that connects all terminals pairwise. In the Steiner Forest generalization (also known as generalized Steiner Tree), the input is instead a set of $k$ terminal \emph{pairs} $T \subseteq V \times V$, and this time we must buy a minimum cost subgraph that need only connect between paired terminals. 
In the \emph{online} version of these problems, the request set $T$ is revealed one-by-one. After each arrival, the algorithm must connect the incoming request by way of some path, and this decision is irrevocable. We have long known $O(1)$ approximation algorithms offline \cite{DBLP:journals/jco/KarpinskiZ97,DBLP:journals/jal/PromelS00,DBLP:journals/algorithmica/Zelikovsky93,DBLP:conf/soda/RobinsZ00,DBLP:journals/jacm/ByrkaGRS13, DBLP:journals/siamcomp/AgrawalKR95, DBLP:journals/siamcomp/GoemansW95} (and the problem is APX hard \cite{DBLP:journals/jacm/AroraLMSS98}), and that the simple greedy algorithm which connects each arriving terminal via a shortest path is $O(\log n)$ competitive \cite{DBLP:journals/siamdm/ImaseW91,DBLP:journals/tcs/AwerbuchAB04,DBLP:conf/stoc/BermanC97} (and this is information theoretically tight \cite{DBLP:journals/siamdm/ImaseW91}).

In the node-weighted variant, the input also includes vertex cost $\{c_v\}_v \in \mathbb{R}_{\geq0}$ that the algorithm must pay for each vertex in the output tree. This change seems small, but it makes the problem qualitatively harder: since this version captures \textsc{SetCover} as a special case, it has $\Omega(\log n)$ hardness of approximation \cite{feige1998thresholdforsetcover,DBLP:conf/stoc/DinurS14} and an $\Omega(\log^2 n)$  lower bound online \cite{korman2004randomizationinsetcover} (for polynomial time algorithms, assuming $\NP \not \subseteq \BPP$). The added complexity has also frustrated algorithmic progress; the first $O(\log k)$ approximation algorithm for node-weighted Steiner tree is due to Klein and Ravi \cite{DBLP:journals/jal/KleinR95} in 1995, but only very recently has the story been completed with an $O(\log k \log n)$ competitive online algorithm by Borst, Eli\'as and Venzin \cite{DBLP:conf/soda/Borst0V25} (building on the work of  \cite{DBLP:conf/focs/NaorPS11,DBLP:journals/siamcomp/HajiaghayiLP17}).

\subsection{Our Results}
Our main contributions in this paper are algorithms for offline and random-order online \nwst and \nwsf which are best possible up to constant factors.
By \emph{random order}, we mean that the input graph $G$ and the ultimate terminal set $T$ are adversarially chosen, but the terminal set is initially unknown and the order in which $T$ is revealed is uniformly random.

\begin{restatable}{theorem}{thmnwsf}
    \label{thm:main_nwsf}
    There is a polynomial-time $O(\log n)$-competitive algorithm for random order \nwsf.
\end{restatable}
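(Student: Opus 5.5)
The plan is to combine \loc with the Augmented Greedy analysis of Berman and Coulston. First, fix an optimal fractional solution to the natural cut LP for \nwsf. We will not compute it; it serves only as the benchmark. Then reduce to a setting where every vertex cost is either negligible or at least $\lpopt/\poly(n)$. This uses a standard guess-and-double on $\lpopt$, losing $O(1)$: vertices with $c_v\le \lpopt/n^2$ are bought for free, and the rest are bucketed.

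The algorithm maintains a fractional vector $x$ over vertices, initialized uniformly at a scale like $1/n$. On each arriving pair $(s_i,t_i)$, if $s_i,t_i$ are already connected in the bought subgraph, do nothing. Otherwise:
\begin{enumerate}[label=(\roman*)]
\item Compute a violated cut, or a shortest path under the node lengths induced by $x$. Perform a multiplicative-weights (mirror-descent) step on $x$ along that cut, as in \loc.
\item Sample each vertex $v$ independently with probability $\min(1,x_v)$ and buy it. This is the randomized rounding step.
\item Connect the pair by an Augmented Greedy path in the graph where each bought vertex costs $0$ and each unbought vertex costs $c_v$. Node costs are converted to edge lengths by splitting each vertex into an edge.
\end{enumerate}

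The analysis has two parts. The \emph{learning} part uses a potential equal to the weighted KL divergence $\wKL{x^*}{x}$ between the optimal fractional solution and the current iterate. Each multiplicative update against a cut of fractional value below $1$ decreases this potential by $\Omega(1)$ times the step's cost, relative to $c(x^*)$. The initial potential is $O(\lpopt\log n)$, so the total expected sampling cost is $O(\log n)\lpopt$.

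The \emph{covering} part uses the random order. Condition on the history up to step $i$. The arriving pair is uniform among the not-yet-arrived pairs, so the expected greedy connection cost at step $i$ is an average over the remaining pairs. This is where Berman--Coulston enters. Their dual-ball (moat) argument shows that for edge-weighted Steiner Forest, the greedy cost summed over any ordering is $O(\log n)$ times \opt. Here I want a stronger, ``per-step average'' version. Given the currently bought vertices and the current $x$, the expected greedy cost of a uniformly random remaining pair should be bounded by $O(1/(k-i+1))$ times the cost of an edge-weighted Steiner forest on the remaining pairs. That forest is obtained by rounding $x^*$ restricted to unbought vertices, with cost $O(\lpopt)$. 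Summing over $i$ then gives $O(H_k)\lpopt$.

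The link between the two parts is the following. If the greedy path for a random pair is expensive, then with constant probability the cut for that pair is fractionally under-covered by $x$. That triggers a learning step, which pays for the connection through the potential drop: either we learn or we cover.

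The main obstacle is this per-step charging. Berman--Coulston's bound is global over a sequence, not an expected per-step bound, and the residual instance changes as vertices are sampled. I would first try to define the Berman--Coulston dual balls with respect to a node metric in which each vertex costs $c_v(1-x_v)^+$. I would then show that the dual packing value of the remaining terminals is at most $O(\lpopt)$, using $x^*$ as the primal comparator. A random-order averaging argument, of the ``each remaining terminal is equally likely to be next'' type, then gives the $1/(k-i+1)$ factor. Finally, $\log k\le 2\log n$ gives the stated $O(\log n)$ bound.
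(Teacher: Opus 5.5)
Your outline has a genuine gap, and it is exactly the step you flag as the main obstacle. The ``per-step average'' Berman--Coulston bound is false even for edge-weighted instances with nothing to learn. Suppose the $m=k-i+1$ remaining pairs $(a_j,b_j)$ have every $a_j$ clustered near a point $A$ and every $b_j$ near a point $B$, at distance $D$. Then each remaining pair costs about $D$, and a forest for all of them also costs about $D$. So the expected step cost is $\Theta(1)$ times the forest, not $O(1/m)$ times it. Only an amortized statement can hold, which is why the paper's covering potential is $\beta\log(\rho^t/\beta+1/n)$ over the total future connection proxy $\rho^t$.

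Your comparator is also unavailable. An integral forest ``obtained by rounding $x^*$'' cannot cost $O(\lpopt)$ in general, since the LP inherits the $\Omega(\log k)$ set cover integrality gap. Bounding the packing of node-weighted balls by $O(\lpopt)$ is precisely what fails in the broom example, where all balls meet in one vertex. Reweighting by $c_v(1-x_v)^+$ does not help while $x$ has not yet learned that vertex. Finally, sampling each vertex with probability $\min(1,x_v)$ in every step costs $c^\intercal x$ per step, independent of the arriving pair. So ``the initial potential is $O(\lpopt\log n)$, hence the sampling cost is $O(\log n)\lpopt$'' does not follow.

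The missing idea is the choice of the set on which to learn and whose sampling counts as covering. A violated cut is the wrong object. Sampling one vertex of a cut does nothing for connectivity, unlike in set cover. A multiplicative step strong enough to pay for a path of cost $d$ (rate $d/c_v$) blows up the normalizer on cut vertices with $c_v\ll d$. The paper instead learns on the shortcut set $\Gamma^t(s)$: vertices whose single purchase halves the distance from $s_i$ to the other terminals of the same level. Every such vertex has $c_v\ge d^t(s)/4$. Hence the update $x_v\leftarrow x_v e^{d^t(s)/c_v}$ costs only $O(d^t(s)\,X^t(s))$ in the KL term, and any single sampled vertex of $\Gamma^t(u)$ shrinks $\kappa^t_u$ by a constant factor.

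The Ball Covering Lemma closes the loop. Since $D^t\cup\Gamma^t(s)$ covers $B(s_0,2^{\ell-3})$, one of two things holds:
\begin{itemize}
\item $x^*(\Gamma^t(s))=\Omega(1)$, and the learning step pays for $d^t(s)$;
\item the claimed region $D^t$ carries LP cost $\Omega(d^t(s))$; since it is disjoint from all other claims at the same level, this totals $O(\log n)\lpopt$ over the $O(\log n)$ levels.
\end{itemize}

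Random order enters through the sampling rate $d^t(s)x_v/\beta$, with $c^\intercal x=\beta$ maintained. The expected sampling cost then equals the path cost. The rate of a uniformly random arrival averages to $\Theta(\rho^{t-1}/|U^{t-1}|)$. This is what makes $\log\rho$ drop by $\Omega(\mathbb{E}_u[d^t(u)\min(X^t(u),1)])$, which absorbs the positive term in the learning bound. Augmented Greedy, together with its auxiliary potential term, handles the case where the mate and the nearest same-level terminal are at different scales; your outline does not address this.
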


This bound matches the tight offline approximation ratio up to constants in the $n = \poly(k)$ regime, and is best-possible up to $\log \log n$ terms: an information-theoretic $\Omega(\log n / \log \log n)$ lower bound follows from known constructions for set cover even when $n \gg k$ (see e.g. \cite[Theorem 5.2]{gupta2024randomordersetcover}). Run offline, our algorithm is a very simple $O(\log k)$-competitive LP rounding algorithm for \nwst, which to our knowledge is new. We also show in \cref{subsec:adv-order-online-rounding} that it implies a particularly simple $O(\log k \log n)$-competitive algorithm for the adversarial order online setting, matching the result of \cite{DBLP:conf/soda/Borst0V25}.

We then generalize the algorithm to the Prize-Collecting setting, where the algorithm is allowed to forego covering an incoming terminal at a cost.

\begin{restatable}{theorem}{thmpcnwsf}
    \label{thm:main_pcnwsf}
    There is a polynomial-time $O(\log(n) + \log^2 k)$-competitive algorithm for random order prize-collecting  \nwsf.
\end{restatable}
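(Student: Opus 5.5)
We plan to prove \cref{thm:main_pcnwsf} by reducing prize-collecting to the non-prize-collecting algorithm behind \cref{thm:main_nwsf}, with an added online decision rule for when to pay a penalty $\pi_i$ instead of connecting pair $i$. The scheme combines LearnOrCover and Augmented Greedy, now run against a prize-collecting LP relaxation. In that LP, the fractional cut constraints for pair $i$ are relaxed to $\sum_{v \in S} x_v + z_i \ge 1$, and $z_i$ pays $\pi_i$.

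\emph{Learning step.} We keep a multiplicative-weights distribution over vertices, as in LearnOrCover. We add one extra coordinate per terminal pair for the penalty variable. The potential is the weighted KL divergence $\wKL{x^*}{x}$ between the optimal (prize-collecting) fractional solution and our current iterate. On each random-order arrival, we take a stochastic-gradient step that is unbiased for the gradient of the LP objective restricted to the revealed terminals. Each step either learns, meaning the potential drops by an amount proportional to the cost we spend, or covers, meaning the arriving pair is already cheaply connectable or cheaply penalized under the current rounding.

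\emph{Rounding step.} Following the offline rounding sketched for \cref{thm:main_nwsf}, we sample each vertex with probability proportional to its fractional value times $O(\log n)$. We then connect the remaining pairs by Berman--Coulston Augmented Greedy in the edge-weighted metric induced by the unbought vertex costs. For the penalty, we pay $\pi_i$ whenever the greedy connection cost exceeds some threshold relative to $\pi_i$. Otherwise we connect.

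\emph{Accounting.} We would bound the total cost in three pieces. (i) Sampled vertices cost $O(\log n)\cdot \lpopt$ via the KL potential argument: the total drop is bounded by $\lpopt \log n$. (ii) Augmented Greedy costs $O(\log k)$ times the edge-weighted optimum on the residual instance. (iii) Penalties paid are charged to the greedy dual.

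The extra $\log^2 k$ term should arise in piece (ii). The known prize-collecting analyses of online Steiner forest lose an additional $\log k$ factor on top of Augmented Greedy's $O(\log k)$, because the decision to penalize versus connect must be made against a guessed, doubling threshold.

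\emph{Main obstacle.} The hardest step is coupling the penalty decisions with the random-order learning argument. Penalties are irrevocable, and the LP's $z_i$ for a pair is only learned when that pair arrives. We would first try to treat each $z_i$ as an ordinary vertex variable local to its pair. We would then show that the expected penalty paid is at most $O(\log k)$ times the fractional $z_i$ plus the greedy charge, using random order to argue that the revealed prefix is a representative sample, as in LearnOrCover. If that fails, we would fall back to the known $O(\log^2 k)$-competitive prize-collecting edge-weighted forest algorithm as a black box on the residual metric.
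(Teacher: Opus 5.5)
There is a genuine gap. Your accounting (i)--(iii) is not backed by a mechanism that works in the random-order setting, and the step that carries all the weight is asserted rather than proved.

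\emph{Piece (i).} Online there is no fixed LP solution to round. Resampling at rate $O(\log n)\,x_v$ in every round can cost $\Theta(\beta\log n)$ per round regardless of that round's spend, and the KL term does not bound this. The paper instead samples in round $t$ at rate $\min\bigl(\frac{d^t(s)\pi_s}{d^t(s)+\pi_s}\cdot x^{t-1}_v/\beta,\,1\bigr)$. By the normalization $\sum_v c_v x^t_v=\beta$, the expected sampling cost is then at most that round's own budget, and it is paid for by the drop of a combined potential, not by KL alone.

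\emph{Piece (ii).} Comparing Augmented Greedy to the edge-weighted optimum of the residual instance gains nothing. That optimum is not $O(\lpopt)$ unless the shortcut vertices are already bought. In the broom of Figure~\ref{fig:broom-failure} with the hub unbought, an edge-weighted surrogate that upper-bounds node costs must charge each hub edge about half the hub's cost, so its optimum is $\Omega(k)$ times \opt. Showing that shortcut vertices get bought in time is precisely what has to be proved.

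The missing ingredients are these:
\begin{itemize}
\item The Ball Covering Lemma: either $x^*$ has cost $\Omega(d^t(s))$ inside the claimed ball $D^t$, or $x^*(\Gamma^t(s))=\Omega(1)$. In the second case the multiplicative update on $\Gamma^t(s)$ decreases $\wKL{x^*}{x^t}$, up to an additive loss $O(d^t(s)\min(X^t(s),1))$.
\item The covering potential $\Phi_C$ on a proxy $\rho^t$ for the total \emph{future} connection cost. It drops whenever the sample hits $\Gamma^t(u)$ for future pairs $u$.
\item The correct use of random order: the arriving pair is uniform over $U^{t-1}$, so the expected loss $\mathbb{E}_{s\sim U^{t-1}}\bigl[d^t(s)\min(X^t(s),1)\bigr]$ matches the expected drop of $\Phi_C$. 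It is not an argument that the revealed prefix is representative.
\end{itemize}

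\emph{The prize-collecting layer has its own gap.} As stated, your deterministic threshold rule has two possible readings, and both fail.
\begin{itemize}
\item If you do no learning on penalty rounds: take a broom whose hub costs more than each penalty, padded with decoy vertices so that sampling alone does not find the hub. You pay about $\sum_i\pi_i$, far above \opt.
\item If you update on penalty rounds with the usual step $e^{d^t(s)/c_v}$: you incur a KL loss of order $d^t(s)X^t(s)$, which a spend of $\pi_i\ll d^t(s)$ cannot cover.
\end{itemize}
The paper flips $Y^t\sim\Ber\bigl(d^t(s)/(d^t(s)+\pi_s)\bigr)$. It pays the penalty iff $Y^t=1$, and otherwise connects and runs augmented greedy or the update. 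Then the expected spend, the sampling rate and the expected KL loss all scale as $\frac{d^t(s)\pi_s}{d^t(s)+\pi_s}$. The paper also needs several further adjustments:
\begin{itemize}
\item It adds $\sum_{s\in U^t}\pi_s z^*_s$ to the learning potential.
\item It weakens the Ball Covering Lemma by a factor $(1-z^*_s)$, so that rounds where \opt itself pays the penalty can be charged. There is no ``greedy dual'' to charge to.
\item It claims balls only when $Y^t=0$.
\item It absorbs rounds with $\min(d^t(s),\pi_s)\le\beta/t$ by a harmonic term $\sum_{i>t}\beta/i$.
\end{itemize}

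Finally, the $\log^2 k$ does not come from a doubling threshold inside Augmented Greedy. Since $d^t(s)/\pi_s$ can be as large as $k$, a pair may need $O(\log k)$ shortcut hits before $d^t(s)\le\pi_s$. So $\kappa_s$ carries a multiplier $\log_2(\cdot/\pi_s)$ and $\Phi_C$ a prefactor $\log_2(10k)$, which gives $\Phi\ge-O(\beta\log^2 k)$. Your fallback of running an edge-weighted prize-collecting algorithm as a black box on the residual metric inherits the same problem as (ii).
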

We suspect that this last bound can be improved, and we leave this as an open question. Our algorithms can be implemented in  $O(k \cdot [\text{shortest path computation}]) = \tilde O(km)$ time (recall $k$ is the number of terminals, $n$ the number of vertices, $m$ the number of edges), which matches the running time of the fastest known $O(\log n)$-approximation algorithm due to Bateni et al.\ \cite{DBLP:conf/icalp/BateniHL13}. To the best of our knowledge, this is the first use of \loc to design fast algorithms, which we hope inspires future work.

Besides improving existing bounds for frontier problems, we hope our work illustrates that designing algorithms for restricted models of computation (e.g. online random order) is a fruitful research program that yields fundamental and useful algorithmic primitives even for classical settings.

\subsection{Techniques and Overview}

We begin in \cref{sec:offline} with an offline $O(\log k)$-approximate LP rounding algorithm for \nwst, since it isolates some of the main ideas. The starting point is the question: why does the greedy algorithm, i.e. scan terminals in arbitrary order and connect each via a shortest path to what has been purchased so far, fail in the node-weighted version? The edge-weighted analysis (see the exposition by \cite{panigrahi2015online}) uses that the sum of radii of disjoint balls around terminals is a lower bound on $\copt$, since $\opt$ must buy a path to leave each ball; the issue is precisely that this is false in the node-weighted case, because many balls can intersect in a single vertex whose purchase greatly reduces the cost to connect all terminals. Our solution is to perform independent random rounding on the standard LP relaxation and contract sampled vertices. This effectively excises the ``shortcut'' vertices so that a modified version of the edge-weighted greedy argument goes through.

To port this idea to the random order setting, our algorithm follows the \loc paradigm of \cite{gupta2024randomordersetcover} originally designed for \setcov and subsequently generalized to \nmfl \cite{gupta2023setcoveringeyeswide}. The basic idea of \loc is to maintain a fractional solution over a set of ``actions'' (these are sets in \setcov) as follows. 
As ``requests'' (elements in \setcov) arrive online, the algorithm simultaneously samples from this fractional solution and performs a multiplicative update when the current distribution does not place enough mass on actions that are ``useful'' for the arriving request (in \setcov, the LP-mass of sets containing the incoming element are less than say $1/2$).\footnote{This can be interpreted as (stochastic) exponentiated gradient descent, see \cite[Appendix D]{gupta2024randomordersetcover}.} The analysis, which uses a KL-divergence based potential function, then shows that \emph{learning} and \emph{covering} trade off smoothly: whenever randomized rounding does not already make enough progress, the fractional solution moves closer to \opt. 

There are a few obvious challenges to implementing this strategy for \nwsf. For one, what is the set of actions from which we should sample? How should we update our distribution over actions using information from a random request? And what is the appropriate augmentation step to perform? 

In \cref{sec:st_tree}, we instantiate these design decisions first in the special case of \nwst for exposition. The requests become terminals, the actions become Steiner vertices, the augmentations become greedy shortest paths, and the crucial insight is that the distribution update should up-weight the ``shortcut'' vertices from the offline argument. Intuitively, over time this strategy should eventually learn the important part of the LP solution that we used offline. However, the new challenge is that, unlike the offline case, we do not have the luxury of assuming shortcut vertices have been fully removed ahead of time. 

The remedy is as follows. In every step, if the shortest path that greedy buys is too expensive, then there exists a shortcut node invalidating the edge-weighted analysis. If the fractional solution does not place significant mass on this shortcut node, then multiplicative update makes progress; if it does place enough mass, then sampling from the fractional solution is likely to buy shortcut nodes for future terminals. This argument is made in expectation over the random choice of every incoming terminal, and this is important.\footnote{In fact this strategy \emph{must} fail in the online node-weighted setting since it suggests an $O(\log n)$ competitive ratio, but this problem has a $\Omega(\log^2 n)$ lower bound \cite{korman2004randomizationinsetcover}.} We measure progress by combining the \loc potential function with the proxy for $\copt$ that the greedy algorithm charges its cost to.

\begin{figure}
	\centering
	
    \tikzset{every picture/.style={line width=0.75pt}} 
		
		\resizebox{0.55\linewidth}{!}{%
		\tikzset{every picture/.style={line width=0.75pt}} 

\begin{tikzpicture}[x=0.75pt,y=0.75pt,yscale=-1,xscale=1]

\draw  [fill={rgb, 255:red, 74; green, 74; blue, 74 }  ,fill opacity=0.2 ][dash pattern={on 4.5pt off 4.5pt}] (214.25,240) .. controls (214.25,177.18) and (265.18,126.25) .. (328,126.25) .. controls (390.82,126.25) and (441.75,177.18) .. (441.75,240) .. controls (441.75,302.82) and (390.82,353.75) .. (328,353.75) .. controls (265.18,353.75) and (214.25,302.82) .. (214.25,240) -- cycle ;

\draw    (241,20) -- (240,140) ;
\draw    (150,240) .. controls (107,78) and (200,50) .. (240,20) ;
\draw    (245,145) -- (150,240) ;
\draw    (245,145) -- (328,240) ;
\draw  [fill={rgb, 255:red, 255; green, 255; blue, 255 }  ,fill opacity=1 ] (181,54) .. controls (181,49.03) and (185.03,45) .. (190,45) .. controls (194.97,45) and (199,49.03) .. (199,54) .. controls (199,58.97) and (194.97,63) .. (190,63) .. controls (185.03,63) and (181,58.97) .. (181,54) -- cycle ;
\draw  [fill={rgb, 255:red, 255; green, 255; blue, 255 }  ,fill opacity=1 ] (148,91) .. controls (148,86.03) and (152.03,82) .. (157,82) .. controls (161.97,82) and (166,86.03) .. (166,91) .. controls (166,95.97) and (161.97,100) .. (157,100) .. controls (152.03,100) and (148,95.97) .. (148,91) -- cycle ;
\draw  [fill={rgb, 255:red, 255; green, 255; blue, 255 }  ,fill opacity=1 ] (131,141) .. controls (131,136.03) and (135.03,132) .. (140,132) .. controls (144.97,132) and (149,136.03) .. (149,141) .. controls (149,145.97) and (144.97,150) .. (140,150) .. controls (135.03,150) and (131,145.97) .. (131,141) -- cycle ;
\draw  [fill={rgb, 255:red, 255; green, 255; blue, 255 }  ,fill opacity=1 ] (132,191) .. controls (132,186.03) and (136.03,182) .. (141,182) .. controls (145.97,182) and (150,186.03) .. (150,191) .. controls (150,195.97) and (145.97,200) .. (141,200) .. controls (136.03,200) and (132,195.97) .. (132,191) -- cycle ;
\draw  [fill={rgb, 255:red, 0; green, 0; blue, 0 }  ,fill opacity=1 ] (230,10) -- (250,10) -- (250,30) -- (230,30) -- cycle ;
\draw  [fill={rgb, 255:red, 0; green, 0; blue, 0 }  ,fill opacity=1 ] (140,230) -- (160,230) -- (160,250) -- (140,250) -- cycle ;
\draw  [fill={rgb, 255:red, 0; green, 0; blue, 0 }  ,fill opacity=1 ] (338,230) -- (318,230) -- (318,250) -- (338,250) -- cycle ;
\draw  [fill={rgb, 255:red, 0; green, 0; blue, 0 }  ,fill opacity=1 ] (190,230) -- (210,230) -- (210,250) -- (190,250) -- cycle ;
\draw    (245,145) -- (200,240) ;
\draw  [fill={rgb, 255:red, 255; green, 255; blue, 255 }  ,fill opacity=1 ] (215,145) .. controls (215,131.19) and (226.19,120) .. (240,120) .. controls (253.81,120) and (265,131.19) .. (265,145) .. controls (265,158.81) and (253.81,170) .. (240,170) .. controls (226.19,170) and (215,158.81) .. (215,145) -- cycle ;
\draw    (331,240) .. controls (374,78) and (281,50) .. (241,20) ;
\draw  [fill={rgb, 255:red, 255; green, 255; blue, 255 }  ,fill opacity=1 ] (300,54) .. controls (300,49.03) and (295.97,45) .. (291,45) .. controls (286.03,45) and (282,49.03) .. (282,54) .. controls (282,58.97) and (286.03,63) .. (291,63) .. controls (295.97,63) and (300,58.97) .. (300,54) -- cycle ;
\draw  [fill={rgb, 255:red, 255; green, 255; blue, 255 }  ,fill opacity=1 ] (333,91) .. controls (333,86.03) and (328.97,82) .. (324,82) .. controls (319.03,82) and (315,86.03) .. (315,91) .. controls (315,95.97) and (319.03,100) .. (324,100) .. controls (328.97,100) and (333,95.97) .. (333,91) -- cycle ;
\draw  [fill={rgb, 255:red, 255; green, 255; blue, 255 }  ,fill opacity=1 ] (350,141) .. controls (350,136.03) and (345.97,132) .. (341,132) .. controls (336.03,132) and (332,136.03) .. (332,141) .. controls (332,145.97) and (336.03,150) .. (341,150) .. controls (345.97,150) and (350,145.97) .. (350,141) -- cycle ;
\draw  [fill={rgb, 255:red, 255; green, 255; blue, 255 }  ,fill opacity=1 ] (349,191) .. controls (349,186.03) and (344.97,182) .. (340,182) .. controls (335.03,182) and (331,186.03) .. (331,191) .. controls (331,195.97) and (335.03,200) .. (340,200) .. controls (344.97,200) and (349,195.97) .. (349,191) -- cycle ;
\draw    (200,240) .. controls (161,156) and (183,83) .. (240,20) ;
\draw  [fill={rgb, 255:red, 255; green, 255; blue, 255 }  ,fill opacity=1 ] (205,53) .. controls (205,48.03) and (209.03,44) .. (214,44) .. controls (218.97,44) and (223,48.03) .. (223,53) .. controls (223,57.97) and (218.97,62) .. (214,62) .. controls (209.03,62) and (205,57.97) .. (205,53) -- cycle ;
\draw  [fill={rgb, 255:red, 255; green, 255; blue, 255 }  ,fill opacity=1 ] (186,91) .. controls (186,86.03) and (190.03,82) .. (195,82) .. controls (199.97,82) and (204,86.03) .. (204,91) .. controls (204,95.97) and (199.97,100) .. (195,100) .. controls (190.03,100) and (186,95.97) .. (186,91) -- cycle ;
\draw  [fill={rgb, 255:red, 255; green, 255; blue, 255 }  ,fill opacity=1 ] (173,141) .. controls (173,136.03) and (177.03,132) .. (182,132) .. controls (186.97,132) and (191,136.03) .. (191,141) .. controls (191,145.97) and (186.97,150) .. (182,150) .. controls (177.03,150) and (173,145.97) .. (173,141) -- cycle ;
\draw  [fill={rgb, 255:red, 255; green, 255; blue, 255 }  ,fill opacity=1 ] (174,191) .. controls (174,186.03) and (178.03,182) .. (183,182) .. controls (187.97,182) and (192,186.03) .. (192,191) .. controls (192,195.97) and (187.97,200) .. (183,200) .. controls (178.03,200) and (174,195.97) .. (174,191) -- cycle ;
\draw   (108.5,46) .. controls (103.83,46) and (101.5,48.33) .. (101.5,53) -- (101.5,115.75) .. controls (101.5,122.42) and (99.17,125.75) .. (94.5,125.75) .. controls (99.17,125.75) and (101.5,129.08) .. (101.5,135.75)(101.5,132.75) -- (101.5,198.5) .. controls (101.5,203.17) and (103.83,205.5) .. (108.5,205.5) ;
\draw   (137.75,267.75) .. controls (137.75,272.42) and (140.08,274.75) .. (144.75,274.75) -- (228.88,274.75) .. controls (235.55,274.75) and (238.88,277.08) .. (238.88,281.75) .. controls (238.88,277.08) and (242.21,274.75) .. (248.88,274.75)(245.88,274.75) -- (333,274.75) .. controls (337.67,274.75) and (340,272.42) .. (340,267.75) ;

\draw (240,240) node [anchor=north west][inner sep=0.75pt]  [font=\Huge]  {$\dotsc $};
\draw (185,49) node [anchor=north west][inner sep=0.75pt]    {$\epsilon $};
\draw (223,138) node [anchor=north west][inner sep=0.75pt]    {$1+\epsilon $};
\draw (200,292.4) node [anchor=north west][inner sep=0.75pt]    {$\times k$};
\draw (57,117) node [anchor=north west][inner sep=0.75pt]    {$\times \nicefrac{1}{\epsilon }$};

\end{tikzpicture}
		}
	
	\caption{Illustration of the failure case of the greedy algorithm for \nwst. In the instance above, the black squares correspond to terminals, the small white vertices cost $\eps$ and the large white vertex costs $1+\eps$. Greedy will buy $k$ paths of length $1$, while \opt buys the broom of cost $1+\epsilon$. The analysis fails because the boundaries of balls of radius $\Omega(1)$ centered at the terminals intersect in the center node of cost $1+\epsilon$, and hence $\opt$ can pay $1+\epsilon$ to simultaneously leave all balls.}
	\label{fig:broom-failure}
\end{figure}

In \cref{sec:st_forest}, we turn to the more involved case of \nwsf. When a terminal pair $(u,v)$ arrives, instead of the naive greedy algorithm, we perform an \emph{augmented} greedy step as in the algorithm of Berman and Coulston \cite{DBLP:conf/stoc/BermanC97}:  (1) connect $u$ and $v$ via the shortest path of length $d$, and (2)  if both terminals $u$ and $v$ are respectively within distance $O(d)$ of previous terminals $u'$ and $v'$ (not necessarily paired) who each required connection cost $O(d)$ on arrival, then connect $u$ to $u'$ and $v$ to $v'$. This change, plus some technical complications in the analysis stemming from the fact that the cost of (2) might be at a different scale as the cost of (1), suffice to generalize the analysis for the \nwst case. Finally, in \cref{sec:pcst_forest} we generalize the analysis to the prize-collecting setting, at the expense of extra loss in our competitive ratio.

\subsection{Related Work}

Klein and Ravi were the first to design an algorithm for \nwsf, and they obtained a best possible approximation ratio of \(O(\log k)\) \cite{DBLP:journals/jal/KleinR95}. They show that the optimal Steiner tree can be decomposed into spiders (i.e. trees with at most one vertex of degree more than two) which implies that the greedy algorithm, as in textbook analyses of \setcov, has low cost in every step. The constant in the approximation ratio was later improved by Guha and Khuller \cite{DBLP:journals/iandc/GuhaK99}. Another elegant moat-growing algorithm, similar to the $2$-approximation for edge-weighted Steiner Forest \cite{DBLP:journals/siamcomp/AgrawalKR95,DBLP:journals/siamcomp/GoemansW95}, was given by Bateni et al.\ \cite{DBLP:conf/icalp/BateniHL13}.

Naor, Panigrahi, and Singh were the first to design online algorithms for \nwst, achieving a competitive ratio of $O(\log n \log^2 k)$ \cite{DBLP:conf/focs/NaorPS11}. They use spider decompositions to reduce the problem to an instance of online \nmfl at a cost of \(\log k\) in the competitive ratio, then use the online algorithm for \nmfl due to Alon et al. \cite{DBLP:journals/talg/AlonAABN06} which incurs  competitive ratio $O(\log n \log k)$.

The technique of reducing to \nmfl was refined by \cite{DBLP:journals/siamcomp/HajiaghayiLP17} to give an $O(\log n \log^2 k)$-competitive online algorithm for \nwsf, and again by
 \cite{DBLP:conf/soda/Borst0V25} to finally achieve the tight $O(\log n \log k)$
  competitive ratio for \nwsf and its prize-collecting extension. The key contribution of the last algorithm was showing how to solve \nwsf via a reduction to a single auxiliary instance of \nmfl, whereas
  \cite{DBLP:journals/siamcomp/HajiaghayiLP17} needed $\log k$ many instances.

  Finally, we mention that there has been previous interest in direct rounding schemes for \nwst, especially in the online setting. Naor et al.\ \cite{DBLP:conf/focs/NaorPS11} write in passing that the classic LP for \nwst ``appears to be too weak to allow for this kind of (simple) rounding without losing a polynomial factor in the competitive ratio''. One can view our work as a refutation of this comment.

    \section{Preliminaries}
\label{sec:prelims}

\paragraph{Assumptions on $G$.}

The reader may assume throughout the paper that all edge weights are zero. This is without loss of generality since we can subdivide every edge with non-zero weight and move its weight to the new node. We will also assume that all terminals cost zero as we can create a copy of every terminal with zero weight and connect it to its original version with a zero weight edge (and we remove the original version from the terminal set). The approximation ratio is asymptotically unchanged since $n$ only increases polynomially.

\paragraph{Notions from Graph Theory.}

For node-weighted graph $G$ and $u,v \in V(G)$, define $d_G(u, v)$ to be the cost of the cheapest $u$-$v$ path \emph{not} including the costs of $u,v$.
We define the ball of radius $r$ around vertex $u$ to be $B_G(u, r) := \{v \in V(G) : d(u, v) \leq r\}$. Define the boundary of $B_G(u, r)$ to be the set $\partial B_G(u, r) := \{v \in V(G) : d(u, v) \leq r, \ d(u, v) + c_v \geq r\}$, and the interior to be the set $\mathring B_G(u,r) : = B_G(u,r) \setminus \partial B_G(u,r)$.
For $F \subseteq V$, we define $G/F$ to be a copy of the graph $G$ with weights belonging to nodes in $F$ set to 0.

\paragraph{LP Relaxation.}

In the \nwsf problem, the terminal set $T$ contains terminal pairs of the form $(s_0, s_1)$. The \nwst problem is a special case of \nwsf, where $s_0 = r$ for all pairs in $T$. The following is an LP relaxation for both problems. Let $\mathcal{P}_{s}$ denotes all $s_0$-$s_1$, we have:
\begin{alignat}{3}
\min\quad & \sum_{v \in V} c_v x_v \nonumber \\
\text{s.t.}\quad & \sum_{P \in \mathcal{P}_{s}} f_P \geq 1, \quad && \forall s \in T \label{eq:path-demand} \\
& \sum_{\substack{P \in \mathcal{P}_{s} \\ P \ni v}} f_P \leq x_v, \quad && \forall s \in T, \, \forall v \in V \label{eq:vertex-capacity} \\
& x_v \geq 0, \quad && \forall v \in V \nonumber \\
& f_P \geq 0, \quad && \forall P \in \bigcup_s \mathcal{P}_s \nonumber
\end{alignat}
We note that this can be solved in polynomial time, since given $x$ variables, the $f$ variables are determined via max flow computations. We also note that previous works on \nwsf (e.g. \cite{DBLP:conf/icalp/BateniHL13}) prefer the equivalent cut based formulation, but we use this one for convenience in our proofs. We will need a few general facts about this LP. 

Throughout the paper, we make a few convenient assumptions. First, by a guess-and-double approach, we assume that we know a bound $\beta$ such that $\lpopt \leq \beta \leq 2 \cdot \lpopt$ (see \cref{sec:guess-and-double}). Let $x^*$ denote an arbitrary fixed optimal fractional solution. Then:
\begin{restatable}{fact}{support}
    \label{fact:support}
    Every fractional optimal solution $x^*$ of \nwst is only supported on nodes $v$ such that $c_v \leq \lpopt \leq \beta$.
\end{restatable}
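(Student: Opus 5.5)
The plan is to argue by contradiction, using that every feasible \(x\) can be modified so its support avoids expensive nodes without increasing cost. Suppose \(x^*\) is optimal with \(x^*_w > 0\) for some node \(w\) with \(c_w > \lpopt\). For \nwst (all pairs share the root \(r\)), we construct a feasible solution of strictly smaller cost, which contradicts optimality. The second inequality \(\lpopt \le \beta\) holds by the choice of \(\beta\).

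First we note what makes \(w\) cost more than the whole optimum. We have \(\lpopt \ge c_w x^*_w\), so \(x^*_w \le \lpopt/c_w < 1\). We also note \(w\) is not a terminal, since terminals cost zero by the preliminaries.

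Next we reroute flow around \(w\). For each terminal \(s\), write its flow as \(f = f^{\ni w} + f^{\not\ni w}\), where \(f^{\ni w}\) is the flow on paths through \(w\); this part has value at most \(x^*_w\). The paths avoiding \(w\) carry at least \(1 - x^*_w > 0\) units. Scale the avoiding flow up by \(1/(1-x^*_w)\) and discard the flow through \(w\). Set \(x'_v = x^*_v/(1-x^*_w)\) for \(v \ne w\) and \(x'_w = 0\). Capacities \eqref{eq:vertex-capacity} still hold because the avoiding flow at each \(v\) was at most \(x^*_v\). Demands \eqref{eq:path-demand} hold by construction. The new cost is \((\lpopt - c_w x^*_w)/(1-x^*_w)\).

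It remains to show this is strictly less than \(\lpopt\). That inequality is equivalent to \(\lpopt - c_w x^*_w < \lpopt - \lpopt\, x^*_w\), i.e. \(c_w x^*_w > \lpopt\, x^*_w\), which holds since \(c_w > \lpopt\) and \(x^*_w > 0\). So \(x^*\) was not optimal, a contradiction.

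The main obstacle is checking that the scaled flow remains feasible for every terminal simultaneously. It does, because the same scaling factor is used for all terminals, and for each terminal the flow avoiding \(w\) at every vertex is bounded by \(x^*_v\). Nothing in this argument uses the tree structure, so it works equally for \nwsf.
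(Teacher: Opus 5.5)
Your proof is correct and takes essentially the same approach as the paper: assume an expensive node $w$ with $x^*_w>0$, note $x^*_w<1$, set $x'_w=0$, rescale the remaining variables and the $w$-avoiding flow by $1/(1-x^*_w)$, and observe that the cost drops strictly below $\lpopt$. Your write-up is slightly more careful than the paper's, since you explicitly check that the rescaled $w$-avoiding flow still respects \eqref{eq:vertex-capacity} for every terminal simultaneously.
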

This implies that we can safely ignore nodes of cost more than $\beta$, and we leave the proof of this fact to \cref{sec:deferred}. Second, we assume that all nodes cost at least $\beta / n$, since we can preemptively purchase all vertices that cost less for a total cost of at most $\beta$. In fact, this holds for any paths between terminals.
We summarize these assumptions as follows.
\begin{assumption}
    \label{assmpt:beta-n-bound}
    For every node $v \in V$, we have $c_v \in [\beta/n, \beta]$. 
    Moreover, if $P \in \mathcal{P}_s$ is a shortest path for some not connected terminal $s$, then $c(P) \in [\beta/n, \beta]$, where $c(P) = \sum_{v \in P} c_v$.  
\end{assumption}
\begin{proof}
    By preemptively purchasing all nodes with costs lower than $\beta/n$, and \cref{fact:support}, we know that $c_v \in [\beta/n, \beta]$. It remains to prove the same bound for all shortest paths between terminals.

    Fix a terminal pair $s = (s_0, s_1)$ that is not connected, and let $P \in \mathcal{P}_s$ be a shortest path connecting $s$. The lower bound is immediate. Let $x^*$ be an optimal solution. Then
    \begin{align*}
        \beta \geq \sum_{v}c_v x^*_v &\geq \sum_vc_v \cdot \left(\sum_{\substack{P' \in \mathcal{P}_s: \\
        v \in P'}}f_{P'}\right) =\sum_{P' \in \mathcal{P}_s}f_{P'}\cdot \left(\sum_{v \in P'}c_v\right)\geq c(P) \cdot \sum_{P' \in \mathcal{P}_s}f_{P'} \geq c(P),
    \end{align*}
    where the first inequality holds by the definition of $\beta$, the second by \eqref{eq:vertex-capacity}, the third from the fact that $P'$ is a shortest path in $\mathcal{P}_s$, and the last by \eqref{eq:path-demand}.
\end{proof}

\paragraph{KL Divergence.}

When not specified, logarithms in this paper are taken to be base $e$. In the following definitions, let $x,y \in \mathbb{R}^n_+$ be vectors. 
We use a weighted generalization of KL divergence. 
Given a nonnegative weight function $c$, define
\[
\wKL{x}{y} : = \sum_{i=1}^n c_i \left[x_i \log \left(\frac{x_i}{y_i}\right) -x_i + y_i\right].
\]
This quantity is always nonnegative, because for each $i$, the term $[x_i \ln \left(\nicefrac{x_i}{y_i}\right) -x_i + y_i]$ is nonnegative. To see this,
one can check that the function $f(z) = z \ln z - z + 1$ is convex, minimized at $z=1$, and nonnegative. Thus, $x_i \ln \left(\nicefrac{x_i}{y_i}\right) -x_i + y_i = y_i \cdot f\left(\nicefrac{x_i}{y_i}\right) \geq 0$.

    \section{Warmup: An Offline LP-Rounding Scheme}\label{sec:offline}

We begin our exposition with a simple rounding algorithm for \nwst to illustrate some of our main ideas. As hinted in the introduction, the thrust is to use randomized rounding to effectively remove ``shortcut'' nodes that cause problems for the analysis of the greedy algorithm for \ewst, and then to run said greedy algorithm. Before we give a precise definition for shortcut nodes, we present our offline LP rounding algorithm.

\begin{algorithm}[H]
\caption{\textsc{\nwst LP Rounding}}
\label{alg:lp_rounding}
\begin{algorithmic}[1]
    \State Input: LP solution $x$.
    \State Initialize $F' \leftarrow \emptyset$.
    \For{vertex $v \in V$}
    \State With probability $\min(1, x_v \cdot 4 \ln k)$, add $F' \leftarrow F' \cup \{v\}$.
    \EndFor
    \State Initialize $F \leftarrow \emptyset$.
    \For{\(i \in [k]\)}
    \State Set \(P_i\) to be the shortest path in \(G / F'\) connecting \(s\) to \(r\).
    \State Set $F \leftarrow F \cup P_i$.
    \EndFor
    \State \Return $F \cup F'$
\end{algorithmic}
\end{algorithm}

Our goal is to show the following theorem.
\begin{theorem}\label{thm:offline}
    \cref{alg:lp_rounding} produces a feasible solution of expected cost $O(\log k) \cdot c^\intercal x$
\end{theorem}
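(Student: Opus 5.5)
Feasibility is immediate: every $P_i$ connects its terminal to $r$, and vertices of $F'$ are bought anyway. The sampled set $F'$ costs $\mathbb{E}[c(F')] \le 4\ln k \cdot c^\intercal x$ by linearity of expectation. The remaining work is to bound $\mathbb{E}[c(F)]$, the cost of the greedy paths computed in $G/F'$. For this I will adapt the edge-weighted greedy analysis by charging to disjoint balls. Order the terminals $s_1,\dots,s_k$ as processed, and let $d_i$ be the cost of $P_i$ in $G/F'$. This cost is $d_{G/F'}(s_i, \{r\}\cup P_1\cup\dots\cup P_{i-1})$ plus endpoint costs, which are zero by the terminal assumption. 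The standard argument (Imase--Waxman / Berman--Coulston style) groups the terminals by scale: for each $j$, let $T_j=\{i : d_i \in (2^{j-1},2^j]\}$. For $i,i'\in T_j$ the distance $d_{G/F'}(s_i,s_{i'}) \ge 2^{j-1}$, so the balls $B_{G/F'}(s_i, 2^{j-2})$ have pairwise disjoint interiors. By \cref{assmpt:beta-n-bound} only $O(\log n)$ scales matter; to get $O(\log k)$ instead I would use the usual trick of discarding paths cheaper than $\beta/k$, whose total is at most $\beta$, which leaves $O(\log k)$ scales.

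The key step is a lemma replacing ``\opt pays $r$ to leave each ball'': for a family of balls with disjoint interiors of radius $\rho$ around terminals in $G/F'$, I want to show
\[
\mathbb{E}\Big[\sum_i \rho\,\mathbf 1[\text{ball } i \text{ nonempty}]\Big] \lesssim c^\intercal x
\]
up to constants. The plan uses the LP flow. Each terminal $s_i$ sends one unit of flow to $r$, and that flow must exit $\mathring B(s_i,\rho)$. Call a node $v$ a \emph{shortcut} if it lies on the boundary of many balls, i.e.\ if it is ``shared.'' If the $x$-mass crossing the boundary of ball $i$ through nodes not in $F'$ is at least $1/2$, then the fractional cost of exiting is at least $\rho/2$. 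The reason is that every flow path crosses a boundary node $v$, which contributes $c_v$ against at least $r-d(s_i,v)$ of the radius. I will make this precise with a layered argument: integrating over radii $t\in[0,\rho]$, each path must pay for crossing each level, which gives $\sum_{v} c_v\cdot(\text{flow of }s_i\text{ through }v \text{ inside ball})\ge \rho$. Vertices in $F'$ have zero cost in $G/F'$, which complicates this. The cleaner plan is to argue about $G$ itself: the greedy pays for $P_i$ only when, with respect to $F'$, there is no cheap path. Since every vertex with $x_v \ge 1/(4\ln k)$ is in $F'$ deterministically, and each other vertex is independently in $F'$ with probability $4x_v\ln k$, any fixed flow path family carrying total flow $\ge 1$ through vertices outside $F'$ of cost $<\rho$ is, with probability $\ge 1 - e^{-\Omega(\ln k)}=1-1/\mathrm{poly}(k)$, hit by some path whose uncontracted portion is cheap. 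I expect to formalize this as follows: terminal $s_i$ is ``expensive'' at scale $\rho$ only if the flow of $s_i$ through the part of the ball's interior that avoids $F'$ is small, and otherwise the disjoint interiors let us charge $\rho$ to $c^\intercal x$ restricted to those interiors.

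The main obstacle is exactly this charging step: the flows of different terminals overlap on boundary vertices (the broom of \cref{fig:broom-failure}), so a boundary vertex cannot be charged once per ball. The fix I plan to try is to charge only \emph{interior} portions. In $G/F'$, the cost of a flow path inside $\mathring B(s_i,\rho)$ counts only uncontracted vertices. A path whose ball-exit uses a high-$x$ shared boundary vertex is cheap with high probability, because that vertex is likely sampled into $F'$, which makes the ball smaller. Concretely, I will show that $\Pr[d_i>2\rho \text{ and the flow of } s_i \text{ spends} < \rho/2 \text{ of fractional cost in } \mathring B(s_i,\rho)\setminus F'] \le 1/k^2$ via a union-bound-free Chernoff/independent-rounding estimate on the set of boundary vertices carrying $s_i$'s flow (their $x$-mass is $\ge 1/2$, so some one is sampled with probability $\ge 1-k^{-2}$). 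Failures then contribute at most $k\cdot k^{-2}\cdot\beta$. In the good event, the disjoint interiors give $\sum_{i\in T_j}2^{j}\le O(c^\intercal x)$ per scale. Summing over the $O(\log k)$ scales bounds $\mathbb{E}[c(F)]=O(\log k)\,c^\intercal x$, completing the theorem.
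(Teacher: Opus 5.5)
There is a genuine gap in the charging step. The lemma you intend to prove,
$\Pr[\,d_i>2\rho \text{ and the flow of } s_i \text{ spends} <\rho/2 \text{ in } \mathring B_{G/F'}(s_i,\rho)\setminus F'\,]\le k^{-2}$, is false.

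Here is a counterexample. Take a terminal $s$, with the other $k-1$ terminals attached directly to $r$. Join $s$ to $r$ by $k$ disjoint paths $s$--$h_a$--$g_a$--$r$, with $c_{h_a}=c_{g_a}=C/2$ and $x_{h_a}=x_{g_a}=1/k$; this is LP-feasible with cost $C$. Each of these vertices enters $F'$ independently with probability $4\ln k/k$.

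With probability $1-O(\ln^2 k/k)$, no path is fully sampled but some vertex is. Then $d_s=C/2>2\rho$ for your radius $\rho=2^{j-2}<C/4$. Yet every unsampled $h_a$, and every $g_a$ behind a sampled $h_a$, lies on the \emph{boundary} of $B_{G/F'}(s,\rho)$. So $\mathring B_{G/F'}(s,\rho)\setminus F'=\{s\}$, which carries zero LP cost.

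All of $s$'s LP mass sits on unshared boundary vertices. They have large $x$-mass and some of them do get sampled, but this does not shorten $d_s$. A sampled vertex helps only if it is \emph{also} close to an \emph{earlier} terminal. So discarding boundaries throws away mass you genuinely need, and ``a heavy boundary vertex gets sampled'' is the wrong event.

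There is also a dependence problem. The balls $B_{G/F'}(s_i,\rho)$, their boundaries, and the scale classes $T_j$ are all functions of $F'$; the positive-cost boundary vertices are by definition the unsampled ones. So an independent-rounding estimate on ``the boundary vertices carrying $s_i$'s flow'' has no justification.

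The missing idea is to define the shortcut sets in the \emph{original} graph, relative only to earlier terminals in the fixed processing order, and for every level at once:
$\Gamma(s^i,\ell)=\bigcup_{j<i}B_G(s^j,2^{\ell-2})\cap B_G(s^i,2^{\ell-2})$.
These sets are deterministic. Take the minimal $\ell^*$ with $x(\Gamma(s^i,\ell^*))\ge 1/2$. Then $F'$ misses $\Gamma(s^i,\ell^*)$ with probability at most $k^{-2}$, and a hit forces $c(P_i)\le 2^{\ell^*-1}$. Hence at the realized level $\ell_i<\ell^*$ the shortcut mass is below $1/2$, simultaneously for all $i$ with probability $1-1/k$.

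Then charge $c(P_i)$ to $D(s^i)=B_G(s^i,2^{\ell_i-2})\setminus\Gamma(s^i,\ell_i)$. This is the whole $G$-ball \emph{including} its boundary, measured with the original costs $c_vx_v$. Disjointness among same-level terminals comes from removing $\Gamma$, which contains every intersection with an earlier ball of that radius regardless of that terminal's own level; it does not come from passing to interiors. At least half of $s^i$'s flow avoids $\Gamma$ and must spend at least $2^{\ell_i-2}$ inside $D(s^i)$, which gives $\sum_{v\in D(s^i)}c_vx_v\ge c(P_i)/16$. In the example above $\Gamma=\emptyset$ and $D(s)\ni h_a$, so the boundary mass is correctly counted.
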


We are ready to define the key definition we need for our analysis. Let the terminals be labeled \(s^1,s^2,\ldots,s^k\) according to the order considered by \cref{alg:lp_rounding}, and we label $s^1 = r$ the root for convenience. Let \(d(s^j) = \min_{i<j} d_{G}(s^i,s^j)\) be the cost to connect $s^j$. We may assume without loss of generality that \(2 \cdot c^\intercal x/k \leq d(s^j) \leq c^\intercal x\).
Finally, for \(j \in [k]\) and \(\ell \in \{1, 2, \ldots, \lfloor \log_2 d(s^j) \rfloor\}\), define the \emph{$\ell$-shortcut nodes for $s^j$} to be the set
\begin{equation*}
  \Gamma(s^j,\ell) = \bigcup_{i < j} B(s^i,2^{\ell-2}) \cap B(s^j,2^{\ell-2}) \text{.} 
\end{equation*}
Observe that $\Gamma(s^j, \ell)$ consists of nodes $v$ such that if the cost to connect $s^j$ to some $i < j$ is $2^\ell$, then buying $v$ reduces this connection cost by a factor of $2$. We start by showing that randomized rounding hits these shortcut sets whenever they contain sufficient LP mass. 

\begin{claim}
\label{claim:no_gamma_mass}
    With probability $1-1/k$, for every $i \in [k]$, we have that $x(\Gamma(s^i,\lfloor \log_2 c(P_i) \rfloor)) < 1/2$. 
\end{claim}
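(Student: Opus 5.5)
The approach is to turn the statement into a hitting event for a \emph{deterministic} family of sets, and then show that once such a set is hit, $P_i$ must be cheap.

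The obstacle to a direct argument is circularity. The index $\lfloor \log_2 c(P_i) \rfloor$ depends on $F'$, since $P_i$ is a shortest path in $G/F'$. The sets $\Gamma(s^i,\ell)$, however, are defined via balls in $G$ and do not depend on $F'$. So I first fix, for each $i$, a deterministic scale. Let $\ell^*_i$ be the smallest $\ell$ such that $x(\Gamma(s^i,\ell)) \geq 1/2$, if one exists. Balls grow with the radius, so $\Gamma(s^i,\ell)$ is monotone nondecreasing in $\ell$. Hence $\Gamma(s^i,\ell^*_i) \subseteq \Gamma(s^i,\ell)$ for every $\ell$ with $x(\Gamma(s^i,\ell)) \geq 1/2$. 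This nesting is what avoids a union bound over all scales $\ell$; such a bound would lose an extra factor equal to the number of scales and spoil the $1/k$ bound.

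Next is the probability bound. For each $i$ for which $\ell^*_i$ exists, let $E_i$ be the event that $F' \cap \Gamma(s^i,\ell^*_i) = \emptyset$. If some $v$ in this set has $x_v \cdot 4\ln k \geq 1$, then $v$ is sampled surely and $\Pr(E_i) = 0$. Otherwise, by independence,
\[
\Pr(E_i) = \prod_{v \in \Gamma(s^i,\ell^*_i)} (1 - 4 \ln k \cdot x_v) \leq \exp\bigl(-4 \ln k \cdot x(\Gamma(s^i,\ell^*_i))\bigr) \leq \exp(-2\ln k) = k^{-2}.
\]
A union bound over the at most $k$ indices $i$ shows that, with probability at least $1 - 1/k$, no $E_i$ occurs.

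Finally comes the deterministic step. Suppose no $E_i$ occurs, and suppose for contradiction that $x(\Gamma(s^i,\ell)) \geq 1/2$ for some $i$, where $\ell = \lfloor \log_2 c(P_i) \rfloor$. Then $\ell \geq \ell^*_i$, so by nesting some $v \in \Gamma(s^i,\ell)$ lies in $F'$. By the definition of $\Gamma$, there is an $i' < i$ with $d_G(s^i, v) \leq 2^{\ell-2}$ and $d_G(s^{i'}, v) \leq 2^{\ell-2}$. Concatenate the two corresponding paths through $v$. In $G/F'$ the node $v$ has cost zero, and the distance convention excludes endpoint costs. So this walk from $s^i$ to $s^{i'}$ costs at most $2^{\ell-1}$ in $G/F'$.

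Terminals have zero cost, and $s^{i'}$ is already connected to $r$ by the paths bought in earlier iterations. The walk can therefore be completed into an $s^i$–$r$ connection whose additional cost over the already-bought $F$ is at most $2^{\ell-1}$. This step needs some care: I am reading the algorithm's $P_i$ as the shortest connection of $s^i$ to the current component of $F \cup F'$ containing $r$ (greedy semantics). Under that reading, $c(P_i) \leq 2^{\ell-1} < 2^{\ell} \leq c(P_i)$, which is a contradiction.

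The main point to verify is this last step: that $P_i$ is measured as the greedy connection cost, so that reaching the earlier terminal $s^{i'}$ suffices, and that the boundary and endpoint cost conventions give a strict inequality. Everything else is the standard randomized-rounding hitting bound combined with monotonicity in $\ell$.
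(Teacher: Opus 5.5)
Your proof is correct and follows the paper's argument: fix the minimal deterministic level $\ell^*$ for each $s^i$, show $F'$ hits $\Gamma(s^i,\ell^*)$ with probability at least $1-1/k^2$, deduce that a hit forces $c(P_i)\le 2^{\ell^*-1}$ so the level of $s^i$ is below $\ell^*$, and union bound over the $k$ terminals. Your explicit greedy reading of $P_i$ as the shortest connection to the current tree in $G/(F\cup F')$ is what the paper's proof implicitly relies on. The nesting step is harmless but not needed, since minimality of $\ell^*$ alone already gives the conclusion.
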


\begin{proof}
Fix \(i \in [k]\).
Let \(\ell^* \geq 1\) be the minimum \(\ell\) such that $\sum_{v \in \Gamma(s^i,\ell)} x_v \geq 1/2$. If such an $\ell$ does not exist then the desired statement holds immediately. 
Then
\begin{align*}
  \Pr[\Gamma(s^i, \ell^*) \cap F' \neq \emptyset] = 1 - \prod_{v \in \Gamma(s^i, \ell^*)} (1 - \min(1, x_v \cdot 4 \ln k)) \geq 1 - \exp(-2 \ln k) = 1 - 1/k^2.
\end{align*}

Observe that if the sampled set $F'$ contains a node $v$ in  \(\Gamma(s^i, \ell^*)\), then by definition for some $i < j$ we have $d(s^i, s^j) \leq d(s^i, v) + d(s^j, v) \leq 2^{\ell^* - 1}$, which means \(c(P_i) \leq 2^{\ell^*-1}\). 
By our choice of \(\ell^*\) and the union bound over all $k$ choices of $s^i$, the claim holds.
\end{proof}

To finish the proof of the theorem, we argue why low LP-mass on all shortcut sets implies that the greedy algorithm's cost is controlled.

\begin{proof}[Proof of \cref{thm:offline}]
Let $\cE$ denote the event that there exists a terminal $s^i$ such that $x(\Gamma(s^i,\ell)) \geq 1/2$.
By claim~\ref{claim:no_gamma_mass}, $\cE$ holds with probability at most \(1/k\).
Conditioned on $\cE$, the algorithm pays at most \(k \cdot c^\intercal x\) for the shortest paths of all terminals, since $d(s^j) \leq c^\intercal x$ for each terminal $s^j$.

Now we condition on $\neg \cE$, and consider an arbitrary terminal \(s^i\) with $x(\Gamma(s^i, \ell_i)) < 1/2$.
For such terminals, we call \(\ell_i \coloneqq \lfloor \log_2(c(P_i)) \rfloor\) the \emph{level} of terminal \(s^i\).
Let 
\[
  D(s^i) = B(s^i, 2^{\ell_i - 2}) \setminus \Gamma(s^i,\ell_i).\]
By the definition of $\Gamma(s^i,\ell_i)$, the sets $D(s^i)$ and $D(s^j)$ are disjoint for any $i \neq j$ such that $\ell_i = \ell_j$, and hence \[\sum_{i=1}^k \sum_{v \in D(s^i)} c_v x_v  = \sum_{\ell =1}^{\lfloor \log_2 c^\intercal x\rfloor} \sum_{\substack{i:\\ \ell_i = \ell}} \sum_{\substack{v \in D(s^i)}} c_v x_v \leq \sum_{\ell =1}^{\lfloor \log_2 c^\intercal x\rfloor} c^{\intercal}x \leq \log_2(k) \cdot c^{\intercal} x,\]
in other words, the LP cost paid within the $\{D(s^i)\}_i$ regions is a good proxy for the total LP cost.
To complete our analysis, we show that we can \emph{charge} the cost of the path $P_i$ to the set $D(s^i)$, so it remains to relate $c(P_i)$ to $\sum_{v \in D(s^i)} c_v x_v$. Define
\begin{equation*}
  \mathcal{P}'_{s^i} = \{P \in \mathcal{P}_{s^i} : P \cap \Gamma(s^i,\ell_i) = \emptyset\} 
\end{equation*}
to be the set of paths connecting terminal $s^i$ to the root $r$ avoiding Steiner nodes in $\Gamma(s^i,\ell_i)$. 
Since $x$ is a feasible solution of the LP, each vertex $v$ carries at most $x_v$ amount of flow. 
By assumption $x(\Gamma(s^i, \ell_i)) < 1/2$, and therefore at least half of the \(s^i \rightarrow r\) flow is distributed on paths in $\mathcal{P}'_{(r, s^i)}$. 
Formally, 
\[\sum_{P \in \mathcal{P}'_{s^i}} f_P = \sum_{P \in \mathcal{P}_{s^i}} f_P - \sum_{P \not \in \mathcal{P}'_{s^i}} f_P \geq 1/2,\]
and hence
\begin{equation*}
  \sum_{v \in D(s^i)} c_v \cdot x_v \geq
  \sum_{P \in \mathcal{P}'_{(r,s^i)}} f_P \cdot c(P \cap D(s^i)) \geq
  \sum_{P \in \mathcal{P}'_{(r, s^i)}} f_P \cdot 2^{\ell_i-2} \geq
  c(P_i)/16 \text{.}
\end{equation*}
Putting everything together:
\begin{align*}
    \expect{\calg} &= \expect{c(F')} + \expect{c(F)}  = 4 c^{\intercal} x + 
    \expect{c(F) | \cE} \cdot \prob{\cE} + \expect{c(F) | \neg \cE} \cdot \prob{\neg \cE} \\
    &\leq 5 c^{\intercal} x +\expect*{\sum_{i = 1}^k c(P_i) | \neg \cE} \leq 16 \cdot \sum_{i = 1}^k \sum_{v \in D(s^i)} c_v \cdot x_v \leq 16 \cdot \log_2(k) \cdot c^{\intercal} x.\qedhere
\end{align*}
\end{proof}
    \section{Random Order Node-Weighted Steiner Tree}

\label{sec:st_tree}

With intuition from the offline rounding algorithm in hand, we proceed to the online random order setting. We begin with \nwst  before moving to the more general \nwsf case, since it already introduces a number of new ideas. Much of the difficulty lies in the fact that we can no longer cleanly separate rounding and greedy phases; nevertheless, we now show how to generalize the algorithm.

\subsection{The Algorithm}

Before we present the formal details, here are the main pieces and definitions.
\begin{enumerate}
    \item We maintain a fractional vector $x^t$ which is a (not necessarily feasible) fractional solution to the LP relaxation, and an integral solution $F^t$ which is the set of nodes we have bought up to round $t$.
    \item Every round $t$ in which an unconnected terminal $s$ arrives, we 
    \begin{enumerate}
        \item buy a shortest path $P$ connecting $s$ to $r$ of length $d^t(s) := d_{G/F^t}(r,s)$;
        \item sample every vertex $v$ with probability proportional to $x^t_v$;
        \item increase the value of $x^t_v$ for vertices $v$ in the set 
        $$\Gamma^t(s) \coloneqq \{v \in V : 
            d_{G / (F^{t-1} \cup \{v\})}(r, s) \leq d^t(s) / 2
        \}. $$
         The set $\Gamma^t(s)$ consists of vertices whose purchase would reduce $d^t(s)$ by a factor of 2.\footnote{The new definition $\Gamma^t(s)$ resembles that of $\ell$-shortcut nodes from \cref{sec:offline}, but is not identical since we can no longer cleanly separate the algorithm into rounding and greedy phases.}
    \end{enumerate}
\end{enumerate}
The formal version appears as \cref{alg:nwst}. We assume without loss of generality that $d^t(s) > 0$. Also, define $\Lambda^t$ to be the event that $X^t(s) \coloneqq \sum_{v \in \Gamma^t(s)} x_v^{t-1} < 1$. Thus $\neg \Lambda^t$ is the case when our algorithm samples the distribution but does not modify it.

\begin{algorithm}[ht]
\caption{\textsc{LearnOrCoverNWST}}
\label{alg:nwst}
\begin{algorithmic}[1]
    \State Receive root $r$.
    \State Initialize $F^0 = \{v \mid c_v < \beta/n\} \cup r$.
    \State Set $V \leftarrow \{v \mid \beta/n \leq c_v \leq \beta\}$, and $n' \leftarrow \abs{V}$.
    \State Set $x_v^0 \leftarrow \frac{\beta}{c_v n'} $ for $v \in V$; 0 otherwise.
    \For{each terminal $s$ in random order}
        \State Set $F^t \leftarrow F^{t-1}$, $R^t \leftarrow \emptyset$.
        \State Set $d^t(s) \leftarrow d_{G / F^{t-1}}(r, s)$.
        \State Set $P$ to be the shortest path in $G / F^{t-1}$ connecting $s$ to $r$.
        \State Add $F^t \leftarrow F^t \cup P$.
        \For{$v \in V$} 
            \State Add $R^t \leftarrow R^t \cup \{v\}$ w.p. $\min \left( d^t(s) \cdot x_v^{t-1} / \beta, 1  \right)$.
            \State Add $F^t \leftarrow F^t \cup R^t$.
        \EndFor
        \If {$\sum_{v \in \Gamma^t(s)} x_v^{t-1} \geq 1$} 
            \State $x^t \leftarrow x^{t-1}$, \Continue \Comment{$\Lambda^t$ does not hold}
        \EndIf
        \For{$v \in \Gamma^t(s)$} 
            \State $x_v^t \leftarrow x_v^{t-1} \cdot \exp\{d^t(s)/c_v\}$.
        \EndFor
        \For{$v \not \in \Gamma^t(s)$} 
            \State $x_v^t \leftarrow x_v^{t-1}$.
        \EndFor
        \State Set $Z^t \leftarrow \sum_{v \in V} c_v \cdot x_v^t / \beta$ and $x^t \leftarrow x^t / Z^t$ .\label{eq:normalize tree}
    \EndFor
    \State \Return $F^{k}$.
\end{algorithmic}
\end{algorithm}

We observe immediately that \cref{eq:normalize tree} maintains the following invariant: 
\begin{invariant}\label{invariant:inner product beta}
For all time steps $t$, it holds that $\sum_{v \in V} c_v \cdot x_v^t = \beta$.
\end{invariant}

\subsection{The Analysis}

We will use a potential function to bound our accumulated cost, but first we need a few more definitions. Upon arrival at time $t$, we assign incoming terminal $s^t$ the level $\ell^t_s \coloneqq \lfloor \log_2 (d^t(s))\rfloor$. 
Note that by \cref{assmpt:beta-n-bound}, $d^t(s) \in [\beta/n, \beta]$. Hence, there are at most $\log_2(n) + 1$ possible levels, ranging from $\lfloor\log_2(\nicefrac{\beta}{n})\rfloor$ to $\lfloor\log_2(\beta)\rfloor$.
Next, let $B^t \coloneqq B_{G / F^{t-1}}\left(s, 2^{\ell^t_s-2}\right)$ be the set of vertices within distance $2^{\ell^t_s - 2}$ to $s^t$. Finally, we say $s^t$ \emph{claims} the set
\begin{align}
    \label{eq:balldef tree}
    D^t(s^t) &\coloneqq B_t \setminus D_{< t}, 
    \intertext{where}
    D_{< t} &\coloneqq \bigcup_{\substack{t' < t: \\ \ell^{t'} = \ell^t_s}} D^{t'}. \nonumber
\end{align}
 Said different, every incoming terminal claims all vertices within radius $2^{\ell_s^t-2}$, except those that have already been claimed by vertices at the same level. By definition, terminals on the same level claim non-intersecting sets. By \cref{assmpt:beta-n-bound} this means
 \begin{align}
     \sum_{t=1}^T \sum_{v \in D^t(s^t)} c_v x^*_v   \leq \log_2(n) \cdot \lpopt, \label{eq:disjoint_charging}
 \end{align}
 so, analogously to \cref{sec:offline}, the LP cost within these sets is a good proxy for $\lpopt$. We would ideally like to run the greedy algorithm for the edge weighted case, and charge the connection cost of each $s^t$ to the LP cost on the vertices it claims, $D^t(s^t)$. Unfortunately, unlike \cref{sec:offline}, we cannot assume shortcut nodes have already been sampled away ahead of time. Instead, we will design other terms in the potential that give large drops in time steps where $D^t(s^t)$ is not enough to pay for the connection of $s^t$. 

Our potential function is comprised of two pieces. The \emph{learning potential} is:
\begin{equation*}
    \Phi_{L}(t) = \wKL{x^{*}}{x^t} + 8 \cdot \sum_{\ell=\lfloor\log_2(\nicefrac{\beta}{n})\rfloor}^{\lfloor\log_2(\beta)\rfloor} \left( \lpopt - \sum_{\substack{i \in [t] : \\ \ \ell^i = \ell}} \ \sum_{v \in D^i(s^i)} c_v \cdot x_v^* \right) \text{.}
\end{equation*}
The \emph{covering potential:}
\begin{equation*}
    \Phi_{C}(t) = \beta \cdot \log \left( \frac{\rho^t}{\beta} + \frac{1}{n} \right),
\end{equation*}
where $\rho^t = \sum_{s \in U^t} d^{t+1}(s)$ is the total cost to connect all future terminals via shortest paths assuming we have bought the current solution $F^t$, and $U^t$ is the set of future terminals arriving after time $t$.
Our final combined potential is:
\begin{equation}\label{eq:potentialdef tree}
    \Phi(t) = C_1 \cdot \Phi_{L}(t) + C_2 \cdot \Phi_{C}(t),
\end{equation}
where $C_1$ and $C_2$ are constants to be specified later. We are finally ready to begin our analysis.

\begin{restatable}{lemma}{potentialbounds}
    \label{lemma:potential_bounds}
    The initial potential is bounded as $\Phi(0) = O(\beta \cdot \log n)$, and $\Phi(t) \geq -O( \beta \cdot \log n)$ for all $t$.
\end{restatable}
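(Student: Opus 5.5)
We bound the three pieces of $\Phi = C_1\Phi_L + C_2\Phi_C$ separately at time $0$ and at general $t$; since $C_1,C_2$ are constants, it suffices to show each piece is $O(\beta\log n)$ at time $0$ and $-O(\beta\log n)$ at every $t$.

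\emph{KL term.} Since $\wKL{x^*}{x^t}\ge 0$ always, only the upper bound at $t=0$ is needed. There $x^0_v = \beta/(c_v n')$ on the restricted vertex set $V$. By \cref{fact:support}, $x^*$ is supported on $V$, so we may restrict the sum to $V$. We write
\[
\wKL{x^*}{x^0} = \sum_{v} c_v x^*_v \log\frac{x^*_v}{x^0_v} - \sum_v c_v x^*_v + \sum_v c_v x^0_v .
\]
By \cref{invariant:inner product beta} the last sum equals $\beta$, and the middle term is nonpositive. For the first term we use $x^*_v\le 1$, which holds without loss of generality because an optimal solution never needs a value above $1$. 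Then
\[
\frac{x^*_v}{x^0_v} \le \frac{c_v n'}{\beta} \le n' \le n, \qquad\text{since } c_v\le\beta .
\]
Hence the first term is at most $\sum_v c_v x^*_v \log n = \lpopt\log n \le \beta\log n$, and the KL term is $O(\beta\log n)$ at time $0$.

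\emph{Level-sum term.} Each summand is $\lpopt$ minus the $x^*$-mass claimed so far at that level. At $t=0$ nothing has been claimed, so the term equals $8\lpopt$ times the number of levels, which is at most $8\beta(\log_2 n+2)=O(\beta\log n)$. For the lower bound, sets claimed at the same level are disjoint by construction. So at each level the subtracted quantity is at most $\sum_v c_v x^*_v=\lpopt$, every summand is nonnegative, and the term is at least $0$.

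\emph{Covering term.} Since $\rho^t\ge 0$, we get $\Phi_C(t)\ge \beta\log(1/n) = -\beta\log n$. For the upper bound at $t=0$, $\rho^0$ is a sum of at most $k$ shortest-path costs. By \cref{assmpt:beta-n-bound} each is at most $\beta$, so $\rho^0\le k\beta$ and $\Phi_C(0)\le \beta\log(k+1)=O(\beta\log n)$, using $k\le n^2$ for pairs. Here I would note that the assumption is really applied with $d_{G/F^0}$, which is at most $d_G$, so the bound still holds.

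\emph{Main obstacle.} The only delicate step is the KL upper bound. It needs $x^*$ supported on $V$, which is exactly what \cref{fact:support} and the preemptive purchase give, and it needs $x^*_v\le 1$. If the latter is not accepted as WLOG, the fallback is $c_v x^*_v\le \lpopt\le\beta$, which gives $x^*_v/x^0_v\le n'\beta/\beta\cdot(\text{const})$ and therefore still an $O(\log n)$ bound on the logarithm.

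Summing the three pieces with the constant weights $C_1,C_2$ gives $\Phi(0)=O(\beta\log n)$ and $\Phi(t)\ge -C_2\beta\log n = -O(\beta\log n)$ for all $t$.
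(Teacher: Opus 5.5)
For the potential of \cref{sec:st_tree}, your argument is complete and is essentially the paper's. There $\Phi_L$ has only the KL and level-sum terms and $\rho^t=\sum_{s\in U^t}d^{t+1}(s)$. Your fallback $c_vx^*_v\le\lpopt\le\beta$ is exactly how the paper bounds the logarithm. Your primary route via $x^*_v\le 1$ is also valid, since every $v\in V$ has $c_v>0$, so an optimal solution never exceeds $1$ there.

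The gap is that the paper proves this lemma for the forest potential of \cref{sec:st_forest} and reuses it there verbatim. Your remark about $k\le n^2$ for pairs suggests you intend that setting too. In that potential $\Phi_L$ contains a further term
\[
\sum_{i\in[t]:\,\Psi^i}2^{\ell^i-2}\;-\sum_{i\in[t]:\,\neg\Psi^i}2^{\ell^i-2},
\]
which your decomposition into ``three pieces'' omits.

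This term is $0$ at $t=0$, so your upper bound survives. It is not termwise nonnegative, however: each augmented-greedy round subtracts up to $\beta/4$, so without further argument it could be of order $-k\beta$. Your conclusion $\Phi(t)\ge -C_2\beta\log n$ therefore does not follow. Showing this term is nonnegative is the main content of the paper's proof, and it needs a structural property of the algorithm.

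Fix a level $\ell$ and let $N_\ell^t$ be the number of connected components of $F^t$ that contain a terminal of $T_\ell^t$.
\begin{itemize}
\item In a $\neg\Psi^t$ round at level $\ell$ we have $\kappa^t_{s_1}\le\kappa^t_{s_0}<d^t(s)/2$. Hence the terminals $w_0,w_1\in T^{t-1}_\ell$ reached by $P_0,P_1$ lie in different components of $F^{t-1}$. Otherwise $d^t(s)\le\kappa^t_{s_0}+\kappa^t_{s_1}<d^t(s)$, because bought nodes and terminals cost $0$ in $G/F^{t-1}$. The paths $P_0\cup P\cup P_1$ then merge these two components, so $N_\ell$ drops by at least one.
\item A $\Psi^t$ round at level $\ell$ raises $N_\ell$ by at most one, since $P$ joins the new pair.
\item Rounds at other levels cannot raise $N_\ell$, because $T_\ell$ is unchanged and $F$ only grows.
\end{itemize}
Since $N_\ell^0=0$ and $N_\ell^t\ge 0$, every prefix has at least as many $\Psi$ rounds as $\neg\Psi$ rounds at each level. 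This gives nonnegativity of the extra term.

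One smaller point: in the forest setting $\rho^t$ is built from $\kappa^{t+1}_s\le 12\,d^{t+1}(s)$ rather than $d^{t+1}(s)$. This changes your covering upper bound only by a constant.
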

It remains to show that the potential function $\Phi$ decreases in every round sufficiently to cover the spending of \alg, and we do so by bounding the changes in learning and covering potential separately.

 The main idea needed to bound the drop in the learning potential is that $D^t(s^t)$ does not contain enough LP cost to pay for the algorithm's action precisely in steps $t$ for which the optimal solution $x^*$ places large mass on nodes that lie in $\Gamma^t(s^t)$. The formal statement of this sentence is in \cref{lemma:alternate}, and in this case the algorithm makes progress by increasing LP mass on those nodes. Making precise this intuition, we show the bound:
\begin{restatable}{lemma}{changelearn}
    \label{lemma:change in learn}
    For all rounds \(t\), the expected change in $\Phi_L$ is 
    \begin{align}
        &\expectover{s^t, R^t}*{\Phi_L(t) - \Phi_L(t-1) \,\middle|\, x^{t-1}, F^{t-1}, U^{t-1}} \nonumber \\
        &\leq \expectover{s \sim U^{t-1}}*{\frac{e^{4} - 1}{4} \cdot d^t(s) \cdot \min \left( X^t(s), 1 \right) - \frac{d^t(s)}{8} \,\middle|\, x^{t-1}, F^{t-1}} \text{.} \label{eq:change in learn not preconnected}
    \end{align}
\end{restatable}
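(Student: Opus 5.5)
The plan is to fix the arriving terminal $s = s^t$ and bound the change of $\Phi_L$ pointwise as a function of $s$. Then I take $\expectover{s \sim U^{t-1}}$ at the end. The randomness of $R^t$ is irrelevant here: $x^t$, $\Gamma^t(s)$, $d^t(s)$ and $D^t(s)$ are all determined by $x^{t-1}$, $F^{t-1}$ and $s$. I split $\Phi_L(t)-\Phi_L(t-1)$ into two pieces. The first is the KL change $\wKL{x^*}{x^t}-\wKL{x^*}{x^{t-1}}$. The second is the charging term, which changes by exactly $-8\sum_{v\in D^t(s)} c_v x^*_v$, since only level $\ell^t_s$ gains a new claimed set.

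\emph{Case $\neg\Lambda^t$.} Here $x^t=x^{t-1}$, so the KL piece is $0$ and the charging piece is $\le 0$. Since $X^t(s)\ge 1$ we have $\min(X^t(s),1)=1$, so the right-hand side equals $d^t(s)\big(\tfrac{e^4-1}{4}-\tfrac18\big)\ge 0$. The bound therefore holds trivially.

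\emph{Case $\Lambda^t$.} Write $\eta_v = d^t(s)/c_v$ for $v\in\Gamma^t(s)$. Expanding the weighted KL, the $-x_i+y_i$ terms contribute $\sum_v c_v(x^t_v-x^{t-1}_v)=0$ by \cref{invariant:inner product beta}. What remains is
\[
\sum_v c_v x^*_v\log\frac{x^{t-1}_v}{x^t_v} \;=\; -\,d^t(s)\, x^*(\Gamma^t(s)) + \beta\log Z^t .
\]
Here $Z^t = 1+\frac1\beta\sum_{v\in\Gamma^t(s)} c_v x^{t-1}_v(e^{\eta_v}-1)$, using \cref{invariant:inner product beta} for $x^{t-1}$.

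To control $e^{\eta_v}-1$, I first show $\eta_v$ is bounded by a constant. If buying $v$ drops the distance from $d^t(s)$ to at most $d^t(s)/2$, then the new path passes through $v$. Its cost in $G/F^{t-1}$ is at least $d^t(s)$, so $c_v\ge d^t(s)/2$, i.e.\ $\eta_v\le 2\le 4$. Convexity then gives $e^{\eta}-1\le \frac{e^4-1}{4}\eta$ on $[0,4]$. Combined with $\log Z\le Z-1$, this yields
\[
\beta\log Z^t\le \tfrac{e^4-1}{4}\, d^t(s)\, X^t(s) = \tfrac{e^4-1}{4}\, d^t(s)\min(X^t(s),1).
\]
The last equality uses $X^t(s)<1$ under $\Lambda^t$. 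Hence the total change in this case is at most
\[
\tfrac{e^4-1}{4} d^t(s)\min(X^t(s),1) - d^t(s)\,x^*(\Gamma^t(s)) - 8\sum_{v\in D^t(s)}c_v x^*_v .
\]

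It remains to show $d^t(s)\,x^*(\Gamma^t(s)) + 8\,c(x^*|_{D^t(s)}) \ge d^t(s)/8$, at least in expectation over $s$. This is the content of \cref{lemma:alternate}, which I will invoke. If $x^*(\Gamma^t(s))\ge 1/8$ we are done. Otherwise at least $7/8$ of the $x^*$-flow from $s$ to $r$ uses paths avoiding $\Gamma^t(s)$. Each such path must leave $B^t$, paying $\Omega(2^{\ell^t_s})=\Omega(d^t(s))$ inside $B^t$, exactly as in the offline proof of \cref{thm:offline}.

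The main obstacle is the removal of $D_{<t}$: a vertex $w\in B^t$ may already have been claimed by an earlier same-level terminal $s'$. In that case $s$ and $s'$ are within $2^{\ell-1}$ in $G/F^{t'-1}$, and $s'$ is already connected in $F^{t-1}$. This should force $d^t(s)$ below $2^{\ell}$ unless the overlap passes through a node that, once bought, halves the distance. Such a node lies in $\Gamma^t(s)$, and flow through it is already accounted for by the $x^*(\Gamma^t(s))$ term. I would first try to prove this pointwise by a careful triangle-inequality argument with the boundary/interior notions. If costs of the overlap node break it, I would fall back to the random-order averaging that the expectation over $s\sim U^{t-1}$ permits.
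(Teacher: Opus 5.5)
\textbf{The gap is the covering claim, which is the heart of the lemma.} The $\neg\Lambda^t$ case is fine. So is the KL computation, with one small slip: the coefficient of $\log Z^t$ is $\sum_v c_v x^*_v=\lpopt$, so you get $\le \beta\log Z^t$ via $Z^t\ge 1$, not equality. Your bound $c_v\ge d^t(s)/2$ on $\Gamma^t(s)$ and the flow split also match the paper.

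Your flow argument only charges to $D^t$ if every vertex of $B^t$ outside $\Gamma^t(s)$ lies in $D^t$, i.e.\ $B^t\cap D_{<t}\subseteq \Gamma^t(s)$. You leave this as something that ``should'' hold, and the sketch you give for it is wrong. From $w\in B^t\cap D^{t'}$ you cannot conclude that $s$ and $s'$ are within $2^{\ell-1}$, for two reasons. The triangle inequality through $w$ picks up $c_w$. Also, $d(s,w)\le 2^{\ell-2}$ holds in $G/F^{t-1}$, not in $G/F^{t'-1}$. So nothing forces $d^t(s)<2^{\ell}$, and such overlaps really occur: the centre vertex in \cref{fig:broom-failure} is one.

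The missing observation is that the overlap vertex $w$ is \emph{itself} always in $\Gamma^t(s)$:
\begin{itemize}
\item Since $F^{t'-1}\subseteq F^{t-1}$, distances only shrink, so both $d_{G/F^{t-1}}(s',w)\le 2^{\ell-2}$ and $d_{G/F^{t-1}}(s,w)\le 2^{\ell-2}$.
\item Contracting $w$ removes exactly the cost $c_w$ that worried you, giving an $s$--$s'$ path of cost at most $2^{\ell-1}$.
\item $s'$ reaches $r$ at zero cost in $G/F^{t-1}$, because its path was bought at time $t'$.
\end{itemize}
Hence $d_{G/(F^{t-1}\cup\{w\})}(r,s)\le 2^{\ell-1}\le d^t(s)/2$, so $w\in\Gamma^t(s)$. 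This is deterministic and pointwise in $s$, so your fallback to random-order averaging is unnecessary. With the cover $\{D^t,\Gamma^t(s)\}$ of $B^t$ established, your threshold argument finishes the tree case. The paper instead applies \cref{lemma:alternate} to this cover, with the weights on $\Gamma^t(s)$ replaced by $d^t(s)/8$.

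\textbf{You also do not address the forest potential of \cref{sec:st_forest}, which is the setting the paper actually proves this lemma in.} Two things change there:
\begin{itemize}
\item You need the $\neg\Psi^t$ case. There $x$ is not updated and $D^t=\emptyset$, and the Berman--Coulston term drops by $2^{\ell-2}\ge d^t(s)/8$.
\item Under $\Psi^t$ that term rises by $2^{\ell-2}$. So you need $d^t(s)\,x^*(\Gamma^t(s))+8\sum_{v\in D^t}c_vx^*_v\ge 2^{\ell-1}$ rather than $d^t(s)/8$. Your $1/8$ threshold is then too weak; a threshold of $1/2$, or the cover lemma, works.
\end{itemize}
The cover claim there is proved the same way. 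In place of connectivity to $r$, it uses $\Psi^t$, i.e.\ $\kappa^t_{s_0}\ge d^t(s)/2\ge 2^{\ell-1}$, together with the radius $2^{\ell-3}$.
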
 

However, this potential drop can be insufficient, and this happens when the expectation of the positive term is $\Omega(\expectover{s}{d^t(s)})$. Thankfully, the covering potential decreases commensurately in this case, and this is formalized in the next lemma.
\begin{restatable}{lemma}{changecover}
    \label{lemma:change in cover}
    For all rounds $t$, the expected change in $\Phi_C$ is 
    \begin{equation}\label{eq:change in cover not preconnected}
        \expectover{s^t, R^t}*{\Phi_C(t) - \Phi_C(t-1) \,\middle|\, x^{t-1}, F^{t-1}, U^{t-1}} \leq - \frac{1 - e^{-1}}{912} \cdot \expectover{s \sim U^{t-1}}*{d^t(s) \cdot \min(X^t(s), 1)  \,\middle|\, x^{t-1}, F^{t-1}} \text{.}
    \end{equation}
\end{restatable}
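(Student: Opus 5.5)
We will prove \cref{lemma:change in cover} in the style of the covering-potential analysis of \loc. The argument rests on two observations: $\Phi_C$ is a logarithm of $\rho^t/\beta + 1/n$, and sampling $R^t$ buys shortcut nodes that halve the connection cost of many future terminals. We condition on $x^{t-1}, F^{t-1}, U^{t-1}$ throughout. Since $\log$ is concave, $\log(a) - \log(b) \leq (a-b)/b$. Hence
\[
\Phi_C(t) - \Phi_C(t-1) \leq \frac{\rho^t - \rho^{t-1}}{\rho^{t-1}/\beta + 1/n} \leq \beta \cdot \frac{\rho^t - \rho^{t-1}}{\rho^{t-1}}.
\]
Because $F^t \supseteq F^{t-1}$, connection costs only shrink, so $\rho^t - \rho^{t-1} \leq -\sum_{s' \in U^t} \bigl(d^{t}(s') - d^{t+1}(s')\bigr) - d^t(s^t)$. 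It therefore suffices to lower bound the expected total decrease $\sum_{s' \in U^{t-1}} (d^t(s') - d^{t+1}(s'))$, where the dropped term for $s^t$ itself is nonpositive. We only count the decrease caused by $R^t$ and ignore the shortest path $P$.

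\textbf{Key step.} Fix a future terminal $s'$. If $R^t \cap \Gamma^t(s') \neq \emptyset$, then by the definition of $\Gamma^t(s')$ we get $d^{t+1}(s') \leq d^t(s')/2$, a decrease of at least $d^t(s')/2$. Each $v$ is sampled independently with probability $\min(d^t(s^t) x_v^{t-1}/\beta, 1)$. Using $1 - \prod_v (1-p_v) \geq 1 - e^{-\sum_v p_v}$ and $1 - e^{-z} \geq (1-e^{-1}) \min(z,1)$, this event has probability at least $(1-e^{-1}) \min\bigl(d^t(s^t) X^t(s')/\beta, 1\bigr)$. Since $d^t(s^t) \leq \beta$ by \cref{assmpt:beta-n-bound}, this is at least $(1-e^{-1}) \frac{d^t(s^t)}{\beta} \min(X^t(s'), 1)$. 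Summing over $s'$ gives
\[
\expect{\text{decrease} \mid s^t} \geq \frac{1-e^{-1}}{2} \cdot \frac{d^t(s^t)}{\beta} \sum_{s' \in U^{t-1}} d^t(s') \min(X^t(s'),1).
\]

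Next we take the expectation over the uniformly random $s^t \in U^{t-1}$. This is where random order enters: $\expect{d^t(s^t)} = \rho^{t-1}/|U^{t-1}|$. Also, $\sum_{s'} d^t(s') \min(X^t(s'),1) = |U^{t-1}| \cdot \expectover{s \sim U^{t-1}}{d^t(s) \min(X^t(s),1)}$. Multiplying, the factors $|U^{t-1}|$ and $\rho^{t-1}$ cancel against the $\beta/\rho^{t-1}$ prefactor. This leaves an expected drop in $\Phi_C$ of order $(1-e^{-1}) \cdot \expectover{s}{d^t(s)\min(X^t(s),1)}$, which is the claimed bound.

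\textbf{Main obstacle.} The expected obstacle is decoupling $s^t$ from $s'$. The sampled $R^t$ depends on $s^t$ through $d^t(s^t)$, and $s'$ ranges over $U^{t-1}$, which includes $s^t$ itself and whose membership in $U^t$ depends on $s^t$. Excluding $s' = s^t$ loses at most one term; we handle it by noting that $s^t$ pays its full $d^t(s^t)$, which is covered by the $-d^t(s^t)$ term in $\rho^t - \rho^{t-1}$. A second issue is keeping the $1/n$ in the denominator harmless, which holds since $\rho^{t-1} \geq \beta/n$ whenever some terminal remains unconnected. The loose constant $912$ leaves room for such slack, including possible $\log$-linearization losses if the drop is large. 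To control those, we would first try the bound $\log(1 - y) \leq -y$ applied to $y = (\rho^{t-1} - \rho^t)/(\rho^{t-1} + \beta/n)$, together with $\rho^{t-1} + \beta/n \leq 2\rho^{t-1}$.
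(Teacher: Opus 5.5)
Your proposal is correct and follows the paper's proof: linearize with $\log(1-y)\le -y$, lower bound $\rho^{t-1}-\rho^t$ by the halving of every remaining terminal whose $\Gamma^t(\cdot)$ is hit by $R^t$, bound the hitting probability by $(1-e^{-1})\min(\cdot,1)$, pull out $d^t(s^t)/\beta\le 1$, and average over the uniformly random arrival so that its weight has mean $\rho^{t-1}/|U^{t-1}|$ and cancels the denominator. The paper runs this for the forest potential with $\kappa^t_u$ in place of $d^t(u)$, where a hit shrinks $\kappa_u$ only by a constant fraction and $\kappa$ must be swapped for $d$ (the source of the factors $38$ and $12$, hence $912$), whereas your tree version gives $(1-e^{-1})/4$; note also that your displayed step replacing $\rho^{t-1}/\beta+1/n$ by $\rho^{t-1}/\beta$ goes the wrong way since $\rho^t-\rho^{t-1}\le 0$, and should be replaced by keeping the denominator and using $\rho^{t-1}+\beta/n\le 2\rho^{t-1}$, as you suggest at the end and as the paper does.
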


The proofs for \cref{lemma:potential_bounds,lemma:change in learn,lemma:change in cover} are technical and appear in \cref{sec:deferred} (they are stated for the more general forest case since we will reuse them). Given the bounds above on the different pieces of our potential, we are now ready to prove our main theorem.

\begin{proof}
Let $c(\alg(t))$ denote the cost paid by \cref{alg:nwst} up to and including round $t$.
In every round $t$, the expected cost of the sampled nodes $R^t$ is at most $d^t(s) \cdot \sum_v c_v \cdot x_v^{t-1} / \beta = d^t(s)$ (by \cref{invariant:inner product beta}).
The algorithm pays at most an additional $d^t(s)$ by adding the cheapest path, and hence the total expected cost per round is at most $2 \cdot d^t(s)$.

By combining \cref{lemma:change in learn,lemma:change in cover}, and setting the constants $C_1 = 16$ and $C_2 = 3648e(1+e+e^2+e^3)$, we have 
\begin{align*}
    &\expectover{s^t, R^t}*{\Phi(t) - \Phi(t-1) \,\middle|\, s^1,\ldots,s^{t-1}, R^1,\ldots,R^{t-1}} \nonumber \\
    &\leq - \expectover{s^t, R^t}*{2 \cdot d^t(s) \,\middle|\, s^1,\ldots,s^{t-1}, R^1,\ldots,R^{t-1}},
\end{align*}
which cancels the expected change in $c(\alg(t))$ in each round.
We therefore have the inequality 
\begin{equation}
    \expectover{s^t, R^t}*{\Phi(t) - \Phi(t-1) + c(\alg(t)) - c(\alg(t-1)) \,\middle|\, s^1,\ldots,s^{t-1}, R^1,\ldots,R^{t-1}} \leq 0 \label{eq:cancel cost tree} \text{.}
\end{equation}
By repeatedly applying \eqref{eq:cancel cost tree} for all $t$, we obtain 
\begin{align}
    \expectover{s^t, R^t}*{\Phi(k) - \Phi(0) + c(\alg(k)) - c(\alg(0))} &\leq 0 \nonumber \\
    \expectover{s^t, R^t}*{c(\alg(k))} &\leq \Phi(0) + c(\alg(0)) - \expectover{s^t, R^t}*{\Phi(k)} \nonumber \\
    &\leq O(\beta \cdot \log n) + 0 - (- O(\beta \cdot \log n)) \label{eq:use potential bounds tree},
\end{align}
where \eqref{eq:use potential bounds tree} follows from the fact that $c(\alg(0)) = 0$, together with the bounds on $\Phi$ established in \cref{lemma:potential_bounds}. We conclude that $\expectover{s, R}*{c(\alg(k))} = O(\beta \cdot \log n)$.
\end{proof}

    \section{Random Order Node-Weighted Steiner Forest}

\label{sec:st_forest}

We are ready to describe our fully general algorithm for \nwsf. The primary change is that we replace the naive greedy step with an \emph{augmented greedy} step, as in the work of \cite{DBLP:conf/stoc/BermanC97} for online \ewsf. This complicates the definition of shortcut nodes (because the distance of a terminal to its mate might be at a different scale from its distance to the nearest non-mate terminal), and subsequently to the accounting scheme, but we show how to resolve the issues.

\subsection{The Algorithm}

Once again, we start with a high level description, and note the changes to $\Gamma$ and the addition of $\kappa$ and $T^t_\ell$.
\begin{enumerate}
    \item We maintain a fractional vector $x^t$ as a (not necessarily feasible) solution to the LP relaxation of Steiner Forest, an integral solution $F^t$, and for each particular level $\ell$ (defined below). We also track the set $T^t_\ell$ of terminals at level $\ell$ that arrived before time $t$.  
    \item Every round $t$ when a terminal pair $s^t = (s^t_0, s^t_1)$ arrives (the superscript $t$ is omitted when it is clear from context). 
    \begin{enumerate}
        \item Buy a shortest path, with length $d^t(s) \coloneqq d_{G / F^{t-1}}(s_0, s_1)$, connecting $s_0$ to $s_1$. The \emph{level} of $s$ is defined as $\ell^t_s \coloneqq \lfloor \log_2(d^t(s)) \rfloor$. We then add $s_0$ and $s_1$ to $T_{\ell}^t$.
        \item Sample every vertex $v$ with probability proportional to $x_v$,
        \item Let $\kappa^t_{s_i} \coloneqq d_{G / F^{t-1}}(s_i, T_{\ell}^{t} \setminus \{s_{i}\})$ be the distance of $s_i$ to any other terminal at the same level (including its mate).
        Assuming without loss of generality that $\kappa_{s_0}^t \geq \kappa_{s_1}^t$.
        \begin{itemize}
            \item If $\kappa^t_{s_0} < d^t(s) / 2$, we perform an augmented greedy step, where we connect both terminals to their closest terminals on this level. Note that we do not modify $x^t$ in this case. 
            \item If $\kappa^t_{s_0} \geq d^t(s) / 2$, we increase the value of $x^t_v$ for $v$ in the set
            \[
                \Gamma^t(s) \coloneqq \{
                    v \in V : 
                    d_{G / (F^{t-1} \cup \{v\})}(s_i, T_{\ell}^{t-1}) \leq \kappa^t_{s_i}/2
                \}.
            \]
            We remark that $\Gamma^t(s)$ is only defined when $\kappa^t_{s_0} \geq d^t(s) / 2$. 
        \end{itemize}
    \end{enumerate}
\end{enumerate}
The definition of $d^t(s)$ is the natural generalization of the tree case. However, the new definition of $\Gamma^t(s)$ now restricts attention to terminals on the same level as $s$.

The complete description is in \cref{alg:nwsf}. Like in \cref{sec:st_tree}, let $\Lambda^t$ be the event $X^t(s) \coloneqq \sum_{v \in \Gamma^t(s)} x_v^{t-1} < 1$, so $\neg \Lambda^t$ is the case when our algorithm samples the distribution but does not modify it. We define another event to analyze the augmented greedy step. Let $\Psi^t$ be the event $\kappa^t_{s_0} \geq d^t(s) / 2$, so $\neg \Psi^t$ is the case when our algorithm does the augmented greedy step.

\begin{algorithm}[ht]
\caption{\textsc{LearnOrCoverNWSF}}
\label{alg:nwsf}
\begin{algorithmic}[1]
    \State Initialize $F^0 = \{v \mid c_v < \beta/n\}$.
    \State Initialize $T_{\ell}^0 \leftarrow \emptyset$ \ $\forall \ell \in \mathbb{Z}$.
    \State Set $V \leftarrow \{v \mid \beta/n \leq c_v \leq \beta\}$, and $n' \leftarrow \abs{V}$ .
    \State $x_v^0 \leftarrow \frac{\beta}{c_v n'} \ \forall v \in V,$ 0 otherwise.
    \For{each terminal $s$ in random order}
        \State Set $F^t \leftarrow F^{t-1}$, $R^t \leftarrow \emptyset$, $T_{\ell}^t \leftarrow T_{\ell}^{t-1}$ $\forall \ell \in \mathbb{Z}$.
        \State Set $P$ to be the shortest path in $G / F^{t-1}$ connecting $s_0$ to $s_1$.
        \State Add $F^t \leftarrow F^t \cup P$.
        \State Add $T_{\ell}^t \leftarrow T_{\ell}^t \cup \{s_0,s_1\}$.
        \For{$v \in V$} 
            \State Add $R^t \leftarrow R^t \cup \{v\}$ w.p. $\min \left( d^{t}(s^t) \cdot x_v^{t-1} / \beta, 1  \right)$.
            \State Add $F^t \leftarrow F^t \cup R^t$.
        \EndFor
        \If {$\kappa^t_{s_0} < d^{t}(s) / 2$}~\label{alg:AG} \Comment{$\Psi^t$ does not hold}
            \State Set $P_0$ to be the shortest path in $G / F^{t-1}$ connecting $s_0$ to a terminal in $T_{\ell}^{t-1} \cup \{s_1\}$.
            \State Set $P_1$ to be the shortest path in $G / F^{t-1}$ connecting $s_1$ to a terminal in $T_{\ell}^{t-1} \cup \{s_0\}$.
            \State Add $F^t \leftarrow F^t \cup P_0 \cup P_1$. \label{eq:augmented greedy}
            \State $x^t \leftarrow x^{t-1}$, \Continue 
        \EndIf 
        \If {$\sum_{v \in \Gamma^t(s)} x_v^{t-1}  \geq 1$} \label{line: preconnected} \Comment{$\Lambda^t$ does not hold}
            \State $x^t \leftarrow x^{t-1}$, \Continue 
        \EndIf
        \For{$v \in \Gamma^t(s)$} 
            \State $x_v^t \leftarrow x_v^{t-1} \cdot \exp\{d^{t}(s)/c_v\}$.
        \EndFor
        \For{$v \not \in \Gamma^t(s)$} 
            \State $x_v^t \leftarrow x_v^{t-1}$.
        \EndFor
        \State Set $Z^t \leftarrow \sum_{v \in V} c_v \cdot x_v^t / \beta$ and $x^t \leftarrow x^t / Z^t$. \label{eq:normalize}
    \EndFor
    \State \Return $F^{k}$.
\end{algorithmic}
\end{algorithm}

Note that once again \cref{eq:normalize} ensures that \cref{invariant:inner product beta} holds. We are ready for the analysis. 

\subsection{The Analysis}

For the new potential function, we need to define new and redefine old terms. In particular, the definition of $D^t(s^t)$, the set of vertices claimed by $s^t$, is more involved. For one, $s^t$ is now the terminal \emph{pair} $(s_0^t, s_1^t)$. More substantively, when $\kappa^t_{s_0} \ll d^t(s)$, it is tempting to set the radius of the balls claimed by $D^t(s^t)$ to be on the scale of $\kappa^t_{s_0}$ to ensure disjointness. However this then does not suffice to pay for our algorithm's cost, so instead we sidestep this case and handle it via an auxiliary potential term inspired by \cite{DBLP:conf/stoc/BermanC97}.

Letting $B^t := B_{G / F^{t-1}}(s_0, 2^{\ell-3})$ and $D_{< t} := \bigcup_{\substack{t' < t: \\ \ell^{t'} = \ell}} D^{t'}$, then we say $s$ \emph{claims} the vertex set
\begin{equation}\label{eq:balldef}
    D^t = 
    \begin{cases}
        \displaystyle B^t \setminus D_{<t} & \text{if }\kappa_{s_0}^t \geq d^t(s) / 2 \\
        \emptyset & \text{otherwise}
    \end{cases}
\end{equation}

Said differently, when $\Psi^t$ holds, the incoming terminal pair at level $\ell$ claims the vertices within radius $2^{\ell-3}$, except those that were claimed by vertices at the same level. This refined definition ensures that vertices again claim non-intersecting sets, which again ensures that \eqref{eq:disjoint_charging} holds and the LP cost inside claimed vertices is a good proxy for $\lpopt$. Let us now redefine the potential function $\Phi$.

The main change is that we redefine our learning potential as
\begin{equation*}
    \Phi_{L}(t) = \wKL{x^{*}}{x^t} + 8  
\sum_{\ell=\lfloor\log_2(\frac{\beta}{n})\rfloor}^{\lfloor\log_2(\beta)\rfloor} \left( \lpopt - \sum_{\substack{i \in [t] : \\ \ \ell^i = \ell}} \ \sum_{v \in D^i(s^i)} c_v x_v^* \right)
    + \left( \sum_{\substack{i \in [t] : \\ \Psi^i \text{ true}}} 2^{\ell^i-2} - \hspace{-0.1in}\sum_{\substack{i \in [t]: \\ \Psi^i \text{ false}}} 2^{\ell^i-2} \right) \text{,}
\end{equation*}
where the last term is the promised fix for the case when $\kappa^t_{s_0} \ll d^t(s)$. The covering potential is again 
\begin{equation*}
    \Phi_{C}(t) = \beta \cdot \log \left( \frac{\rho^t}{\beta} + \frac{1}{n} \right),
\end{equation*}
but we also need a more sensitive definition of $\rho^t$. Define
\begin{equation}\label{eq:kappadef}
    \kappa^t_{s} = \kappa^t_{s_0} + \kappa^t_{s_1} + 2^{\ell+3}
\end{equation}
and set $\rho^t = \sum_{s \in U^t} \kappa_{s}^{t+1}$ as the proxy for how much will spend on future requests. Finally, we combine all terms in the single potential 
\begin{equation}\label{eq:potentialdef}
    \Phi(t) = C_1 \cdot \Phi_{L}(t) + C_2 \cdot \Phi_{C}(t) \text{,}
\end{equation}
where $C_1$ and $C_2$ are constants to be specified later.
With all this, we show:

\thmnwsf*

The proof follows the same structure as \cref{sec:st_tree}. We can reuse the initial potential bound of \cref{lemma:potential_bounds} verbatim, again we bound the drop in the learning and covering potentials separately.

\changelearn*

\changecover*

The technical proofs are again deferred to \cref{sec:deferred}, and we conclude with the theorem proof. 

\begin{proof}[Proof of \cref{thm:main_nwsf}]
Let $c(\alg(t))$ be the cost paid by algorithm~\ref{alg:nwsf} up to and including time $t$.
In every round $t$, the expected cost of the sampled nodes $R^t$ is at most $d^t(s) \cdot \sum_v c_v \cdot x_v^{t-1} / \beta = d^t(s)$ (by invariant~\ref{invariant:inner product beta}).
The algorithm pays at most an additional $d^t(s)$ by adding the cheapest path and at most an additional \(d^t(s)\) if the augmented greedy step is being performed.
Hence the total expected cost per round is at most $3 \cdot d^t(s)$.

By combining lemmas \ref{lemma:change in learn} and \ref{lemma:change in cover}, and setting the constants $C_1 = 24$ and $C_2 = 5472e(1+e+e^2+e^3)$, we have 
\begin{align*}
    &\expectover{s^t, R^t}*{\Phi(t) - \Phi(t-1) \,\middle|\, s^1,\ldots,s^{t-1}, R^1,\ldots,R^{t-1}} \nonumber \\
    &\leq - \expectover{s^t, R^t}*{3 \cdot d^t(s) \,\middle|\, s^1,\ldots,s^{t-1}, R^1,\ldots,R^{t-1}} \text{,}
\end{align*}
which cancels the expected change in $c$ in each round.
We therefore have the inequality 
\begin{equation}
    \expectover{s^t, R^t}*{\Phi(t) - \Phi(t-1) + c(\alg(t)) - c(\alg(t-1)) \,\middle|\, s^1,\ldots,s^{t-1}, R^1,\ldots,R^{t-1}} \leq 0 \label{eq:cancel cost} \text{.}
\end{equation}
By repeatedly applying \ref{eq:cancel cost} for all $t$, we obtain 
\begin{align}
    \expectover{s, R}*{\Phi(k) - \Phi(0) + c(\alg(k)) - c(\alg(0))} &\leq 0 \nonumber \\
    \expectover{s, R}*{c(\alg(k))} &\leq \Phi(0) + c(\alg(0)) - \expectover{s, R}*{\Phi(k)} \nonumber \\
    &\leq O(\beta \cdot \log n) + 0 - (- O(\beta \cdot \log n)) \label{eq:use potential bounds} \text{,}
\end{align}
where \ref{eq:use potential bounds} follows from the fact that $c(\alg(0)) = 0$, together with the bounds on $\Phi$ established in lemma~\ref{lemma:potential_bounds}. We conclude that $\expectover{s, R}*{c(\alg(k))} = O(\beta \cdot \log n).$
\end{proof}

    \section{Conclusion}
In this paper, we give an $O(\log n)$-competitive algorithm for random order \nwsf, matching the best possible offline approximation in the $n = \poly(k)$ regime. Run offline, this algorithm recovers a randomized rounding filtering technique that we hope is useful for other problems. 

Beyond our specific results, we hope that this work inspires further research on whether the $\loc$ paradigm can be used for fast \emph{offline} approximation algorithms. More broadly, we hope it demonstrates that designing for restricted computational models can yield insights that are beneficial even to classic offline settings.

    \newpage

    \appendix
    \section{Prize-Collecting Node-Weighted Steiner Forest}\label{sec:pcst_forest}

The prize-collecting node-weighted forest problem (\pcnwsf) 
generalizes the \nwsf problem by an additional penalty term. 
In iteration $t$, a terminal pair $s$ arrives with its penalty $\pi_s$. 
We can either connect the pair $s$, or pay a penalty of $\pi_s$ and skip this iteration. 
Note, since requests can be repeated, it is now possible that \(k = \omega(n)\).
By setting $\pi_s = \infty$, we recovers the \nwsf problem. 
This problem requires a more general LP: 
\begin{alignat}{3}
\min\quad & \sum_{v \in V} c_v x_v + \sum_{s \in T} \pi_s z_s\nonumber \\
\text{s.t.}\quad & z_s + \sum_{P \in \mathcal{P}_{s}} f_P \geq 1, \quad && \forall s \in T \label{eq:pc-path-demand} \\
& \sum_{\substack{P \in \mathcal{P}_{s} \\ P \ni v}} f_P \leq x_v, \quad && \forall s \in T, \, \forall v \in V \label{eq:pc-vertex-capacity} \\
& x_v \geq 0, \quad && \forall v \in V \nonumber \\
& z_s \geq 0, \quad && \forall z \in T \nonumber \\
& f_P \geq 0, \quad && \forall P \in \mathcal{P} \nonumber
\end{alignat}
where variables $z_s$ are indicator variables for whether we choose to pay the penalty for the terminal $s$. 
Observe that even though the same terminal pair may arrive multiple times, we may assume \(\opt\) assigns the same value of \(z_s\) to each of them, and hence we write only one value \(\pi_s\) in the LP above to stand in for the sum of the penalties that appear online.

Our algorithm for \pcnwsf is essentially the same as the algorithm for the \nwsf problem, with the following modifications.
We assume that \(d^t(s) \leq \beta\) for all terminal pairs \(s\), since \(\opt\) must have paid the penalty otherwise. 
In iteration \(t\) with incoming terminal \(s\), if \(\min(d^t(s), \pi_s) \leq \beta/t\), then we either pay the cheap penalty or buy the cheap path, whichever is appropriate.
We introduce the notation \(\Xi^t\) for the event that \(\min(d^t(s), \pi_s) > \beta/t\) and \(\Xi^t(u)\) for the condition that \(\min(d^t(u), \pi_u) > \beta/t\) for arbitrary \(u \in U^{t-1}\).
Otherwise, we first sample each node with probability \(\min\left( \frac{d^{t}(s) \cdot \pi_s}{d^t(s) + \pi_s} \cdot x_v^{t-1}  / \beta, 1  \right)\).
We then do one of two things.
The first is that we pay the penalty with probability \(\frac{d^t(s)}{d^t(s) + \pi_s}\).
Otherwise, we follow the steps of \cref{alg:pcnwsf} from the previous section, running augmented greedy if possible, and if not, we buy a shortest path and update our distribution in the usual way.
Note that \(\Gamma\) still consists of nodes whose purchase would decrease the cost to connect $s_i$ to any terminal in $T_{\ell}^{t-1}$ by a factor of 2 for some $i$ satisfying $\kappa^t_{s_i} \geq d^t(s) / 2$.

\begin{algorithm}
\caption{\textsc{LearnOrCoverPCNWSF}}
\label{alg:pcnwsf}
\begin{algorithmic}[1]
    \State Initialize $F^0 = \{v \mid c_v < \beta/n\}$.
    \State Initialize  $T_{\ell}^0 \leftarrow \emptyset$ \ $\forall \ell \in \mathbb{Z}$.
    \State Set $V \leftarrow \{v \mid \beta/n \leq c_v \leq \beta\}$, and $n' \leftarrow \abs{V}$.
    \State $x_v^0 \leftarrow \frac{\beta}{c_v n'} \ \forall v \in V,$ 0 otherwise.
    \For{each request $s$ at time $t$ in random order}
        \State Set $F^t \leftarrow F^{t-1}$, $R^t \leftarrow \emptyset$, $T_{\ell}^t \leftarrow T_{\ell}^{t-1}$ $\forall \ell \in \mathbb{Z}$.
        \State Set $P$ to be the shortest path in $G / F^{t-1}$ connecting $s_0^t$ to $s_1^t$.
        \If {\(\pi_s \leq \frac{\beta}{t}\)}
            \State Pay penalty \(\pi_s\), \Continue ($\Xi^t$ does not hold)
        \EndIf
        \If {\(d^t(s) \leq \frac{\beta}{t}\)}
            \State Add $F^t \leftarrow F^t \cup P$, \Continue ($\Xi^t$ does not hold)
        \EndIf
        \For{$v \in V$} 
            \State Add $R^t \leftarrow R^t \cup \{v\}$ w.p. $\min \left( \frac{d^{t}(s) \cdot \pi_s}{d^t(s) + \pi_s} \cdot x_v^{t-1}  / \beta, 1  \right)$.
            \State Add $F^t \leftarrow F^t \cup R^t$.
        \EndFor
        \State Sample \(Y^t \sim \Ber\left( \frac{d^t(s)}{d^t(s) + p_{s}} \right)\). \label{eq:penalty coin pc}
        \If{ \( Y^t = 1 \) }
            \State Pay penalty \(p_{s}\). 
            \State $x^t \leftarrow x^{t-1}$, \Continue
        \EndIf
        \State Set $F^t \leftarrow F^t \cup P$.
        \State Set $T_{\ell}^t \leftarrow T_{\ell}^t \cup \{s_0,s_1\}$.
        \If {$\kappa^t_{s_0} < d^{t}(s) / 2$}~\label{alg:AG pc}
            \State Set $T_{\ell}^t \leftarrow T_{\ell}^t \cup \{s_0,s_1\}$.
            \State Set $P_0$ to be the shortest path in $G / F^{t-1}$ connecting $s_0$ to a terminal in $T_{\ell}^{t-1} \cup \{s_1\}$.
            \State Set $P_1$ to be the shortest path in $G / F^{t-1}$ connecting $s_1$ to a terminal in $T_{\ell}^{t-1} \cup \{s_0\}$.
            \State Add $F^t \leftarrow F^t \cup P_0 \cup P_1$. \label{eq:augmented greedy pc}
            \State $x^t \leftarrow x^{t-1}$, \Continue ($\Psi^t$ does not hold)
        \EndIf
        \If {$\sum_{v \in \Gamma^t(s)} x_v^{t-1}  \geq 1$} \label{line: preconnected pc}
            \State $x^t \leftarrow x^{t-1}$, \Continue ($\Lambda^t$ does not hold)
        \EndIf
        \For{$v \in \Gamma^t(s)$} 
            \State $x_v^t \leftarrow x_v^{t-1} \cdot \exp\left\{d^{t}(s) /c_v\right\}$.
        \EndFor
        \For{$v \not \in \Gamma^t(s)$} 
            \State $x_v^t \leftarrow x_v^{t-1}$.
        \EndFor
        \State Set $Z^t \leftarrow \sum_{v \in V} c_v \cdot x_v^t / \beta$ and $x^t \leftarrow x^t / Z^t$. \label{eq:normalize pc}
    \EndFor
    \State \Return $F^{k}$.
\end{algorithmic}
\end{algorithm}

To analyze \cref{alg:pcnwsf}, we modify a few definitions from \cref{sec:st_forest}.
Let $(x^*, z^*)$ be an optimal fractional solution to the above LP. 
We define the sets claimed by incoming request $s^t$ as
\begin{equation}\label{eq:balldef pc}
    D^t = 
    \begin{cases}
        \displaystyle B_{G / F^{t-1}}(s_0, 2^{\ell-3}) \setminus \bigcup_{\substack{t' < t: \\ \ell^{t'} = \ell}} D^{t'} & \kappa_{s_0}^t \geq d^t(s) / 2 \text{ and } Y^t = 0 \\
        \emptyset & \text{otherwise}
    \end{cases}
\end{equation}
This new definition is so that we only claim vertices when we buy a path, ensuring that these regions behave the same way as in previous sections.
We then define our learning potential as
\begin{align*}
    \Phi_{L}(t) &= \wKL{x^{*}}{x^t} + 8 \cdot \sum_{\ell=0}^{\log_2 k} \left( \lpopt - \sum_{\substack{i \in [t] : \\ \ \ell^i = \ell}} \ \sum_{v \in D^i(s^i)} c_v \cdot x_v^* \right) + \sum_{s \in U^t} \pi_s \cdot z_s^*\\
    &+ \left( \sum_{i \in [t] : \Psi^i \land Y^i=0} 2^{\ell^i-2} - \sum_{i \in [t] : \neg \Psi^i \land Y^i=0} 2^{\ell^i-2} \right) + 3\cdot(e^4+5) \sum_{i=t+1}^k \frac{\beta}{i} \text{.}
\end{align*}
The new term on the first line of the learning potential allows us to charge some costs to \(\opt\) when it pays the penalty.
The new final term of the learning potential simply exists to account for steps in which \(\min(d^t(s), \pi_s) \leq \beta/t\).

Before we define our covering potential, we define \(\kappa_s^t\) as
\begin{equation}\label{eq:kappadef pc}
    \kappa^t_{s} = \begin{cases}
        \frac{\pi_s}{\log_2 \left( \frac{38}{37} \right)} \cdot \log_2 \left( \frac{\kappa^t_{s_0} + \kappa^t_{s_1} + 2^{\ell+3}}{\pi_s} \right) & d^t(s) > \pi_s \\
        \kappa^t_{s_0} + \kappa^t_{s_1} + 2^{\ell+3} & \text{otherwise}
    \end{cases} 
\end{equation}
The intuition behind this new definition when \(d^t(s) > \pi_s\) is that we want \(\kappa^t_s\) to decrease by \(\Omega(\pi_s)\) when a node is sampled from \(\Gamma^t(s)\) since this is the expected amount we spend on steps where augmented greedy doesn't occur.
However, since it is possible that \(\frac{d^t(s)}{\pi_s} = k\), we may need to sample from \(\Gamma(s)\) at \(O(\log k)\) different timesteps before \(d^t(s) \leq \pi_s\).
Hence we multiply \(\pi_s\) by a multiplicative factor proportional to \(\log_2 \left( \frac{d^t(s)}{\pi_s} \right)\) so that \(\kappa^t_{s}\) is large enough to decrease by \(\Omega(\pi_s)\) a sufficient number of times.
To compensate for this extra factor compared to the previous sections, we define the covering potential as 
\begin{equation*}
     \Phi_{C}(t) = \beta \cdot \log_2 (10k) \cdot \log \left( \frac{\rho^t}{\beta} + \frac{1}{k} \right)\text{.}
\end{equation*}
Finally, we as usual set
\begin{equation*}
    \Phi(t) = C_1 \cdot \Phi_L(t) + C_2 \cdot \Phi_C(t) \text{,}
\end{equation*}
where $C_1, C_2$ are constants to be specified later. 

We now begin our analysis, where the full proofs of most lemmas may be found in \cref{sec:deferred}.

\begin{restatable}{lemma}{pcpotentialbounds}
    \label{lemma:pc potential bounds}
    The initial potential is bounded as $\Phi(0) = O(\beta \cdot (\log n + \log^2 k))$, and $\Phi(t) \geq -O( \beta \cdot \log^2 k)$ for all $t$.
\end{restatable}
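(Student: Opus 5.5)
We bound each term of $\Phi(t) = C_1 \Phi_L(t) + C_2 \Phi_C(t)$ separately, once at $t=0$ from above and once at arbitrary $t$ from below, in the same way as the proof of \cref{lemma:potential_bounds} for the non-prize-collecting case. The only genuinely new ingredients are the penalty term $\sum_{s \in U^t} \pi_s z^*_s$, the tail term $3(e^4+5)\sum_{i>t}\beta/i$, and the extra $\log_2(10k)$ factor together with the $1/k$ offset inside the covering potential.

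\textbf{Upper bound on $\Phi(0)$.}
\begin{enumerate}
\item For the KL term we use $x^0_v = \beta/(c_v n')$ and take all weights with respect to $c$. We get $\wKL{x^*}{x^0} = \sum_v c_v x^*_v \log(x^*_v/x^0_v) - \sum_v c_v x^*_v + \sum_v c_v x^0_v$. Since $x^*_v \le 1$ at optimality and $c_v \le \beta$ on $V$, we have $x^*_v/x^0_v \le c_v n'/\beta \le n$. Hence the KL term is at most $\lpopt \log n + \beta = O(\beta \log n)$.
\item The level term is at most $8(\#\text{levels})\cdot\lpopt$. Using $d^t(s) \le \beta$ and the $\beta/t$ filtering, only $O(\log k)$ (or $O(\log n)$) levels are active, which gives $O(\beta(\log n + \log k))$.
\item The penalty term at $t=0$ equals $\sum_s \pi_s z^*_s \le \lpopt \le \beta$.
\item The Berman--Coulston term is $0$ at $t=0$.
\item The tail term is $3(e^4+5)\beta H_k = O(\beta\log k)$.
\item For the covering potential, $\rho^0 = \sum_s \kappa^1_s$. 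In the case $d \le \pi_s$ we have $\kappa_s \le O(d^1(s)) = O(\beta)$. In the case $d > \pi_s$ we have $\kappa_s = O(\pi_s \log(\beta/\pi_s))$, and since $\pi_s > \beta/t \ge \beta/k$ for relevant requests, this is $O(\beta \log k)$. So $\rho^0/\beta \le \mathrm{poly}(k)$ (using $n$ and $k$ appropriately), and $\Phi_C(0) = \beta\log_2(10k)\cdot O(\log k) = O(\beta\log^2 k)$.
\end{enumerate}
Summing these gives $\Phi(0) = O(\beta(\log n + \log^2 k))$.

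\textbf{Lower bound on $\Phi(t)$.}
\begin{enumerate}
\item The KL term, the penalty term and the tail term are all nonnegative.
\item Each level summand $\lpopt - \sum_{i:\ell^i=\ell}\sum_{v\in D^i} c_v x^*_v$ is nonnegative, because the sets $D^i$ at a fixed level are disjoint.
\item The Berman--Coulston term is at most $-\sum_{\neg\Psi^i} 2^{\ell^i-2}$. We bound it below by the same argument used for \cref{lemma:potential_bounds}: each augmented greedy step at level $\ell$ needs previous terminals at level $\ell$ within distance $<2^{\ell}$, and these are charged to $\Psi$-steps at the same level. This yields a bound of $-O(\beta\log n)$, with levels counted as above.
\item For the covering potential, the offset gives $\log(\rho^t/\beta + 1/k) \ge -\log k$, so $\Phi_C(t) \ge -\beta\log_2(10k)\log k = -O(\beta\log^2 k)$.
\end{enumerate}

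\textbf{Main obstacle.} The hardest part is the upper bound on $\rho^0$ in the case $d > \pi_s$. There we must confirm that $\kappa_s$ is nonnegative and $O(\beta\log k)$. This uses $\pi_s > \beta/k$, which holds for requests that survive the $\Xi$ filter; requests that fail the filter should be excluded from $\rho$ or handled trivially. Reusing the Berman--Coulston charging argument from \cref{lemma:potential_bounds} under the $Y^i=0$ restriction should then go through unchanged.
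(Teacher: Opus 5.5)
Your term-by-term plan matches the paper's proof: reuse \cref{lemma:potential_bounds} for the old terms, bound the penalty term by $\lpopt\le\beta$ and the tail term by $O(\beta\log k)$, and get $\pm O(\beta\log^2 k)$ for $\Phi_C$ from an $O(k\beta)$ bound on $\rho^0$, and from $\rho^t\ge 0$ together with the $1/k$ offset. The step that does not work is the lower bound on the Berman--Coulston term.

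You write that this term is ``at most'' $-\sum_{\neg\Psi^i}2^{\ell^i-2}$; you need ``at least''. You then claim only $-O(\beta\log n)$, by charging each augmented-greedy step to nearby earlier $\Psi$-steps. That charging is never shown to be injective: several augmented-greedy steps can sit near the same earlier terminal. Even granting it, a $-O(\beta\log n)$ contribution does not give the stated $\Phi(t)\ge -O(\beta\log^2 k)$. That lower bound deliberately has no $\log n$, so your bound does not imply the lemma when $\log n\gg\log^2 k$.

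What \cref{lemma:potential_bounds} actually proves, and what the paper uses when it says all learning terms are nonnegative, is that this term is $\ge 0$. The argument is component counting, not charging. Let $N_\ell$ be the number of components of the purchased set that meet $T_\ell$.
\begin{itemize}
\item Only steps with $\Xi^i$ and $Y^i=0$ add terminals to $T_\ell$.
\item Such a step with $\Psi^i$ raises $N_\ell$ by at most one.
\item Such a step with $\neg\Psi^i$ lowers $N_\ell$ by at least one. The nearest same-level terminals $u',v'$ of $s_0,s_1$ lie in $T^{t-1}_\ell$ and in different components of $F^{t-1}$; otherwise $d^t(s)\le\kappa^t_{s_0}+\kappa^t_{s_1}<d^t(s)$. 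The paths $P,P_0,P_1$ then join those two components.
\item All other purchases only merge components.
\end{itemize}
Since $N_\ell$ starts at $0$ and stays nonnegative, at every time and level the $\neg\Psi$-steps are no more numerous than the $\Psi$-steps. Hence the term is nonnegative and $\Phi_L(t)\ge 0$.

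Your ``main obstacle'' also goes away, though not the way you suggest. Filtered requests cannot be dropped from $\rho^t$, since it sums over all of $U^t$, but you never need $\pi_s>\beta/k$. If $d^t(s)>\pi_s$, then:
\begin{itemize}
\item $\pi_s<\beta$;
\item $2^{\ell+3}>d^t(s)>\pi_s$, so $\kappa_s>0$;
\item $\kappa^t_{s_0}+\kappa^t_{s_1}+2^{\ell+3}\le 10\,d^t(s)\le 10\beta$.
\end{itemize}
Since $x\mapsto x\log_2(10\beta/x)$ is increasing on $(0,\beta]$, this gives $\kappa_s\le \beta\log_2 10/\log_2(38/37)=O(\beta)$ for every request, hence $\rho^0=O(k\beta)$. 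This is the content of the paper's one-line remark ``we may assume $\pi_s,d^0_s\le\beta$''.
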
 

It is readily apparent that the new terms in the learning potential are insignificant with regards to these bounds, and the new \(\log^2k\) bounds are a consequence of the extra \(\log k\) factor in the covering potential.

\begin{restatable}{lemma}{pcchangelearn}
    \label{lemma:pc change in learn}
    For all rounds \(t\) in which \(\Xi^t\) holds, the expected change in $\Phi_L$ is 
    \begin{align}
        &\expectover{s^t, R^t}*{\Phi_L(t) - \Phi_L(t-1) \,\middle|\, x^{t-1}, F^{t-1}, U^{t-1}, \Xi^t} \nonumber \\
        &\leq \expectover{s \sim U^{t-1}}*{\frac{e^{4} - 1}{4} \cdot \frac{d^t(s) \cdot \pi_s}{d^t(s) + \pi_s} \cdot \min \left( X^t(s), 1 \right) - \frac{d^t(s) \cdot \pi_s}{8(d^t(s) + \pi_s)} \,\middle|\, x^{t-1}, F^{t-1}, \Xi^t} \text{.} \label{eq:pc learn bound 1}
    \end{align}
    In rounds in which \(\neg \Xi^t\) holds, the expected change in \(\Phi_L\) is
    \begin{equation}
        \expectover{s^t, R^t}*{\Phi_L(t) - \Phi_L(t-1) \,\middle|\, x^{t-1}, F^{t-1}, U^{t-1}, \neg \Xi^t} \leq -3\cdot(e^4+5) \cdot \frac{\beta}{t}\text{.} \label{eq:pc learn bound 2}
    \end{equation}
\end{restatable}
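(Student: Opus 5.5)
The plan is to decompose $\Phi_L(t)-\Phi_L(t-1)$ into its five constituent terms and bound the expected change of each conditioned on the history, treating the two regimes $\Xi^t$ and $\neg\Xi^t$ separately. Throughout, the expectation is over the uniformly random choice $s\sim U^{t-1}$, the sampled set $R^t$, and the penalty coin $Y^t$. The structure should mirror the (deferred) proof of \cref{lemma:change in learn}, with every occurrence of $d^t(s)$ on the \emph{charging} side replaced by $\frac{d^t(s)\pi_s}{d^t(s)+\pi_s}$. This replacement is natural because, conditioned on $\Xi^t$ and $s$, the coin gives $Y^t=0$ with probability $\frac{\pi_s}{d^t(s)+\pi_s}$. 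Every event that changes the KL term, the claimed-ball term, or the augmented-greedy term therefore occurs only with this damping factor.

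For the case $\neg\Xi^t$, the algorithm never touches $x$ and claims no $D^t$. So the KL term, the $D$ term, and the $\Psi$ term are all unchanged. The penalty term $\sum_{s\in U^t}\pi_s z^*_s$ can only decrease. The last term, $3(e^4+5)\sum_{i>t}\beta/i$, drops by exactly $3(e^4+5)\beta/t$. This gives \eqref{eq:pc learn bound 2} directly.

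For the case $\Xi^t$, I would proceed term by term.

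\emph{KL term.} The multiplicative update happens only when $Y^t=0$, $\Psi^t$ holds, and $\Lambda^t$ holds. In that event the standard \loc calculation from the proof of \cref{lemma:change in learn} applies. Using \cref{invariant:inner product beta} and $e^{z}\le 1+\frac{e^4-1}{4}z$ for $z\in[0,4]$, the KL change is at most
\[
\tfrac{e^4-1}{4}\,d^t(s)\min(X^t(s),1)\;-\;d^t(s)\cdot x^*(\Gamma^t(s)).
\]
Multiplying by $\Pr[Y^t=0]$ produces the claimed positive term weighted by $\frac{d^t(s)\pi_s}{d^t(s)+\pi_s}$, together with a negative term $-\frac{d\pi}{d+\pi}\,x^*(\Gamma^t(s))$.

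\emph{$D$ and $\Psi$ terms.} I would reuse the alternation argument (\cref{lemma:alternate}) from the non-prize-collecting proof. When $Y^t=0$ and $\Psi^t$ holds, the optimum routes $s$ either through $\Gamma^t(s)$ or through the ball $B^t$. The $x^*$-flow through $\Gamma^t(s)$ pays through the KL drop. The remaining flow, of mass at least $1-x^*(\Gamma^t(s))-z^*_s$, pays $2^{\ell-3}$ inside $D^t$; the $8\times$ weighting covers the $+2^{\ell-2}$ increase of the $\Psi$ term and still leaves a drop of order $d^t(s)/8$. When $Y^t=0$ and $\neg\Psi^t$ holds, the $\Psi$ term drops by $2^{\ell-2}\ge d^t(s)/8$.

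\emph{Penalty term.} This is the new ingredient. The term $\sum_{s\in U^t}\pi_s z^*_s$ drops by $\pi_s z^*_s$ for the arriving $s$, with no coin factor. I would use it to pay for the deficit $z^*_s$ in the flow: $\frac{d\pi}{d+\pi}z^*_s\le \pi_s z^*_s$. Combining the three cases, the total expected change is at most
\[
\mathbb{E}_s\!\left[\tfrac{e^4-1}{4}\tfrac{d\pi}{d+\pi}\min(X,1)-\tfrac{d\pi}{8(d+\pi)}\right].
\]
Finally, I would check that the disjointness of claimed sets still holds, since claiming only happens when $Y^t=0$. This justifies the analog of \eqref{eq:disjoint_charging}.

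The main obstacle is the interaction between $z^*_s$ and the flow-splitting step. In the earlier proof, at least half of the unit flow avoids $\Gamma^t(s)$. Here only $1-z^*_s$ units of flow exist, so the case analysis must be split on whether $z^*_s\ge 1/4$. If $z^*_s\ge 1/4$, the penalty drop $\pi_s/4\ge \frac{d\pi}{4(d+\pi)}$ alone suffices, and the other terms are nonpositive or covered by the $\min(X,1)$ term. Otherwise, I would rerun the flow argument with mass $3/4$ instead of $1$ and absorb the resulting constants, which is where the $1/8$ factor should emerge.
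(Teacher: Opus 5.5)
Your main line is correct and follows the paper's skeleton: the trivial $\neg\Xi^t$ case, and on $\Xi^t$ the \loc KL computation on $\{Y^t=0,\Psi^t,\Lambda^t\}$, the cover $B^t\subseteq D^t\cup\Gamma^t(s)$, and the coin weight $\Pr[Y^t=0\mid s]=\pi_s/(d^t(s)+\pi_s)$. Where you differ is in how the $z^*_s$ deficit is charged.

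The paper applies the averaging \cref{lemma:pc-alternate}, with the costs on $\Gamma^t(s)$ raised to $d^t(s)/8$. This gives a $(1-z^*_s)$-scaled drop in the $Y^t=0$ branch, and the paper recovers the missing $z^*_s$ part only from the $Y^t=1$ branch, weighted by $d^t(s)/(d^t(s)+\pi_s)$.

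You instead use the direct flow count of \cref{sec:offline}, $8\sum_{v\in D^t}c_vx^*_v\ge 2^{\ell}\bigl(1-z^*_s-x^*(\Gamma^t(s))\bigr)$. This pairs exactly with the $-d^t(s)\,x^*(\Gamma^t(s))$ from the KL step, so no cost-replacement trick is needed. Together with the $+2^{\ell-2}$ from the $\Psi$-term, it leaves at most $-\tfrac34 2^{\ell}+2^{\ell}z^*_s$. The residual $\tfrac{\pi_s}{d^t(s)+\pi_s}2^{\ell}z^*_s\le\pi_sz^*_s$ is then paid by the penalty-term drop, which occurs regardless of $Y^t$.

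Your accounting is cleaner and more robust. In the paper the $+2^{\ell-2}$ is not scaled by $(1-z^*_s)$, so the stated $Y^t=0$ bound $-\tfrac{d^t(s)}{8}(1-z^*_s)$ does not follow as written when $z^*_s$ is large. The final inequality still holds, but only because the $Y^t=1$ branch really yields $-\pi_sz^*_s$, which the paper weakens to $-\pi_sz^*_s/8$. The sub-case $Y^t=0,\Psi^t,\neg\Lambda^t$, which you leave implicit, goes as in \cref{lemma:change in learn}: the change is at most $\tfrac{\pi_s}{d^t(s)+\pi_s}2^{\ell-2}$ while $\min(X^t(s),1)=1$.

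You should discard your final paragraph, though. The split on $z^*_s\ge 1/4$ is unnecessary given the linear charging, and its first branch is wrong as stated:
\begin{itemize}
\item On $Y^t=0,\Psi^t$ the $\Psi$-term rises by $2^{\ell-2}$. This is not covered by $\min(X^t(s),1)$, which may be $0$, and it need not be offset by $D^t$ when $z^*_s$ is large. So ``the other terms are nonpositive'' is false.
\item A penalty drop bounded only by $\pi_s/4$ cannot absorb a positive $\tfrac{\pi_s}{d^t(s)+\pi_s}2^{\ell-2}$ and still leave $\tfrac{d^t(s)\pi_s}{8(d^t(s)+\pi_s)}$. For example, it fails when $d^t(s)=2^{\ell}$ and $\pi_s<d^t(s)/2$.
\end{itemize}
Closing that branch needs the $D^t$ flow bound together with the full drop $\pi_sz^*_s$, which is just your linear argument again.

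Also, the place where ``claims happen only when $Y^{t'}=0$'' matters is the cover $B^t\subseteq D^t\cup\Gamma^t(s)$, not disjointness. A previously claimed vertex is a shortcut only because its claimer was inserted into $T_\ell$. Disjointness holds by construction and is not used in this lemma.
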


This change in learn bound uses essentially the same arguments from the previous sections.
We first observe that if \(Y^t = 1\), then \(x^t = x^{t-1}\) and \(D^t = \emptyset\), so our potential decreases by at least \(\pi_s \cdot z^*_s\).
To handle the case where \(Y^t = 0\), we first introduce the following lemma which is a slight variation of \cref{lemma:alternate}:
\begin{restatable}{lemma}{pc alternate}
    \label{lemma:pc-alternate}
    Let $s \in T$ and $F \subseteq V$.
    Let $\gamma < d_{G / F}(s)$. For any cover $A \cup B \supseteq B_{G / F}(s_i,\gamma) \eqqcolon B(s_i,\gamma)$, 
    \begin{equation}\label{eq:pc weight in cover}
        \sum_{v \in A} c_v x_v^* \geq \frac{\gamma}{2}(1 - z^*_s) \quad \text{or} \quad \sum_{v \in B} c_v x_v^* \geq \frac{\gamma}{2}(1 - z^*_s).
    \end{equation}
    Moreover, the statement still holds if we replace any of the node weights $c_u$ with $c_u' \geq \gamma$.
\end{restatable}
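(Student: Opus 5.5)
Since $\gamma < d_{G/F}(s)$, the ball $B(s_i,\gamma)$ does not contain the mate $s_{1-i}$, so every $s_0$–$s_1$ path must cross the ball. The plan is to exploit this together with the LP constraints \eqref{eq:pc-path-demand} and \eqref{eq:pc-vertex-capacity}.

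First I will split the path flow $f$ for $s$ into two parts. Let $\mathcal{P}_A$ be the paths in $\mathcal{P}_s$ whose prefix from $s_i$ up to (and including) the first vertex outside $\mathring B(s_i,\gamma)$ meets $A$. Put every remaining path into $\mathcal{P}_B$; such a path enters $B$ along that prefix, since $A\cup B$ covers the ball. By \eqref{eq:pc-path-demand}, the total flow $f(\mathcal{P}_A)+f(\mathcal{P}_B)$ is at least $1-z^*_s$. Hence one of the two classes, say $\mathcal{P}_A$, carries flow at least $(1-z^*_s)/2$.

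Next I will show that each path $P\in\mathcal{P}_A$ has cost inside $A\cap B(s_i,\gamma)$ at least $\gamma$. This cannot follow from costs alone, since a path could hit $A$ at a single cheap vertex and then continue through $B$. So the partition has to be refined. I will assign a path to $A$ only if the vertices of its prefix (up to leaving the ball) that lie in $A$ have total cost at least $\gamma/2$, and otherwise assign it to $B$. Because the prefix from $s_i$ to the boundary has cost at least $\gamma$ (by the definition of $\partial B$ and $d(s_i,\cdot)$), every path in $B$'s class then has prefix cost at least $\gamma/2$ inside $B$. Terminals cost zero, so they do not interfere.

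Then I will sum, exactly as in the proof of \cref{assmpt:beta-n-bound}:
\[
\sum_{v\in A} c_v x^*_v \ \ge\ \sum_{v\in A} c_v\!\!\sum_{P\in\mathcal{P}_A,\,P\ni v}\!\! f_P \ \ge\ \sum_{P\in\mathcal{P}_A} f_P\cdot \frac{\gamma}{2}.
\]
This gives $\frac{\gamma}{2}\cdot\frac{1-z^*_s}{2}$.

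That bound is off by a factor of $2$ from the stated $\frac{\gamma}{2}(1-z^*_s)$. To recover the constant, I will instead use a threshold split. Set $a_P$ and $b_P$ to be the costs of the prefix inside $A$ and inside $B$, so that $a_P+b_P\ge\gamma$. Summing over paths gives
\[
\sum_{v\in A}c_vx^*_v+\sum_{v\in B}c_vx^*_v \;\ge\; \sum_P f_P(a_P+b_P) \;\ge\; \gamma(1-z^*_s),
\]
using that a vertex of $A\cap B$ is counted on both sides. Hence the larger of the two sums is at least $\frac{\gamma}{2}(1-z^*_s)$. This version is clean and is the one I will present.

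The main obstacle is making the claim "prefix cost is at least $\gamma$" precise. Under the paper's convention, $d$ excludes endpoint costs and the boundary vertex $w$ satisfies $d(s_i,w)+c_w\ge\gamma$. I need to check that the prefix from $s_i$ up to and including the first vertex $w$ with $d(s_i,w)+c_w\ge\gamma$ is contained in $B(s_i,\gamma)$ and has total cost at least $\gamma$. Such a $w$ exists because the mate lies outside the ball.

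For the "moreover" clause, raising some $c_u$ to $c'_u\ge\gamma$ changes the ball but preserves the key property. If the prefix contains such a $u$ before exiting, then that single vertex already contributes at least $\gamma$, and the same counting applies.
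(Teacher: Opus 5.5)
Your final argument is correct and is essentially the paper's proof (the Ball Covering Lemma, \cref{lemma:alternate}, with $\sum_P f_P \geq 1 - z^*_s$ in place of $\sum_P f_P \geq 1$): lower-bound the cost each path spends inside $B(s_i,\gamma)$ by $\gamma$, weight by $f_P$, swap sums via the capacity constraint, and average over the cover $A \cup B$; your prefix-up-to-the-first-boundary-vertex check simply makes rigorous the step the paper asserts without proof. For the ``moreover'' clause the ball and cover stay fixed and only the weights in the two sums change (in the application they are actually lowered to $d^t(s)/8$), but your observation is exactly the paper's argument: a modified vertex on the in-ball prefix contributes at least $\gamma$ on its own, and otherwise nothing changes, which works regardless of the direction of the change.
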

With this and the bounds from \cref{sec:st_forest}, we know that the change in learning potential in this case is bounded above by \(\frac{e^{4} - 1}{4} \cdot \frac{d^t(s) \cdot \pi_s}{d^t(s) + \pi_s} \cdot \min \left( X^t(s), 1 \right) - \frac{d^t(s) \cdot \pi_s}{8(d^t(s) + \pi_s)} \cdot (1 - z^*_s)\).
The law of total expectation completes the proof.

\begin{restatable}{lemma}{pcchangecover}
    \label{lemma:pc change in cover}
    For all rounds $t$ in which \(\Xi^t\) holds, the expected change in $\Phi_C$ is 
    \begin{align}\label{eq:change in cover not preconnected pc}
        &\expectover{s^t, R^t}*{\Phi_C(t) - \Phi_C(t-1) \,\middle|\, x^{t-1}, F^{t-1}, U^{t-1}, \Xi^t} \\
        &\leq -\frac{1 - e^{-1}}{38}  \log_2 (10k) \frac{\sum_{s \in U^{t-1}} \mathbbm{1}(\Xi^t(s)) \frac{d^t(s) \cdot \pi_s}{d^t(s) + \pi_s}}{\rho^{t-1} + \frac{\beta}{k}}  \expectover{u \sim U^{t-1}}*{\frac{d^t(u) \cdot \pi_s}{d^t(u) + \pi_u}  \min(X^t(u), 1) \,\middle|\, x^{t-1}, F^{t-1}, \Xi^t} \text{.}
    \end{align}
    In rounds in which \(\neg \Xi^t\) holds, the expected change in \(\Phi_C\) is
    \begin{align}\label{eq:change in cover preconnected pc}
        &\expectover{s^t, R^t}*{\Phi_C(t) - \Phi_C(t-1) \,\middle|\, x^{t-1}, F^{t-1}, U^{t-1}, \neg \Xi^t} \leq 0 \text{.}
    \end{align} 
\end{restatable}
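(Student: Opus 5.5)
We prove \cref{lemma:pc change in cover} by following the proof of \cref{lemma:change in cover} for the forest case, with two changes: the extra factor $\log_2(10k)$ in $\Phi_C$, and the new definition \eqref{eq:kappadef pc} of $\kappa^t_s$. The second claim \eqref{eq:change in cover preconnected pc} is the easy part. When $\neg\Xi^t$ holds, the algorithm either pays a penalty or buys a path and adds no shortcut structure. The set $F^t$ only grows, so every distance $d_{G/F^t}(\cdot,\cdot)$ is non-increasing. We must check that each $\kappa^{t+1}_u$ is monotone in the underlying distances, and we argue this for both branches of \eqref{eq:kappadef pc}. In the branch $d>\pi$, the map $\pi\log_2(y/\pi)$ is increasing in $y$. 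At the switch between branches we need the values to be consistent, and we use that $\log_2(y/\pi)/\log_2(38/37)\ge 1$ once $y \ge 2^{\ell+3} > d \ge \pi$, up to constants. We also remove $s^t$ from $U^t$. Together these give $\rho^t\le\rho^{t-1}$, so $\Phi_C$ does not increase.

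For \eqref{eq:change in cover not preconnected pc}, condition on $x^{t-1},F^{t-1},U^{t-1},\Xi^t$. Concavity of $\log$ gives
\[
\Phi_C(t)-\Phi_C(t-1)\le \beta\log_2(10k)\,\frac{\rho^t-\rho^{t-1}}{\rho^{t-1}+\beta/k}.
\]
It therefore suffices to lower bound $\expect{\rho^{t-1}-\rho^t}$ by $\frac{1-e^{-1}}{38\beta}\sum_{u}\mathbbm{1}(\Xi^t(u))\frac{d^t(u)\pi_u}{d^t(u)+\pi_u}\min(X^t(u),1)$. This is the same per-terminal sampling argument as before. Fix a future terminal $u$ with $\Xi^t(u)$. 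Each $v\in\Gamma^t(u)$ enters $R^t$ with probability $\min(\frac{d^t(s)\pi_s}{d^t(s)+\pi_s}x_v^{t-1}/\beta,1)$, so with probability at least $(1-e^{-1})\min(\frac{d^t(s)\pi_s}{(d^t(s)+\pi_s)\beta}X^t(u),1)$ some shortcut node for $u$ is bought. Averaging over the random choice of $s^t$ from $U^{t-1}$ (random order) converts the sampling rate of the arriving request into $\expect{u}$ of the $u$-dependent quantity. This step gives the product form of the bound, with the ratio $\sum_u(\cdot)/(\rho^{t-1}+\beta/k)$ in front.

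The key step, and the main obstacle, is a per-terminal drop bound: when a node of $\Gamma^t(u)$ is bought, $\kappa_u$ decreases by at least a constant times $\frac{d^t(u)\pi_u}{d^t(u)+\pi_u}$.

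\textbf{Case $d^t(u)\le\pi_u$.} The definition is unchanged, so the forest-case argument applies. Buying $v\in\Gamma^t(u)$ halves $\kappa_{u_i}$ for the relevant endpoint $i$. Since $\kappa_{u_0}\ge d/2$ under $\Psi$, and otherwise the augmented-greedy accounting applies, the sum $\kappa_{u_0}+\kappa_{u_1}+2^{\ell+3}$ drops by an $\Omega(1)$ fraction. This drop is $\Omega(d)=\Omega(\frac{d\pi}{d+\pi})$.

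\textbf{Case $d^t(u)>\pi_u$.} Let $y=\kappa_{u_0}+\kappa_{u_1}+2^{\ell+3}$. Buying $v\in\Gamma^t(u)$ multiplies $y$ by at most a fixed constant below one. The constant $37/38$ is the one I expect falls out of the forest case, since there $\kappa_{u_0}\ge d/2\ge 2^{\ell-1}$ is weighed against $2^{\ell+3}$ and the other terms. Then $\log_2 y$ drops by at least $\log_2(38/37)$, so $\kappa_u$ drops by at least $\pi_u\ge\frac{d\pi}{d+\pi}$. If the drop crosses into the other branch, we handle it by the consistency check from the first paragraph.

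Other future terminals have non-increasing $\kappa$ by monotonicity. Summing over $u$ and inserting into the concavity bound gives the claimed inequality, including the constant $1/38$.
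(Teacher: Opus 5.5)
Your proposal is correct and follows the paper's proof: the bound $\log(1-y)\le -y$; a per-terminal drop of $\kappa_u$ by at least $\min(\pi_u,d^t(u))/38\ge \frac{1}{38}\cdot\frac{d^t(u)\pi_u}{d^t(u)+\pi_u}$, taken from the forest-case $37/38$ contraction of $\kappa_{u_0}+\kappa_{u_1}+2^{\ell+3}$ and turned into an additive $\pi_u$ drop by the logarithmic branch of \eqref{eq:kappadef pc}, with the branch switch checked separately; the $(1-e^{-1})$ hitting probability; and averaging over the arriving request. Your monotonicity argument for the $\neg\Xi^t$ case also supplies a step the paper leaves implicit. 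One slip to fix in the write-up: the intermediate bound you say ``suffices'' omits the arriving request's sampling rate $\frac{d^t(s)\pi_s}{d^t(s)+\pi_s}$ (as written it is off by a cost factor), but your next paragraph reinstates it from the hitting probability (pull it out of the $\min$ using $\frac{d^t(s)\pi_s}{d^t(s)+\pi_s}\le\beta$), and averaging it over $s$ is exactly what yields the paper's product form.
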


This is again obtained by using nearly identical arguments as those from previous sections.
However, this inequality was able to be simplified further previously due to \(\Xi^t\) holding for all \(s \in U^{t-1}\). 

\begin{restatable}{theorem}{thmpcnwst}
    \label{thm:main_pcnwst}
    There is a polynomial-time $O(\log(n) + \log^2 k)$-competitive algorithm for random order online \pcnwsf.
\end{restatable}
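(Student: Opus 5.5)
We follow the template of the proof of \cref{thm:main_nwsf}: a per-round amortized inequality, telescoped, plus the boundary bounds of \cref{lemma:pc potential bounds}. Rounds where $\neg\Xi^t$ holds are handled first.
\begin{itemize}
\item In such a round the algorithm pays $\min(d^t(s),\pi_s)\le \beta/t$.
\item By \eqref{eq:pc learn bound 2}, $\Phi_L$ drops by at least $3(e^4+5)\beta/t$; this is exactly the term $3(e^4+5)\beta/t$ removed from the final sum in $\Phi_L$.
\item By \eqref{eq:change in cover preconnected pc}, $\Phi_C$ does not increase.
\end{itemize}
So for any $C_1\ge 1$ the cost is cancelled. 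Note that the cumulative contribution of this last potential term is $O(\beta\log k)$, consistent with \cref{lemma:pc potential bounds}.

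In rounds where $\Xi^t$ holds, write $h(u)=\frac{d^t(u)\pi_u}{d^t(u)+\pi_u}$, which is at most $\min(d^t(u),\pi_u)$. The expected cost has three parts.
\begin{itemize}
\item Sampling costs at most $h(s)$, by \cref{invariant:inner product beta}.
\item With probability $\frac{d^t}{d^t+\pi_s}$ we pay $\pi_s$, contributing $h(s)$ in expectation.
\item Otherwise we buy the path and possibly run augmented greedy, costing at most $2d^t(s)$ with probability $\frac{\pi_s}{d^t+\pi_s}$, i.e. at most $2h(s)$.
\end{itemize}
The total is at most $4h(s)$, and we want $C_1\Delta\Phi_L+C_2\Delta\Phi_C\le -4\,\mathbb{E}_s[h(s)]$.

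By \eqref{eq:pc learn bound 1}, $\Delta\Phi_L\le \mathbb{E}_u[\frac{e^4-1}{4}h(u)\min(X^t(u),1)]-\frac18\mathbb{E}_u[h(u)]$. Taking $C_1=64$ gives $-8\,\mathbb{E}[h]$ plus a positive term of order $\mathbb{E}[h\min(X,1)]$. The main obstacle is showing that \eqref{eq:change in cover not preconnected pc} cancels this positive term. The prefactor there is
\[
\frac{\log_2(10k)\sum_{u\in U^{t-1}}\mathbbm{1}(\Xi^t(u))\,h(u)}{\rho^{t-1}+\beta/k},
\]
and we need it bounded below by a constant; this is where the extra $\log_2(10k)$ is used. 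Two cases, depending on which terms dominate the denominator:
\begin{itemize}
\item \emph{Terminals with $d^t(u)\le\pi_u$.} Here $\kappa_u\le \kappa_{u_0}+\kappa_{u_1}+2^{\ell+3}=O(d^t(u))$, using that $\kappa$-distances are at most $d^t$ up to the augmentation, and $h(u)\ge d^t(u)/2$. So these terminals contribute comparably to numerator and denominator.
\item \emph{Terminals with $d^t(u)>\pi_u$.} Here $\kappa_u=O(\pi_u\log(d^t(u)/\pi_u))$. Since $d^t\le\beta$ and $\pi_u>\beta/t\ge\beta/k$, we get $d^t/\pi_u\le k$, so $\kappa_u=O(\pi_u\log k)=O(h(u)\log_2(10k))$, because $h(u)\ge\pi_u/2$.
\end{itemize}
It remains to handle terminals in $U^{t-1}$ failing $\Xi^t(u)$ and the additive $\beta/k$. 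For the former, $\min(d^t(u),\pi_u)\le\beta/t$; there are at most $|U^{t-1}|\le k$ of them, so their $\kappa$'s contribute at most $O(\log k)\cdot k\beta/t$. I expect this to need care: perhaps bounding the ratio only in expectation, or absorbing it into the $\beta/k$ slack and the final potential term. I would first try to show that the ratio is $\Omega(1)$ whenever $\sum_u \mathbbm{1}(\Xi^t(u))h(u)\ge \beta/k$. When this fails, $\mathbb{E}_u[h(u)\min(X,1)]\le\beta/(k|U^{t-1}|)$, which is tiny, and that case can be paid by the final learning term.

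Once the ratio is $\Omega(1)$, choose $C_2$ large, of order $e(1+e+e^2+e^3)$ times the constants $38/(1-e^{-1})$, so that $C_2\Delta\Phi_C$ cancels $C_1\cdot\frac{e^4-1}{4}\mathbb{E}[h\min(X,1)]$. Then $\mathbb{E}[\Delta\Phi+\Delta c(\alg)]\le0$ in every round. Telescoping as in \eqref{eq:cancel cost} and applying \cref{lemma:pc potential bounds} gives expected cost at most $\Phi(0)-\mathbb{E}\Phi(k)=O(\beta(\log n+\log^2k))$. Finally, $\beta\le 2\lpopt\le 2\,c(\opt)$ by guess-and-double, and polynomial running time is inherited from \cref{alg:nwsf}.
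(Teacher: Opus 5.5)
Your treatment of the $\neg\Xi^t$ rounds, the per-round cost bound $4h(s)$, and the estimate $\kappa^t_u=O(h(u)\log_2(10k))$ for terminals with $\Xi^t(u)$ are fine. You also correctly locate the crux: the prefactor in \eqref{eq:change in cover not preconnected pc} can be small when terminals failing $\Xi^t(u)$ dominate $\rho^{t-1}$. But the resolution you propose does not work.

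The claim that the prefactor is $\Omega(1)$ whenever $\sum_{u:\Xi^t(u)}h(u)\ge\beta/k$ is false. Take one terminal with $d^t=\pi=\beta/2$, so $\Xi^t$ holds and $h=\beta/4$. Add $m=k/2$ terminals with $d^t(u)=\beta/t\le\pi_u$, so $\neg\Xi^t(u)$. Each of the latter has $\kappa^t_u\ge 2^{\ell+3}>4d^t(u)$. Hence $\rho^{t-1}>2k\beta/t$, and the prefactor is below $t\log_2(10k)/(8k)$, which is $o(1)$ at $t=\sqrt k$.

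More fundamentally, in this regime the inequality you aim for cannot be closed, because you prove the amortized bound \emph{conditioned on} $\Xi^t$. Conditioned on $\Xi^t$, the positive term $C_1\frac{e^4-1}{4}h(s)\min(X^t(s),1)$ can be $\Theta(\beta)$. Meanwhile \cref{lemma:pc change in cover} only guarantees an $o(\beta)$ cover drop, and the last learning term decreases by only $O(\beta/t)$. Your fallback case $\sum_{u:\Xi^t(u)}h(u)<\beta/k$ forces at most one such terminal, so it is essentially vacuous and never reaches this regime.

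The missing idea, which is how the paper proceeds, has two parts. First, split on the current state by comparing the two $h$-masses. Second, in the bad case, average over the uniformly random arrival $s^t$ rather than conditioning on $\Xi^t$.

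\emph{Case 1:} $\sum_{u\in U^{t-1}:\Xi^t(u)}h(u)\ge\sum_{u\in U^{t-1}:\neg\Xi^t(u)}h(u)$. Then $\rho^{t-1}$ is, up to lower-order terms, at most $O(\log_2(10k))$ times the total $h$-mass of $U^{t-1}$. That is $O(\log_2(10k))\sum_{u:\Xi^t(u)}h(u)$, so the prefactor is a constant and your choice of $C_2$ goes through.

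\emph{Case 2:} the reverse inequality. Since $h(u)\le\beta/t$ whenever $\neg\Xi^t(u)$, the unconditional expected cost of $\Xi$-arrivals plus their positive learning term is at most
\[
\frac{4+C_1(e^4-1)/4}{|U^{t-1}|}\sum_{u:\Xi^t(u)}h(u)<\frac{4+C_1(e^4-1)/4}{|U^{t-1}|}\sum_{u:\neg\Xi^t(u)}h(u)\le\bigl(4+C_1\tfrac{e^4-1}{4}\bigr)\frac{\beta}{t}.
\]
Together with the at most $\beta/t$ cost of $\neg\Xi$-arrivals, this is paid by the decrease $C_1\cdot 3(e^4+5)\beta/t$ of the last learning term, while $\Phi_C$ does not increase. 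The paper credits this decrease only on $\neg\Xi^t$ arrivals and uses $\Pr[\neg\Xi^t]\ge 1/3$.

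Only with this averaging does the per-round amortized inequality hold in every round.
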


\begin{proof}
Let $c(\alg(t))$ be the cost paid by \cref{alg:nwsf} up to and including time $t$.
In every round $t$ where \(\Xi^t\) holds, the expected cost of the sampled nodes $R^t$ is at most $\frac{d^t(s)\cdot \pi_s}{d^t(s)+ \pi_s} \cdot \sum_v c_v \cdot x_v^{t-1} / \beta = \frac{d^t(s)\cdot \pi_s}{d^t(s)+ \pi_s}$ (by \cref{invariant:inner product beta}).
In these rounds, the algorithm also pays in expectation at most an additional \(4 \cdot \frac{d^t(s)\cdot \pi_s}{d^t(s)+ \pi_s}\).
Hence the total expected cost in such rounds is at most $5 \cdot  \frac{d^t(s)\cdot \pi_s}{d^t(s)+ \pi_s}$.

First assume that 
\begin{equation}\label{eq:pc assumption not preconnected}
    \sum_{\substack{s \in U^{t-1}: \\ \Xi^t(s)}} \frac{d^t(s) \cdot \pi_s}{d^t(s) + \pi_s} \geq \sum_{\substack{s \in U^{t-1}: \\ \neg \Xi^t(s)}} \frac{d^t(s) \cdot \pi_s}{d^t(s) + \pi_s} \text{.}
\end{equation}
Then since
\begin{align*}
    \frac{d^t(s) \cdot \pi_s}{d^t(s) + \pi_s} \cdot \log_2 (10k)   &\geq \frac{\pi_s}{2} \cdot \log_2 (10k)  \\
    &\geq \frac{\pi_s}{2} \cdot \log_2 \left(10 \cdot \frac{d^t(s)}{\pi_s} \right) \\
    &\geq \frac{\log_2\left( \frac{38}{37} \right)}{2} \cdot \frac{\pi_s}{\log_2\left( \frac{38}{37} \right)}\cdot \log_2 \left( 10 \cdot \frac{d^t(s)}{\pi_s} \right)\\
    &\geq \frac{\log_2\left( \frac{38}{37} \right)}{2} \cdot \frac{\pi_s}{\log_2 \left( \frac{38}{37} \right)} \cdot \log_2 \left( \frac{\kappa^t_{s_0} + \kappa^t_{s_1} + 2^{\ell+3}}{\pi_s} \right) \\
    &= \frac{\log_2\left( \frac{38}{37} \right)}{2} \cdot \kappa_s^t \text{,}
\end{align*}
we have by \cref{lemma:pc change in cover}
\begin{align*}
    &\expectover{s^t, R^t}*{\Phi_C(t) - \Phi_C(t-1) \,\middle|\, x^{t-1}, F^{t-1}, U^{t-1}, \Xi^t} \\
    &\leq -\frac{1 - e^{-1}}{38} \cdot \log_2 (10k) \cdot \frac{\sum_{s \in U^{t-1}:\Xi^t(s)} \frac{d^t(s) \cdot \pi_s}{d^t(s) + \pi_s}}{\rho^{t-1} + \frac{\beta}{k}} \cdot \expectover{u \sim U^{t-1}}*{\frac{d^t(u) \cdot \pi_s}{d^t(u) + \pi_u} \cdot \min(X^t(u), 1) \,\middle|\, x^{t-1}, F^{t-1}, \Xi^t} \\
    &\leq -\frac{1 - e^{-1}}{76} \cdot \frac{\sum_{s \in U^{t-1}} \frac{d^t(s) \cdot \pi_s}{d^t(s) + \pi_s}}{\rho^{t-1} + \frac{\beta}{k}} \cdot \expectover{u \sim U^{t-1}}*{\frac{d^t(u) \cdot \pi_s}{d^t(u) + \pi_u} \cdot \min(X^t(u), 1) \,\middle|\, x^{t-1}, F^{t-1}, \Xi^t} \\
    &\leq -\frac{1 - e^{-1}}{152} \cdot \log_2\left( 38/37 \right) \cdot \frac{\sum_{s \in U^{t-1}} \kappa_s^t}{\rho^{t-1} + \frac{\beta}{k}} \cdot \expectover{u \sim U^{t-1}}*{\frac{d^t(u) \cdot \pi_s}{d^t(u) + \pi_u} \cdot \min(X^t(u), 1) \,\middle|\, x^{t-1}, F^{t-1}, \Xi^t} \\
    &\leq -\frac{1 - e^{-1}}{304} \cdot \log_2\left( 38/37 \right) \cdot \expectover{u \sim U^{t-1}}*{\frac{d^t(u) \cdot \pi_s}{d^t(u) + \pi_u} \cdot \min(X^t(u), 1) \,\middle|\, x^{t-1}, F^{t-1}, \Xi^t} \text{,}
\end{align*}
where the last line follows as \(\kappa_u^t \geq \beta/k\).
By taking \(C_1 = 48\) and \(C_2 = 3648e(1+e+e^2+e^3)/\log_2(38/37)\), we have that under \eqref{eq:pc assumption not preconnected},
\begin{equation}
    \expectover{s^t, R^t}*{\Phi(t) - \Phi(t-1) + c(\alg(t)) - c(\alg(t-1)) \,\middle|\, s^1,\ldots,s^{t-1}, R^1,\ldots,R^{t-1}} \leq 0 \label{eq:cancel cost 1} \text{.}
\end{equation}

Now assume that
\begin{equation}\label{eq:pc assumption preconnected}
    \sum_{s \in U^{t-1}:\Xi^t(s)} \frac{d^t(s) \cdot \pi_s}{d^t(s) + \pi_s} < \sum_{s \in U^{t-1}:\neg \Xi^t(s)} \frac{d^t(s) \cdot \pi_s}{d^t(s) + \pi_s} \text{.}
\end{equation}
Then
\begin{align*}
    &\expectover{s^t, R^t}*{c(\alg(t)) - c(\alg(t-1)) \,\middle|\, s^1,\ldots,s^{t-1}, R^1,\ldots,R^{t-1}} \\
    &\leq \frac{1}{\abs{U^{t-1}}} \left( \sum_{s \in U^{t-1} : \Xi^t(s)} 5 \cdot \frac{d^t(s)\cdot \pi_s}{d^t(s)+ \pi_s} + \sum_{s \in U^{t-1} : \neg\Xi^t(s)} \frac{\beta}{t} \right) \\
    &\leq \frac{1}{\abs{U^{t-1}}} \left( \sum_{s \in U^{t-1} : \neg \Xi^t(s)} 5 \cdot \frac{d^t(s)\cdot \pi_s}{d^t(s)+ \pi_s} + \sum_{s \in U^{t-1} : \neg\Xi^t(s)} \frac{\beta}{t} \right) \\
    &\leq\frac{1}{\abs{U^{t-1}}} \left( \sum_{s \in U^{t-1}} 6 \cdot \frac{\beta}{t} \right) \\
    &= 6 \cdot \frac{\beta}{t} \text{.}
\end{align*}
Furthermore, since \(\frac{d^t(s) \cdot \pi_s}{d^t(s) + \pi_s} \leq \pi_s \leq \beta/t\) if \(\neg \Xi^t(s)\) and \(\frac{d^t(s) \cdot \pi_s}{d^t(s) + \pi_s} \geq \frac{\pi_s}{2} > \frac{\beta}{2t}\) if \(\Xi^t(s)\), \eqref{eq:pc assumption preconnected} implies \(\Pr[\neg \Xi^t] \geq 1/3\), and hence 
\begin{align*}
    &\expectover{s^t, R^t}*{\Phi_L(t) - \Phi_L(t-1) \,\middle|\, x^{t-1}, F^{t-1}, U^{t-1}} \\
    &\leq \Pr[\neg \Xi^t] \cdot \expectover{s^t, R^t}*{\Phi_L(t) - \Phi_L(t-1) \,\middle|\, x^{t-1}, F^{t-1}, U^{t-1}, \neg \Xi^t} \\
    &+ \Pr[\Xi^t] \cdot \expectover{s^t, R^t}*{\Phi_L(t) - \Phi_L(t-1) \,\middle|\, x^{t-1}, F^{t-1}, U^{t-1}, \Xi^t} \\
    &\leq - \Pr[\neg \Xi^t] \cdot 3\cdot(e^4+5) \cdot \frac{\beta}{t} + \frac{1}{\abs{\{s \in U^{t-1} : \Xi^t(s)\}}} \cdot \frac{e^4 - 1}{4} \cdot \sum_{s \in U^{t-1}:\Xi^t(s)} \frac{d^t(s) \cdot \pi_s}{d^t(s) + \pi_s} \\
    &\leq -(e^4+5) \cdot \frac{\beta}{t} + \frac{3}{\abs{U^{t-1}}} \cdot \frac{e^4 - 1}{4} \cdot \sum_{s \in U^{t-1}:\neg \Xi^t(s)} \frac{d^t(s) \cdot \pi_s}{d^t(s) + \pi_s} \\
    &\leq -(e^4+5) \cdot \frac{\beta}{t} + \frac{3}{\abs{\{s \in U^{t-1} : \neg \Xi^t(s)\}}} \cdot \frac{e^4 - 1}{4} \cdot \sum_{s \in U^{t-1}: \neg \Xi^t(s)} \frac{d^t(s) \cdot \pi_s}{d^t(s) + \pi_s} \\
    &\leq -(e^4+5) \cdot \frac{\beta}{t} + (e^4 - 1) \cdot \frac{\beta}{t} \\
    &\leq -6 \cdot \frac{\beta}{t} \text{.}
\end{align*}
Since \(\expectover{s^t, R^t}*{\Phi_C(t) - \Phi_C(t-1) \,\middle|\, x^{t-1}, F^{t-1}, U^{t-1}} \leq 0\), we have that under \eqref{eq:pc assumption preconnected},
\begin{equation}
    \expectover{s^t, R^t}*{\Phi(t) - \Phi(t-1) + c(\alg(t)) - c(\alg(t-1)) \,\middle|\, s^1,\ldots,s^{t-1}, R^1,\ldots,R^{t-1}} \leq 0 \label{eq:cancel cost 2} \text{.}
\end{equation}
By repeatedly applying \ref{eq:cancel cost 1} or \ref{eq:cancel cost 2} for all $t$, we obtain 
\begin{align}
    \expectover{s, R}*{\Phi(k) - \Phi(0) + c(\alg(k)) - c(\alg(0))} &\leq 0 \nonumber \\
    \expectover{s, R}*{c(\alg(k))} &\leq \Phi(0) + c(\alg(0)) - \expectover{s, R}*{\Phi(k)} \nonumber \\
    &\leq  O(\beta \cdot (\log n + \log^2 k)) + 0 - (-O( \beta \cdot \log^2 k)) \label{eq:pc use potential bounds} \text{,}
\end{align}
where \ref{eq:pc use potential bounds} follows from the fact that $c(\alg(0)) = 0$, together with the bounds on $\Phi$ established in \cref{lemma:pc potential bounds}. We conclude that 
$\expectover{s, R}*{c(\alg(k))} = O(\beta \cdot (\log n + \log^2 k))$.
\end{proof}
    \section{Deferred Proofs}\label{sec:deferred}

\subsection{Proofs from \cref{sec:st_tree,sec:st_forest}}

\support*

\begin{proof}
    We assume for contradiction there is some $u$ such that $c_u > \lpopt$ where $x^*_u > 0$. Let $d = x^*_u$ for convenience. Since the total cost of \opt is $\sum_v c_vx^*_v = \lpopt$, then $d$ must be less than 1. Furthermore, $\sum_{v \neq u} c_vx^*_v < \lpopt(1-d)$.  If we create a new solution $x'_v = x^*_v/(1-d)$ for all $v$ except $x'_u$ which we set to be 0. Any terminal pair \(s \in T\) must be fulfilled to degree at least $1-d$ without $u$, otherwise the original solution would not be feasible, so scaling up all other paths by $\frac{1}{1-d}$ will make sure all pairs are satisfied. Furthermore, 
\begin{equation*}
    \frac{1}{1-d} \cdot \sum_{v \neq u} c_vx^*_v < \lpopt(1-d) \cdot \frac{1}{1-d} = \lpopt    
\end{equation*}
So our solution is strictly better than \opt, which is a contradiction.
\end{proof}

We require a structural lemma, which formalizes the intuition that either \opt fractionally uses a significant number of nodes in $\Gamma^t(s)$, or $D^t(s)$ contains sufficient LP cost to pay for a greedy path.

\begin{restatable}[Ball Covering Lemma]{lemma}{alternate}
    \label{lemma:alternate}
    Let $s = (s_0, s_1) \in T$ be a terminal pair, $F \subseteq V$ be our current solution, and let 
    $\gamma < d^t(s)$ 
    be a constant.
    If $A, B \subseteq V$ covers the ball $B_{G / F}(s_i, \gamma)$, i.e. $B_{G / F}(s_i,\gamma) \subseteq A \cup B $, then
    \begin{equation}\label{eq:weight in cover}
        \sum_{v \in A} c_v x_v^* \geq \frac{\gamma}{2} \quad \text{or} \quad \sum_{v \in B} c_v x_v^* \geq \frac{\gamma}{2}.
    \end{equation}
    Moreover, the statement still holds if we replace any of the node weights $c_u$ with $c_u' \geq \gamma$.
\end{restatable}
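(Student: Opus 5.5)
Since $A \cup B$ covers the ball, it suffices to show that $\sum_{v \in B_{G/F}(s_i,\gamma)} c_v x^*_v \geq \gamma$. Then one of $A$ or $B$ carries at least half of this LP cost, because the weights $c_v x^*_v$ are nonnegative and $\sum_{A} + \sum_{B} \geq \sum_{A\cup B}$. So the plan is to lower bound the LP cost inside the ball by $\gamma$ using the flow variables, in the same way as the proof of \cref{assmpt:beta-n-bound} and the charging step in the proof of \cref{thm:offline}.

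\textbf{Each flow path crosses the ball at cost at least $\gamma$.} Fix the pair $s$ and consider any path $P \in \mathcal{P}_s$ with $f_P>0$. It starts at $s_i$ and ends at $s_{1-i}$. Since $\gamma < d^t(s) = d_{G/F}(s_0,s_1)$, the other endpoint lies outside $B_{G/F}(s_i,\gamma)$. Walk along $P$ from $s_i$ and let $w$ be the first vertex with $d_{G/F}(s_i,w) > \gamma$. Its predecessor $u$ is in the ball, so $d(s_i,u) + c_u \geq d(s_i,w) > \gamma$ (measured in $G/F$). Hence the prefix of $P$ up to and including $u$ lies inside the ball, and its cost in $G/F$ is at least $d(s_i,u) + c_u$. (Terminals have cost zero, so the cost of $s_i$ does not matter.) Nodes of $F$ cost $0$ in $G/F$, so their true cost is at least their $G/F$ cost. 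Therefore $\sum_{v \in P \cap B} c_v \geq \gamma$ with the original weights.

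\textbf{Summing over the flow.} We get
\[
\sum_{v\in B} c_v x^*_v \;\geq\; \sum_{v\in B} c_v \sum_{P\ni v} f_P \;=\; \sum_P f_P\, c(P\cap B) \;\geq\; \gamma \sum_P f_P \;\geq\; \gamma,
\]
using \eqref{eq:vertex-capacity} and \eqref{eq:path-demand}. This yields \eqref{eq:weight in cover}.

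\textbf{The ``moreover'' clause.} Suppose some weights $c_u$ are replaced by $c'_u \geq \gamma$. Repeat the argument, counting prefix costs with the modified weights and keeping the ball defined in $G/F$ with the original metric. For each path $P$, either the prefix inside the ball contains a node with modified weight, in which case that node alone contributes at least $\gamma$; or it contains none, in which case its modified cost equals its original cost, which is at least $\gamma$.

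\textbf{Main obstacle.} The step needing most care is the exit argument. One must keep straight the convention that $d$ excludes endpoint costs, that the boundary condition is exactly $d(s_i,u) + c_u \ge$ radius, and that distances are measured in $G/F$ while the LP costs are the original $c_v$ (or the modified $c'_v$). Once that is settled, the rest is linear averaging.
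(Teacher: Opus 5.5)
Your proposal is correct and follows essentially the same route as the paper: bound each flow path's cost inside the ball by $\gamma$, sum against the flow using the two LP constraints to get LP cost at least $\gamma$ in the ball, then split between $A$ and $B$ by averaging, with the same case split (the path contains a reweighted node or it does not) for the ``moreover'' clause. Your exit argument, taking the first vertex $w$ on the path outside the ball and its predecessor $u$ with $d(s_i,u)+c_u \geq d(s_i,w) > \gamma$, supplies the justification for the per-path bound, which the paper only asserts.
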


\begin{proof}
Since $\gamma < d^t(s)$, every path $P \in \mathcal{P}_{s}$ contains vertices in $B(s_i, \gamma)$ whose total cost is more than $\gamma$.
Formally,
\begin{equation}\label{eq:pathsum}
    \sum_{v \in P \cap B(s_i,\gamma)} c_v \geq \gamma \text{.}
\end{equation}
 
We then multiply by $f_P$ and sum over all paths. From our LP, we know from constraint~\eqref{eq:path-demand} that $\sum f_P \geq 1$, so
\begin{equation*}
    \sum_{P \in \mathcal{P}_{s}} f_P \left(\sum_{v \in P \cap B(s_i,\gamma)} c_v\right) \geq \gamma \sum_P f_P \geq \gamma.
\end{equation*}
We rearrange the double sum, and apply the LP constraint~\eqref{eq:vertex-capacity} which states $\sum_{P \ni v} f_P \leq x_v^*$ so that
\begin{equation*}
    \gamma \leq\sum_{P \in \mathcal{P}_{s}}\ \sum_{v \in P \cap B(s_i,\gamma)} f_P \cdot c_v =\sum_{v \in B(s_i,\gamma)} c_v \left(\sum_{\substack{P \ni v}} f_P\right) \leq \sum_{v \in B(s_i,\gamma)} c_v x_v^* \text{.}
\end{equation*}
Finally, we can split the sum into the nodes in each set in the cover. For any cover $A \cup B \supseteq B(s_i,\gamma)$,
\begin{equation*}
    \underbrace{\sum_{v \in A} c_v x_v^*}_{(A)} + \underbrace{\sum_{v \in B\backslash A} c_v x_v^*}_{(B)} \geq \gamma.
\end{equation*}
By an averaging argument, since $(A) + (B) \geq \gamma$
\begin{equation*}
    \sum_{v \in A} c_v x_v^*  \geq \gamma/2 \quad
    \text{or} \quad
    \sum_{v \in B} c_v x_v^* \geq \sum_{v \in B\backslash A} c_v x_v^*\geq \gamma/2 \text{.} 
\end{equation*}
This proves \eqref{eq:weight in cover}.
If we do replace any of the node weights with $c'_u \geq \gamma$, the only step this affects is \eqref{eq:pathsum}, as every other step only relies on rearranging the sums. 
Looking at \eqref{eq:pathsum}, we see that if we replace $c_u$ with $c_u' \geq \gamma$, the inequality still holds, as any sum containing $c_u$ follows $\sum_v c_v \geq  c_u' \geq \gamma$, and fulfills the inequality on its own. 
Every other step follows exactly in the same manner.
\end{proof}

\potentialbounds*

\begin{proof}
We know that \(\supp{x^*} \subseteq \supp{x^0}\) by \cref{fact:support}.
We then bound the initial KL-divergence term by 
\begin{align*}
    \wKL{x^*}{x^0} &= \sum_{v} c_v \cdot x_v^* \cdot \log \left( x_v^* \frac{c_v n'}{\beta} \right) + \sum_v c_v (x_v^0 - x_v^*) \\
    &\leq \sum_{v} c_v \cdot x_v^* \cdot \log \left( n' \right) + \sum_v c_v \cdot x_v^0 - \sum_v c_v \cdot x_v^* \\
    &\leq \beta(\log n + 1) \text{,}
\end{align*}
where we used \(c_v \cdot x_v^* \leq \beta\) and \(\sum_v c_v \cdot x_v^* = \lpopt \leq \beta = \sum_v c_v \cdot x_v^0\).

Turning to the second part of the \(\Phi_L(0)\) term, we compute
\begin{equation*}
    \sum_{\ell=0}^{\log_2 n} \left( \lpopt - \sum_{\substack{i \in [t] : \\ \ell^i = \ell}} \sum_{v \in D^i(s^i)} c_v \cdot x_v^* \right) = (\log_2 n+1) \cdot \lpopt =  O(\beta \cdot \log n) \text{.}
\end{equation*}

The third part of the \(\Phi_L(0)\) term is trivially 0.
We now turn to the cover term.
We observe
\begin{equation*}
    \kappa_{s}^0 < 2^{\ell_{s}^0 + 2} + 2^{\ell_{s}^0 + 3} \leq 12 \cdot d^0(s) \text{,}
\end{equation*}
by our definition of \(\ell_{s^t}^0\) and \(\kappa_{s^t}^0\).
Since \(d_{s^t}^0 \leq \beta\) for all clients \(s^t\),
\begin{equation*}
    \beta \cdot \log \left( \frac{\rho^t}{\beta} + \frac{1}{n} \right) = O(\beta \cdot \log k)  = O(\beta \cdot \log n)  \text{.}
\end{equation*}
We conclude with the lower bound in \(\Phi\).
The first two parts in the learning term of \(\Phi\) are nonnegative since the \(D^i\) are disjoint when fixing \(\ell\) by \cref{eq:balldef}, and so if we can show that the third part is nonegative, it will follow that
\begin{equation*}
    \Phi(t) \geq C_2 \cdot \beta \cdot \log \left( \frac{\rho^t}{\beta} + \frac{1}{n} \right) \geq -O( \beta \cdot \log n) \text{,} 
\end{equation*}
since \(\rho^t \geq 0\).
To show that the third part of the \(\Phi_L\) term is nonnegative, it is enough to show that
\begin{equation}\label{eq:appendix potential bound explanation}
    \sum_{i \in [k] : \Psi^i} 2^{\ell^i} \geq \sum_{i \in [k] : \neg \Psi^i} 2^{\ell^i} \text{.}
\end{equation}
Fix \(\ell\).
If \(\neg \Psi^t\) holds for some \(t \in [k]\) with \(\ell^t = \ell\), we perform the augmented greedy step.
In this case, since \(d^t(s)/2 > \kappa_{s_0}, \kappa_{s_1}\), the number of components containing terminals from \(T^{t}_{\ell}\) in \(F^t\) must decrease compared to the number of components containing terminals from \(T^{t-1}_{\ell}\) in \(F^{t-1}\).
If \(\Psi^{t}\) holds for some \(t \in [k]\) with \(\ell^{t} = \ell\), the number of components containing terminals from \(T^{t}_{\ell}\) in \(F^{t}\) can increase by at most one compared to the number of components containing terminals from \(T^{t-1}_{\ell}\) in \(F^{t-1}\).
In steps where \(\ell^{t} \neq \ell\), we observe \(T^{t}_{\ell}\) = \(T^{t-1}_{\ell}\), and hence the number of components containing such vertices does not increase as we cannot return purchased nodes.
Inequality~\eqref{eq:appendix potential bound explanation} follows.
\end{proof}

\changelearn*

\begin{proof}
Inequality~\eqref{eq:change in learn not preconnected} is straightforward if \(\neg \Psi^t\) holds.
Since \(x^t = x^{t-1}\) the KL-divergence part of \(\Phi_L\) is trivial, and since \(D^t(s) = \emptyset\), this second part is also trivial.
Since \(2^{\ell-2} \geq d^t(s)/8\), the inequality follows

We now prove \ref{eq:change in learn not preconnected} by breaking into cases when \(\Psi^t\) holds.
If \(\Lambda^t\) does not hold $(X^t(s) \geq 1)$, then \(x^t = x^{t-1}\), and since \((e^4 - 1) / 4 > 1\), inequality~\eqref{eq:change in learn not preconnected} holds.
Hence we will assume \(\Lambda^t\) holds, and the law of total expectation will complete the proof.

Expanding definitions, we find 
\begin{align}
    &\expectover{s^t, R^t}*{\wKL{x^{*}}{x^t} - \wKL{x^{*}}{x^{t-1}} \,\middle|\, x^{t-1}, F^{t-1}, U^{t-1}, \Psi^t, \Lambda^t} \nonumber \\
    &= \expectover{s \sim U^{t-1},R^t}*{\wKL{x^{*}}{x^t} - \wKL{x^{*}}{x^{t-1}} \,\middle|\, x^{t-1}, F^{t-1}, \Psi^t, \Lambda^t} \nonumber \\
    &= \expectover{s \sim U^{t-1}}*{\sum_v c_v \cdot x_v^* \cdot \log \frac{x_v^{t-1}}{x_v^t} \,\middle|\, x^{t-1}, F^{t-1}, \Psi^t, \Lambda^t} \nonumber \\
    &= \expectover{s \sim U^{t-1}}*{\sum_{v \not \in \Gamma^{t}(s)} c_v \cdot x_v^* \cdot \log Z^t + \sum_{v \in \Gamma^{t}(s)} c_v \cdot x_v^* \cdot \log \frac{Z^t}{\exp(d^t(s) / c_v)} \,\middle|\, x^{t-1}, F^{t-1}, \Psi^t, \Lambda^t} \nonumber \\
    &= \expectover{s \sim U^{t-1}}*{ \sum_v c_v \cdot x_v^* \cdot \log Z^t - \sum_{v \in \Gamma^{t}(s)} c_v \cdot x_v^* \cdot \log \exp(d^t(s) / c_v) \,\middle|\, x^{t-1}, F^{t-1}, \Psi^t, \Lambda^t} \nonumber \\
    &\leq \expectover{s \sim U^{t-1}}*{ \beta \cdot \log Z^t - \sum_{v \in \Gamma^{t}(s)} x_v^* \cdot d^t(s) \,\middle|\, x^{t-1}, F^{t-1}, \Psi^t, \Lambda^t} \label{eq:opt bound} \\
    &\leq \expectover{s \sim U^{t-1}}*{
    \begin{array}{ll}
        \displaystyle \beta \cdot \log \left( \sum_{v \in \Gamma^{t}(s)} \frac{c_v}{\beta} \cdot x_v^{t-1} \cdot \exp(d^t(s) / c_v) + \sum_{v \not \in \Gamma^{t}(v)} \frac{c_v}{\beta} \cdot x_v^{t-1}   \right) \\
        - \displaystyle \sum_{v \in \Gamma^{t}(s)} x_v^* \cdot d^t(s)
    \end{array}
    \,\middle|\, x^{t-1}, F^{t-1}, \Psi^t, \Lambda^t
    } \label{eq:expand Z} \text{,}
\end{align}
where step~\eqref{eq:opt bound} follows from \(\sum c_v \cdot x_v^* \leq \beta\) and step~\eqref{eq:expand Z} follows by the definition of \(Z^t\).
We can then further bound \ref{eq:expand Z} by 
\begin{equation}
    \leq \expectover{s \sim U^{t-1}}*{ \beta \cdot \log \left( \sum_{v} \frac{c_v}{\beta} \cdot x_v^{t-1} + \frac{e^4 - 1}{4} \cdot \frac{d^t(s)}{\beta} \cdot X^t(s)  \right) - \sum_{v \in \Gamma^{t}(s)} x_v^* \cdot d^t(s) \,\middle|\, x^{t-1}, F^{t-1}, \Psi^t, \Lambda^t} \label{eq:e^a approx} \text{,}
\end{equation}
where we use the approximation \(e^a \leq 1 + (e^4 - 1) \cdot a / 4\) for \(a \in [0,4]\), which applies since \(v \in \Gamma^{t}(s)\) implies that there exists \(i\) such that \(c_v \geq \kappa_{s_i}^t / 2 \geq d^t(s) / 4\) by definition.
Since \(\sum c_v \cdot x_v^{t-1} = \beta\) and \(\log 1+ y \leq y\), we bound \ref{eq:e^a approx} by 
\begin{align}
    &\leq \expectover{s \sim U^{t-1}}*{ \frac{e^4 - 1}{4} \cdot d^t(s) \cdot X^t(s) - \sum_{v \in \Gamma^{t}(s)} x_v^* \cdot d^t(s) \,\middle|\, x^{t-1}, F^{t-1}, \Psi^t, \Lambda^t} \nonumber \\
    &\leq \expectover{s \sim U^{t-1}}*{ \frac{e^4 - 1}{4} \cdot d^t(s) \cdot \min(X^t(s), 1) - \sum_{v \in \Gamma^{t}(s)} x_v^* \cdot d^t(s) \,\middle|\, x^{t-1}, F^{t-1}, \Psi^t, \Lambda^t} \label{eq:Lambda holds} \text{,}
\end{align}
where step~\eqref{eq:Lambda holds} follows by the definition of \(\Lambda^t\).

Combining \cref{eq:Lambda holds} with the remaining parts of \(\Phi_L\) gives 
\begin{align}
    &\expectover{s^t, R^t}*{\Phi_L(t) - \Phi_L(t-1) \,\middle|\, x^{t-1}, F^{t-1}, U^{t-1}, \Psi^t, \Lambda^t} \nonumber \\
    &\leq \expectover{s \sim U^{t-1}}*{ \frac{e^4 - 1}{4} \cdot d^t(s) \cdot \min(X^t(s), 1) - \sum_{v \in \Gamma^{t}(s)} x_v^* \cdot d^t(s) - 8 \sum_{v \in D^t} c_v x_v^* + 2^{\ell-2} \,\middle|\, x^{t-1}, F^{t-1}, \Psi^t, \Lambda^t} \label{eq:learning bound before casework} \text{.}
\end{align}

We now want to bound this by applying \cref{lemma:alternate}.
To use this lemma, we need to know that $D^t$ and $\Gamma^t(s)$ form a cover of $B_{G/F^{t-1}}(s_0, 2^{\ell-3}) \eqqcolon B(s_0, 2^{\ell-3})$. 
First, note that $D^t$ contains all nodes in $B(s_0, 2^{\ell-3})$ which have not yet been claimed. This means the only nodes left to worry about are nodes in $B(s_0, 2^{\ell-3}) \setminus D^t$. 
Let $u \in B(s_0, 2^{\ell-3}) \setminus D^t$. 
Since $u$ is not in $D^t$, it was claimed on level $\ell$ in a previous round. 
Therefore, it must be distance at most $2^{\ell-3}$ from some previous terminal $s_0'$. 
If we bought node $u$, then $d_{G / (F^{t-1} \cup \{u\})}(s_0,s_0') \leq 2^{\ell-2}$. By assumption $\Psi^t$, we know $d_{G / F^{t-1}}(s_0,s_0') \geq d^t(s)/2 \geq 2^{\ell-1}$. 
Hence $u$ would reduce the cost of $\kappa^t_{s_0}$ by at least a factor of 2, meaning $u \in \Gamma^t(s)$. 
Thus, we can use \cref{lemma:alternate} on the cover $\{ D^t, \Gamma^t(s) \}$.

We then use the second part of \cref{lemma:alternate} to change the cost of all nodes in $\Gamma^t(s)$ to be $\frac{1}{8}d^t(s) \geq 2^{\ell-3}$. 
From this lemma, we get that either
\begin{equation*}
    8 \cdot\sum_{v \in D^t} c_v \cdot x_v^* \geq 8 \cdot 2^{\ell-3} /2 = 2^{\ell-1}
\end{equation*}
or
\begin{equation*}
    8 \cdot \sum_{v \in \Gamma^t(s)} \frac{1}{8} d^t(s) \cdot x_v^* \geq 8 \cdot 2^{\ell-3} /2 = 2^{\ell-1} \text{.}
\end{equation*}
Since we know $2^{\ell-1} \geq d^t(s)/4$, and both terms are non-positive, we can use this combined with the fact that \(2^{\ell} \leq d^t(s)\) to bound \eqref{eq:learning bound before casework} as
\begin{align}
    &\expectover{s^t, R^t}*{\Phi_L(t) - \Phi_L(t-1) \,\middle|\, x^{t-1}, F^{t-1}, U^{t-1}, \Psi^t, \Lambda^t} \nonumber \\
    &\leq \expectover{s \sim U^{t-1}}*{ \frac{e^4 - 1}{4} \cdot d^t(s) \cdot \min(X^t(s), 1) - \frac{d^t(s)}{8} \,\middle|\, F^{t-1}, \Psi^t, \Lambda^t} \label{eq:final line learn} \text{,}
\end{align}
which concludes the proof.
\end{proof}

\changecover*

\begin{proof}
We start by considering the expected change to \(\Phi_C\) over the randomness of the sampling.
Fix an arriving client \(\sigma\).
Expanding definitions,
\begin{align}
    &\expectover{R^t}*{\Phi_C(t) - \Phi_C(t-1) \,\middle|\, x^{t-1}, F^{t-1}, U^{t-1}, s = \sigma} \nonumber \\
    &= \beta \cdot \expectover{R^t}*{\log \left( \frac{\frac{\rho^t}{\beta} + \frac{1}{n}}{\frac{\rho^{t-1}}{\beta} + \frac{1}{n}} \right) \,\middle|\, x^{t-1}, F^{t-1}, U^{t-1}, s = \sigma} \nonumber \\
    &= \beta \cdot \expectover{R^t}*{\log \left( \frac{\rho^t + \frac{\beta}{n}}{\rho^{t-1} + \frac{\beta}{n}} \right) \,\middle|\, x^{t-1}, F^{t-1}, U^{t-1}, s = \sigma} \nonumber \\
    &= \beta \cdot \expectover{R^t}*{\log \left(1 -  \frac{\rho^{t-1} - \rho^t}{\rho^{t-1} + \frac{\beta}{n}} \right) \,\middle|\, x^{t-1}, F^{t-1}, U^{t-1}, s = \sigma} \nonumber \\
    &\leq -\beta \cdot \frac{1}{\rho^{t-1} + \frac{\beta}{n}} \cdot \expectover{R^t}*{\rho^{t-1} - \rho^t \,\middle|\, x^{t-1}, F^{t-1}, U^{t-1}, s = \sigma} \label{eq:cover log approx} \text{.}
\end{align}
Above, \ref{eq:cover log approx} follows from the approximation \(\log(1-y) \leq -y\).
Expanding the definition of \(\rho^t\), \ref{eq:cover log approx} is bounded by
\begin{align}
    &\leq -\beta \cdot \frac{1}{\rho^{t-1} + \frac{\beta}{n}} \cdot \expectover{R^t}*{\sum_{u \in U^{t-1}} (\kappa_{u}^{t} - \kappa_{u}^{t+1}) \,\middle|\, x^{t-1}, F^{t-1}, U^{t-1}, s = \sigma} \label{eq:rho monotone dec} \\
    &\leq -\beta \cdot \frac{1}{\rho^{t-1} + \frac{\beta}{n}} \cdot \expectover{R^t}*{\sum_{u \in U^{t-1}} \frac{\kappa_{u}^{t}}{38} \cdot \mathds{1}\{\Gamma^{t}(u) \cap R^t \neq \emptyset\} \,\middle|\, x^{t-1}, F^{t-1}, U^{t-1}, s = \sigma} \label{eq:kappa ratio} \\
    &=  -\beta \cdot \frac{1}{\rho^{t-1} + \frac{\beta}{n}} \cdot \sum_{u \in U^{t-1}} \frac{\kappa_u^{t}}{38} \cdot  \probover{R^t}*{ \mathds{1}\{\Gamma^{t}(u) \cap R^t \neq \emptyset\} \,\middle|\, x^{t-1}, F^{t-1}, U^{t-1}, s = \sigma} \nonumber \\
    &\leq  -\beta \cdot \frac{1}{\rho^{t-1} + \frac{\beta}{n}} \cdot \sum_{u \in U^{t-1}} \frac{\kappa_u^{t}}{38} \cdot (1 - e^{-1}) \cdot \min \left( \frac{d^t(s)}{\beta} \cdot \sum_{v \in \Gamma^{t}(u)} x_v^{t-1}, 1   \right) \label{eq:convexity inequality} \\
    &\leq - \frac{1 - e^{-1}}{38} \cdot d^t(s) \cdot \frac{\abs{U^{t-1}}}{\rho^{t-1} + \frac{\beta}{n}} \cdot \expectover{u \sim U^{t-1}}*{\kappa_u^{t} \cdot \min(X^t(u), 1) \,\middle|\, x^{t-1}, F^{t-1}} \label{eq:rewrite expectation} \\
    &\leq - \frac{1 - e^{-1}}{456} \cdot \kappa_s^t \cdot \frac{\abs{U^{t-1}}}{\rho^{t-1} + \frac{\beta}{n}} \cdot \expectover{u \sim U^{t-1}}*{d^{t}(u) \cdot \min(X^t(u), 1) \,\middle|\, x^{t-1}, F^{t-1}} \label{eq:kappa vs d} \text{.}
\end{align} 
step~\eqref{eq:rho monotone dec} follows since the terms of \(\rho\) are monotone decreasing when fixing a terminal.
step~\eqref{eq:kappa ratio} follows since \(2^{\ell+3} \leq \kappa_u^t < 3 \cdot 2^{\ell+2}\), so when we sample something in \(\Gamma\), we either stay on the same level and decrease some \(\kappa_{u_i}^t\) satisfying \(2^{\ell - 1} \leq d^t(u)/2 \leq \kappa_{u_i}^t \leq d^t(u)\) by at least factor of \(1/2\), dropping \(\kappa_u\) by a factor of at least \(37/38\), or we decrease \(\ell\), dropping \(\kappa_u\) by a factor of at least \(3/4\).
step~\eqref{eq:convexity inequality} is due to the fact that each node \(v\) is sampled independently with probability \(\min(d^t(s) \cdot x_v^{t-1} / \beta, 1)\), so the probability any given client \(u \in U^{t-1}\) gets at least one node from \(\Gamma^{t}(u)\) is 
\begin{align}
    1 - \prod_{v \in \Gamma^{t}(u)} \left(1 - \min\left(\frac{d^t(s) \cdot x_v^{t-1}}{\beta}, 1\right)\right) &\geq 1 - \exp \left( -\min\left(\frac{d^t(s)}{\beta} \sum_{v \in \Gamma^{t}(u) } x_v^{t-1}, 1\right) \right) \nonumber \\
    &\geq (1 - e^{-1}) \cdot \min\left(\frac{d^t(s)}{\beta} \sum_{v \in \Gamma^{t}(u) } x_v^{t-1}, 1\right) \label{eq:exponential convexity} \text{.}
\end{align}
Above, \ref{eq:exponential convexity} follows from convexity of the exponential.
step~\eqref{eq:rewrite expectation} then follows by rewriting the sum as an expectation and using the fact that \(d^t(s) / \beta \leq 1\).
Finally, step~\eqref{eq:kappa vs d} follows from recalling \(2^{\ell} \leq d^t(s) < 2^{\ell+1} < 2^{\ell+3} \leq \kappa_s^t < 3 \cdot 2^{\ell+2}\).

Taking the expectation of \ref{eq:rewrite expectation} over \(s \sim U^{t-1}\), and using the fact that \(\expectover{s \sim U^{t-1}}*{\kappa_s^{t} \,\middle|\, x^{t-1}, F^{t-1}} = \rho^{t-1} / \abs{U^{t-1}}\), the expected change in \(\Phi_C\) becomes 
\begin{align}
    &\expectover{s^t, R^t}*{\Phi_C(t) - \Phi_C(t-1) \,\middle|\, x^{t-1}, F^{t-1}, U^{t-1}} \nonumber \\
    &\leq -\frac{1 - e^{-1}}{456} \cdot \frac{\rho^{t-1}}{\rho^{t-1} + \frac{\beta}{n}} \cdot \expectover{s \sim U^{t-1}}*{d^t(s) \cdot \min(X^t(s), 1) \,\middle|\, x^{t-1}, F^{t-1}} \nonumber \\
    &\leq -\frac{1 - e^{-1}}{912} \cdot \expectover{s \sim U^{t-1}}*{d^t(s) \cdot \min(X^t(s), 1) \,\middle|\, x^{t-1}, F^{t-1}} \label{eq:min con cost} \text{,}
\end{align}
where in \ref{eq:min con cost} we use the fact that any connection which we must purchase must cost at least $\beta/n$, as we already bought all connections which cost less.
Since this is the case, we know that \(\rho^{t-1} \geq \kappa_{s}^{t} \geq d^t(s) \geq \beta/n\), and so \(\frac{\rho^{t-1}}{\rho^{t-1} + \frac{\beta}{n}} \geq \frac{1}{2}\).\end{proof}

\subsection{Proofs from \cref{sec:pcst_forest}}

\pcpotentialbounds*

\begin{proof}
By our work in the proof of \cref{lemma:potential_bounds}, we only need to inspect the two new terms of the learning potential and the slightly modified covering potential.
Since \(\sum_{s \in U^0}\pi_s \cdot z_s^* \leq \beta\) and \(\sum_{i=1}^k \beta/i = O(\beta \cdot \log k)\), the upper bound holds for the learning potential, and the lower bound holds for the learning potential as all its terms are non-negative.

Since we may assume \(\pi_s,d^0_s \leq \beta\) for all clients \(s\),
\begin{equation*}
    \beta \cdot \log_2 (10k) \cdot \log \left( \frac{\rho^t}{\beta} + \frac{1}{k} \right) = O( \beta \cdot \log^2k) \text{,}
\end{equation*}
and hence the upper bound holds.
Since \(\rho^t \geq 0\) for all \(t\), \(\Phi_C(t) \geq \beta \cdot \log_2(10 k) \cdot \log \left( \frac{\rho^t}{\beta} + \frac{1}{k} \right) \geq - \beta \cdot \log_2(10k) \cdot \log k = - O(\beta \cdot \log^2 k)\), and hence the lower bound holds.
\end{proof}

\pcchangelearn*

\begin{proof}
If \(\neg \Xi^t\) holds, then since \(x^t = x^{t-1}\), inequality~\eqref{eq:pc learn bound 2} follows trivially.
Hence we assume \(\Xi^t\) holds.
By following the proof of \cref{lemma:change in learn} until \eqref{eq:learning bound before casework}, we have 
\begin{align}
    &\expectover{s^t, R^t}*{\Phi_L(t) - \Phi_L(t-1) \,\middle|\, x^{t-1}, F^{t-1}, U^{t-1}, \Psi^t, \Lambda^t, \Xi^t, Y^t = 0} \nonumber \\
    &\leq \expectover{s \sim U^{t-1}}*{ 
    \begin{array}{ll}
    \displaystyle \frac{e^4 - 1}{4} \cdot d^t(s) \cdot \min(X^t(s), 1) \\
    - \displaystyle \sum_{v \in \Gamma^{t}(s)} x_v^* \cdot d^t(s) 
    -  8 \cdot \sum_{v \in D^t} c_v \cdot x_v^* + 2^{\ell-2} 
    \end{array}
    \,\middle|\, x^{t-1}, F^{t-1}, \Psi^t, \Lambda^t, \Xi^t, Y^t = 0} \label{eq:pc learning bound before casework} \text{.}
\end{align}
As shown in the proof of \cref{lemma:change in learn}, $\{ D^t, \Gamma^t(s) \}$ forms a cover of $B_{G/F^{t-1}}(s_0, 2^{\ell-3}) \eqqcolon B(s_0, 2^{\ell-3})$.  
Thus, we can use \cref{lemma:pc-alternate}.

We use the second part of \cref{lemma:pc-alternate} to change the cost of all nodes in $\Gamma^t(s)$ to be $\frac{1}{8}d^t(s) \geq 2^{\ell-3}$. 
From this lemma, we get that either
\begin{equation*}
    8 \cdot\sum_{v \in D^t} c_v \cdot x_v^* \geq (8 \cdot 2^{\ell-3} /2) \cdot (1-z^*_s) = 2^{\ell-1}\cdot (1-z^*_s)
\end{equation*}
or
\begin{equation*}
    8 \cdot \sum_{v \in \Gamma^t(s)} \frac{1}{8} d^t(s) \cdot x_v^* \geq (8 \cdot 2^{\ell-3} /2) \cdot (1-z^*_s) = 2^{\ell-1}\cdot (1-z^*_s) \text{.}
\end{equation*}
Since we know $2^{\ell-1} \geq d^t(s)/4$, and both terms are non-positive, we can use this combined with the fact that \(2^{\ell} \leq d^t(s)\) to bound our above inequality as follows:
\begin{align}
    &\expectover{s^t, R^t}*{\Phi_L(t) - \Phi_L(t-1) \,\middle|\, x^{t-1}, F^{t-1}, U^{t-1}, \Psi^t, \Lambda^t, \Xi^t, Y^t = 0} \nonumber \\
    &\leq \expectover{s \sim U^{t-1}}*{ \frac{e^4 - 1}{4} \cdot d^t(s) \cdot \min(X^t(s), 1) - \frac{d^t(s)}{8}\cdot (1-z^*_s) \,\middle|\, F^{t-1}, \Psi^t, \Lambda^t, \Xi^t, Y^t = 0} \label{eq:pc learn final line buy path} \text{.}
\end{align}
Moving to the case where \(Y^t = 1\), since \(x^t = x^{t-1}\), we have 
\begin{align}
    &\expectover{s^t, R^t}*{\Phi_L(t) - \Phi_L(t-1) \,\middle|\, x^{t-1}, F^{t-1}, U^{t-1}, \Psi^t, \Lambda^t, \Xi^t, Y^t = 1} \nonumber \\ 
    &\leq \expectover{s^t, R^t}*{-\pi_s \cdot z_s^* \,\middle|\, x^{t-1}, F^{t-1}, U^{t-1}, \Psi^t, \Lambda^t, \Xi^t, Y^t = 1} \nonumber \\
    &\leq \expectover{s \sim U^{t-1}}*{ \frac{e^4 - 1}{4} \cdot \pi_s \cdot \min(X^t(s), 1) - \frac{\pi_s}{8}\cdot z^*_s \,\middle|\, F^{t-1}, \Psi^t, \Lambda^t, \Xi^t, Y^t = 1} \label{eq:pc learn final line pay penalty} \text{.}
\end{align}
Since \(\Pr[Y^t = 1] = \frac{d^t(s)}{\pi_s + d^t(s)}\), inequality~\eqref{eq:pc learn bound 1} follows by combining lines \ref{eq:pc learn final line buy path} and \ref{eq:pc learn final line pay penalty}.
\end{proof}

\pcchangecover*

\begin{proof}
We start by considering the expected change to \(\Phi_C\) over the randomness of the sampling.
Fix an arriving client \(\sigma\).
Expanding definitions and following the proof of \cref{lemma:change in cover} until \cref{eq:rho monotone dec},
\begin{align}
    &\frac{1}{\log_2 (10k)} \cdot \expectover{R^t}*{\Phi_C(t) - \Phi_C(t-1) \,\middle|\, x^{t-1}, F^{t-1}, U^{t-1}, \Xi^t, s = \sigma} \nonumber \\
    &\leq -\beta \cdot \frac{1}{\rho^{t-1} + \frac{\beta}{k}} \cdot \expectover{R^t}*{\sum_{u \in U^{t-1}} (\kappa_{u}^{t} - \kappa_{u}^{t+1}) \,\middle|\, x^{t-1}, F^{t-1}, U^{t-1}, \Xi^t, s = \sigma} \nonumber \\
    &\leq -\beta \cdot \frac{1}{\rho^{t-1} + \frac{\beta}{k}} \cdot \expectover{R^t}*{\sum_{u \in U^{t-1}} \frac{\min(\pi_u, d^t(u))}{38} \cdot \mathds{1}\{\Gamma^{t}(u) \cap R^t \neq \emptyset\} \,\middle|\, x^{t-1}, F^{t-1}, U^{t-1}, \Xi^t, s = \sigma} \label{eq:kappa ratio pc} \\
    &=  -\beta \cdot \frac{1}{\rho^{t-1} + \frac{\beta}{k}} \cdot \sum_{u \in U^{t-1}} \frac{\min(\pi_u, d^t(u))}{38} \cdot  \probover{R^t}*{ \mathds{1}\{\Gamma^{t}(u) \cap R^t \neq \emptyset\} \,\middle|\, x^{t-1}, F^{t-1}, U^{t-1}, \Xi^t, s = \sigma} \nonumber \\
    &\leq - \frac{1 - e^{-1}}{38} \cdot \frac{d^{t}(s) \cdot \pi_s}{d^t(s) + \pi_s} \cdot \frac{\abs{\{u \in U^{t-1} : \Xi^t(u) \}}}{\rho^{t-1} + \frac{\beta}{k}} \cdot \expectover{u \sim U^{t-1}}*{\min(\pi_u, d^t(u)) \cdot \min(X^t(u), 1) \,\middle|\, x^{t-1}, F^{t-1}, \Xi^t} \label{eq:pc sampling} \\
    &\leq - \frac{1 - e^{-1}}{38} \cdot \frac{d^t(s) \cdot \pi_s}{d^t(s) + \pi_s} \cdot \frac{\abs{\{u \in U^{t-1} : \Xi^t(u) \}}}{\rho^{t-1} + \frac{\beta}{k}} \cdot \expectover{u \sim U^{t-1}}*{\frac{d^{t}(u) \cdot \pi_u}{d^t(u) + \pi_u} \cdot \min(X^t(u), 1) \,\middle|\, x^{t-1}, F^{t-1}, \Xi^t} \label{eq:pc pi vs prob} \text{.}
\end{align} 
step~\eqref{eq:kappa ratio} follows since our work in the proof of \cref{lemma:change in cover} tells us
\begin{equation*}
    \frac{\kappa^{t}_{s_0} + \kappa^{t}_{s_1} + 2^{\ell^t+3}}{\kappa^{t-1}_{s_0} + \kappa^{t-1}_{s_1} + 2^{\ell^{t-1}+3}} \leq 37/38 \text{.}
\end{equation*}
Note that it is readily checked that if \(d^{t-1}(s) > \pi_s\) and \(d^t(s) \leq \pi_s\), then \(\kappa_s\) decreases by a factor of at least 37/38.
step~\eqref{eq:pc sampling} follows by observing that since we assumed \(\Xi^t\), the nodes \(v\) are sampled independently with probability \(\min \left( \frac{d^{t}(s) \cdot \pi_s}{d^t(s) + \pi_s} \cdot x_v^{t-1}  / \beta, 1  \right)\), so our work in the proof of \cref{lemma:change in cover} gives us this bound.
Finally, step~\eqref{eq:pc pi vs prob} follows since \(\min(\pi_s, d^t(s)) \geq \frac{d^t(s) \cdot \pi_s}{d^t(s) + \pi_s}\).

Taking the expectation of \ref{eq:pc pi vs prob} over \(s \sim U^{t-1}\) conditioned on the event \(\Xi^t\), inequality~\eqref{eq:change in cover not preconnected pc} follows.
\end{proof}
    \section{Adversarial Order}
\label{subsec:adv-order-online-rounding}

In this section, we show how to recover a relatively simple $O(\log n \log k)$-competitive algorithm for online \nwst using the ideas from this paper. The idea is simply to replace the LP-solver from \cref{sec:offline} with the online primal-dual algorithm of \cite{DBLP:journals/mor/BuchbinderN09}.

\begin{theorem}[\cite{DBLP:journals/mor/BuchbinderN09}]
    There is an algorithm that, for any covering linear program for which the constraints $\langle a_1, x \rangle \geq 1, \ldots \langle a_n, x \rangle \geq 1$ are revealed online, maintains a monotonically increasing sequence of solutions $x^1 \leq \ldots \leq x^n$ such that each $x^t$ is feasible for the constraints up to time $t$, and $x^n$ is $O(\log m)$-competitive.
\end{theorem}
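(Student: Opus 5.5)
The plan is the standard online primal--dual argument with multiplicative updates, run through weak duality. Write the covering LP as $\min\{\sum_j c_j x_j : \langle a_i, x\rangle \ge 1 \ \forall i,\ x\ge 0\}$ with $m$ variables. Its dual is the packing LP $\max\{\sum_i y_i : \sum_i a_{ij} y_i \le c_j \ \forall j,\ y\ge 0\}$. Let $d \le m$ bound the number of nonzero entries in any row.

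\emph{Algorithm.} All variables start at $0$, and the algorithm is a continuous process (it can be discretized later). When constraint $i$ arrives and $\langle a_i, x\rangle < 1$, it raises the new dual variable $y_i$ at unit rate. Simultaneously, for every $j$ with $a_{ij}>0$, it raises
\[
\frac{d x_j}{d y_i} \;=\; \frac{a_{ij}}{c_j}\Big(x_j + \frac{1}{d\, a_{ij}}\Big),
\]
and it stops as soon as $\langle a_i, x\rangle = 1$. Monotonicity and feasibility of every prefix $x^t$ then hold by construction, since variables only increase and each phase ends with its constraint satisfied.

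\emph{Step 1: primal cost versus dual value.} While $y_i$ rises, the primal cost grows at rate
\[
\sum_j c_j \frac{dx_j}{dy_i} \;=\; \sum_j a_{ij} x_j + \frac{|\mathrm{supp}(a_i)|}{d} \;\le\; 1 + 1 \;=\; 2 .
\]
The first term is below $1$ because the constraint is still unsatisfied, and the second is at most $1$ by the choice of $d$. Integrating over all phases gives $\sum_j c_j x_j \le 2\sum_i y_i$.

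\emph{Step 2: approximate dual feasibility.} Fix $j$. Solving the differential equation shows that $\ln\big(x_j + \tfrac{1}{d\,a_{\max,j}}\big)$ grows at rate at least $a_{ij}/c_j$ per unit of $y_i$. Hence
\[
x_j + \frac{1}{d\,a_{\max,j}} \;\ge\; \frac{1}{d\,a_{\max,j}} \exp\Big(\tfrac{1}{c_j}\sum_i a_{ij} y_i\Big).
\]
On the other hand, $x_j$ only increases while some constraint containing $j$ is unsatisfied, so $x_j \le 1/a_{\min,j}$ at all times (up to the final infinitesimal step). Combining the two bounds gives
\[
\sum_i a_{ij} y_i \;\le\; c_j \cdot O\!\big(\log(d\, a_{\max,j}/a_{\min,j})\big).
\]
Therefore $y$ scaled down by this factor is dual feasible, and weak duality yields $\sum_j c_j x_j \le O(\log(d \cdot \text{ratio})) \cdot \opt$.

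\emph{Main obstacle.} The coefficient ratio $a_{\max,j}/a_{\min,j}$ is the step I expect to be the real difficulty. For $0/1$ constraint matrices, which is the case relevant to the cut-based \nwst relaxation, this ratio is $1$, and the bound is $O(\log d) = O(\log m)$ as stated. For general matrices I would first rescale or round coefficients within each column to powers of two. Entries smaller than a $1/\mathrm{poly}(m)$ fraction of the row maximum can be dropped at the cost of a constant-factor loss, which bounds the effective ratio by $\mathrm{poly}(m)$ and keeps the guarantee at $O(\log m)$. A secondary concern is turning the continuous process into a polynomial-time discrete one, for which I would use small multiplicative steps with only constant-factor loss.
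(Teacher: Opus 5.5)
\textbf{There is a gap for general coefficient matrices.} For $0/1$ rows your argument is complete and correct. The cost-rate bound, the lower bound on $x_j+\frac{1}{d\,a_{\max,j}}$, and the cap $x_j\le 1/a_{\min,j}$ together give $O(\log d)\subseteq O(\log m)$. That is all the paper needs for the cut LP of \nwst; the paper itself only cites \cite{DBLP:journals/mor/BuchbinderN09} and gives no proof.

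The statement, however, claims the bound for \emph{any} covering LP. For general nonnegative matrices you have actually proved $O(\log(d\cdot\max_j a_{\max,j}/a_{\min,j}))$, and your proposed reduction does not remove the ratio. There are two problems.
\begin{enumerate}
\item Dropping entries below a $1/\mathrm{poly}(m)$ fraction of the row maximum is not a constant-factor loss, because it ignores costs. Take the single constraint $x_1+\eps x_2\ge 1$ with $c_1=1$ and $c_2=\eps^2$. The optimum is $\eps$ before pruning and $1$ after.
\item More fundamentally, your bound involves the \emph{column} ratio (one variable across different constraints), and no row-wise normalization controls it. Take one variable of cost $c$ and constraints $2^{-i}x_1\ge 1$ arriving for $i=0,\dots,N-1$. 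Each row has a single nonzero, so there is nothing to prune and $d=m=1$. Yet your process yields $\sum_i a_{i1}y_i=c\,(\ln 2+(N-1)\ln\frac43)$. The dual certificate must therefore be shrunk by a factor $\Theta(N)$, and weak duality only certifies a $\Theta(N)$ ratio, even though the primal happens to be optimal here. Rounding coefficients to powers of two changes nothing.
\end{enumerate}

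\textbf{The missing idea is to bound $x_j$ through the cost rather than through the coefficients, via guess-and-double on the optimum.}
\begin{itemize}
\item Start with $\alpha=\min_j c_j/a_{1j}\le\mathrm{OPT}$.
\item In each phase, initialize $x_j=\alpha/(m c_j)$, which costs $\alpha$ in total.
\item Use the purely multiplicative rule $dx_j/dy_i=a_{ij}x_j/c_j$, so the cost rate is $\langle a_i,x\rangle<1$.
\item Once the phase cost exceeds $C\alpha\ln(2m)$, double $\alpha$ and restart.
\end{itemize}
Within a phase $c_jx_j\le C\alpha\ln(2m)$, hence $\frac{1}{c_j}\sum_i a_{ij}y_i=\ln(x_j/x_j^{0})\le\ln(Cm\ln(2m))$, independently of the $a_{ij}$. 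If $\alpha\ge\mathrm{OPT}$, weak duality (applied to the constraints seen so far) shows the budget is never reached, so the final guess is below $2\,\mathrm{OPT}$. Summing the phase vectors then gives a monotone, feasible solution whose cost is a geometric sum, i.e.\ $O(\log m)\cdot\mathrm{OPT}$ for arbitrary nonnegative coefficients.
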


In order to use this result, we first need to describe an equivalent pure-covering formulation of our flow based LP from \cref{sec:prelims}. Define
$\mathcal C_s := \{S\subseteq V : s\in S,\ r\notin S\}$. For $S\subseteq V$, let $\delta(S) := \{v\in V\setminus S : \exists u\in S \text{ with } uv\in E\}$. Then the cut-covering LP for \nwst is:
$$
    \min \sum_{v\in V} c_v x_v
    \qquad
    \text{s.t.}\qquad
    \sum_{v\in\delta(S)}x_v\geq 1
    \quad \forall s\in T,\ \forall S\in\mathcal C_s,
    \qquad
    x_v\geq 0.
$$

\begin{lemma}
\label{lem:path-cut-equivalence}
For every vector $x\in \mathbb R^V_{\geq 0}$ and terminal $s$, the following are
equivalent:
\begin{enumerate}
    \item The flow LP: there is one unit of $r$--$s$ flow whose vertex congestion at every
    vertex $v$ is at most $x_v$;
    \item The cut LP: for every $S\subseteq V$ with $s\in S$ and $r\notin S$,
    $$\sum_{v\in \delta(S)} x_v\geq 1.$$
\end{enumerate}
Consequently, the path relaxation and the cut relaxation have the same optimum.
\end{lemma}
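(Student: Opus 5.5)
The plan is to prove the two directions separately. Flow implies cut is a counting argument. Cut implies flow follows from max-flow/min-cut (equivalently, Menger's theorem) applied to the standard vertex-splitting construction. The conventions are those of the paper: terminals and the root have cost zero and may be assumed to have $x_s = x_r = \infty$ (or at least $1$). Alternatively, we simply note that $\delta(S)$ never contains $s$, and contains $r$ only if $r$ is adjacent to $S$; we treat $r$ as having unlimited capacity, consistent with the flow LP. In that LP the vertex-capacity constraint on $r$ and $s$ is harmless, since every path uses them and they are free, so we may drop it.

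\emph{(1) $\Rightarrow$ (2).} Fix $S \in \mathcal C_s$. Every $s$--$r$ path $P$ starts in $S$ and ends outside $S$. Hence the first vertex of $P$ outside $S$ lies in $\delta(S)$, so $P \cap \delta(S) \neq \emptyset$. Summing gives
\[
1 \le \sum_P f_P \le \sum_P f_P\,|P\cap\delta(S)| = \sum_{v\in\delta(S)} \sum_{P\ni v} f_P \le \sum_{v\in\delta(S)} x_v,
\]
where the last step is the capacity constraint \eqref{eq:vertex-capacity}.

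\emph{(2) $\Rightarrow$ (1).} Build the split digraph $H$. Each vertex $v$ becomes $v^{\mathrm{in}}\to v^{\mathrm{out}}$ with capacity $x_v$; this capacity is $\infty$ for $v\in\{s,r\}$. Each edge $uv$ becomes arcs $u^{\mathrm{out}}\to v^{\mathrm{in}}$ and $v^{\mathrm{out}}\to u^{\mathrm{in}}$ with capacity $\infty$. A unit $s^{\mathrm{out}}$--$r^{\mathrm{in}}$ flow in $H$ decomposes into paths that correspond to $s$--$r$ paths in $G$ with vertex congestion at most $x_v$, which gives (1). By max-flow/min-cut, it therefore suffices to show that every finite $s$--$r$ cut in $H$ has capacity at least $1$.

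Take a minimum cut $(W, \overline W)$ of finite capacity. It cuts no infinite arcs, so every arc leaving $W$ is a split arc $v^{\mathrm{in}}\to v^{\mathrm{out}}$ with $v^{\mathrm{in}}\in W$ and $v^{\mathrm{out}}\notin W$. Let $S=\{v : v^{\mathrm{out}}\in W\}$; then $s\in S$ and $r\notin S$. Consider any $v\in\delta(S)$, with neighbor $u\in S$. The arc $u^{\mathrm{out}}\to v^{\mathrm{in}}$ is infinite, so $v^{\mathrm{in}}\in W$. Since $v\notin S$, we have $v^{\mathrm{out}}\notin W$, and the split arc of $v$ is cut. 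Hence the cut capacity is at least $\sum_{v\in\delta(S)} x_v \ge 1$ by (2).

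This bookkeeping is the main place to be careful. One must check that $S$ is well defined and contains $s$, and that $r$'s unbounded capacity prevents it from causing trouble: if $r\in\delta(S)$, then the infinite arc into $r^{\mathrm{in}}$ forces $r^{\mathrm{in}}\in W$, which contradicts finiteness. Indeed a finite cut cannot have $r$ adjacent to $S$, so the cut condition is consistent with this case.

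\emph{The consequence.} The flow LP and the cut LP share the $x$-variables and the objective. By the equivalence, applied terminal by terminal, an $x$ admits feasible flows $f$ for all terminals exactly when $x$ satisfies all cut constraints. The two relaxations therefore have the same feasible $x$-projections and hence the same optimum.
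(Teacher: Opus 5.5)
Your proof follows the paper's: (1)$\Rightarrow$(2) because every $s$--$r$ path meets $\delta(S)$, summed against the capacity constraints, and (2)$\Rightarrow$(1) by max-flow/min-cut with vertex capacities, which you make explicit through vertex splitting. You are also more careful than the paper about the capacities of $s$ and $r$. Some such convention is genuinely needed: the cut constraints for $s$ never involve $x_s$, yet every flow path passes through $s$.

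One step in your bookkeeping is wrong as written. For a finite, or even minimum, cut $W$ it does not follow that $r\notin S=\{v : v^{\mathrm{out}}\in W\}$. The only arc entering $r^{\mathrm{out}}$ is $r^{\mathrm{in}}\to r^{\mathrm{out}}$, and its tail is on the sink side, so $r^{\mathrm{out}}$ can sit in $W$ at no cost. For instance, if $G$ is the path with edges $sa$ and $ar$, then $W=\{s^{\mathrm{out}},a^{\mathrm{in}},r^{\mathrm{out}}\}$ is a minimum cut of capacity $x_a$. It yields $S=\{s,r\}\notin\mathcal C_s$, so (2) cannot be applied.

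The repair is one line. Either remove $r^{\mathrm{out}}$ from $W$, which creates no new cut arcs and so cannot increase the capacity, or define $S=\{v\neq r : v^{\mathrm{out}}\in W\}$ directly. With that change, your arguments that $r\notin\delta(S)$ and that every $v\in\delta(S)$ has its split arc cut go through unchanged.
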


\begin{proof}

It suffices to show that every solution given by the flow LP satisfies the cut constraints; and for every cut LP solution, there exists a feasible flow LP solution with the same objective value.  

Let $(x, f)$ be the solution to the flow LP, we show that $x \in \mathbb R^V_{\geq 0}$ satisfies the cut constraints. Fix an arbitrary terminal $s$ and an arbitrary cut $S\in \mathcal C_s$, we have 
\[
    1 \leq \sum_{P \in \mathcal P_s} f_P \leq \sum_{v \in \delta(S)} \sum_{P\in \mathcal P_s : v\in P} f_P \leq \sum_{v\in \delta(S)} x_v,
\]
where the inequalities are by the flow LP constraints; and the equality is because every flow has to enter the set $S$ through some vertex in $\delta(S)$. Hence, the cut constraints are satisfied. 

Let $x$ be a cut LP solution, we show that there exists a flow LP solution $(x, f)$ with the same objective value. We can view $x$ as vertex capacities. By max-flow/min-cut, there is one unit of $r$--$s$ flow respecting vertex capacities $x$ if and only if every $s$--$r$ cut has capacity at least $1$, which is exactly the cut constraints. Hence, there exists a flow LP solution $(x, f)$ with the same objective value.
\end{proof}

\paragraph{Online Rounding Algorithm}

It remains to show how to round the sequence of online fractional solutions to integral solutions, and for this we use a standard threshold sampling strategy. For every vertex $v$, independently sample $\theta_v \sim \mathrm{Unif}[0,1]$. Let $C$ be a sufficiently large constant and define $\rho_t := C\ln(t+1)$, note that this quantity is also monotone in $t$.  After round $t$, define the threshold-rounded set 
$$R^t := \{v\in V : \theta_v \leq \min(1,\rho_t \cdot x^t_v)\}.
$$ 
Since both $\rho_t$ and $x^t$ are monotone, the sets $R^t$ are monotone. When terminal $s^t$ arrives, the algorithm first updates the online covering solution, adds the newly rounded vertices to $R^t$, and then buys a shortest path $P^t$ from $s^t$ to the root in $G/(F^{t-1}\cup R^t)$.

\begin{algorithm}
\caption{\textsc{ThresholdOnlineNWST}}
\label{alg:online-adv}
\begin{algorithmic}[1]
\State Sample $\theta_v \sim \mathrm{Unif}[0,1]$ independently for every $v\in V$
\State $F^0 \gets \{r\}$ and $R^0 \gets \emptyset$
\For{terminal $s^t$ at time $t$}
    \State Update the online covering solution $x^t$ using \cite{DBLP:journals/mor/BuchbinderN09}.
    \State Update $R^t \gets R^{t-1}\cup \{v: \theta_v \leq \min(1,\rho_t \cdot x^t_v)\}$.
    \State Let $P^t$ be the shortest $r$--$s^t$ path in $G/(F^{t-1}\cup R^t)$.
    \State $F^t \gets F^{t-1}\cup R^t\cup P^t$.
\EndFor
\State \Return $F^k$
\end{algorithmic}
\end{algorithm}

\begin{theorem}
\label{thm:adv-online-nwst}
\cref{alg:online-adv} is $O(\log n\log k)$-competitive for adversarial-order online \nwst against an oblivious adversary.
\end{theorem}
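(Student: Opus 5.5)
The plan is to mirror the offline analysis of \cref{thm:offline}, with the offline LP replaced by the monotone online covering solution and the fixed rounding probability $4\ln k$ replaced by the growing factor $\rho_t$. The cost of \cref{alg:online-adv} has two parts: the rounded vertices $R^k$ and the greedy paths $\sum_t c(P^t)$. Let $x^k$ be the final online fractional solution. By the theorem of \cite{DBLP:journals/mor/BuchbinderN09}, applied to the cut LP, we have $c^\intercal x^k = O(\log m)\cdot\lpopt$. By \cref{lem:path-cut-equivalence} this is also an upper bound, up to the same factor, on the flow LP. Here $m$ is the number of covering variables, which is $n$, so this gives $O(\log n)\cdot \lpopt$.

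\textbf{Cost of $R^k$.} Since $\rho_t$ and $x^t$ are monotone, $\Pr[v\in R^k]\le \rho_k x^k_v$. Hence $\expect{c(R^k)} \le C\ln(k+1)\, c^\intercal x^k = O(\log k\log n)\,\lpopt$.

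\textbf{Cost of the paths.} I would redo the shortcut argument at each time $t$, using $x^t$ in place of $x$. Label terminals in arrival order and let $\ell_t=\lfloor\log_2 c(P^t)\rfloor$. Define shortcut sets $\Gamma(s^t,\ell)=\bigcup_{i<t}B(s^i,2^{\ell-2})\cap B(s^t,2^{\ell-2})$, with balls taken in $G$ as in \cref{sec:offline}.

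The analogue of \cref{claim:no_gamma_mass} should say that, with probability $1-O(1/t)$ (or $1-1/\mathrm{poly}(t)$ after choosing $C$ large), $x^t(\Gamma(s^t,\ell_t))<1/2$. Take $\ell^*$ minimal with $x^t(\Gamma(s^t,\ell^*))\ge 1/2$. The thresholds $\theta_v$ are independent, so $R^t$ hits this set with probability at least $1-e^{-\rho_t/2}=1-(t+1)^{-C/2}$, and a hit forces $c(P^t)\le 2^{\ell^*-1}$. The one subtlety is that this set depends on $x^t$, which depends on the adversary's sequence. Against an oblivious adversary, however, the online LP solver is deterministic given the sequence, so $x^t$ is fixed independently of $\theta$ and the bound holds. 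On the failure event I bound $c(P^t)\le \lpopt$ using \cref{assmpt:beta-n-bound}. Choosing $C\ge 4$ makes the expected failure cost $\sum_t t^{-2}\lpopt=O(\lpopt)$.

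On the good event, let $D(s^t)=B(s^t,2^{\ell_t-2})\setminus\Gamma(s^t,\ell_t)$. As in the proof of \cref{thm:offline}, at least $1/2$ unit of the flow supported by $x^t$ (and hence by $x^k\ge x^t$) avoids $\Gamma$. Each such path spends at least $2^{\ell_t-2}$ inside the ball. This gives $c(P^t)\le 16\sum_{v\in D(s^t)}c_v x^k_v$.

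The sets $D(s^t)$ are disjoint within a level. By \cref{assmpt:beta-n-bound} there are $O(\log n)$ levels; with the offline normalization one could instead use $O(\log k)$ levels. Summing therefore gives $\sum_t c(P^t)=O(\log n)\,c^\intercal x^k$ or $O(\log k)\,c^\intercal x^k$. I would aim for the $O(\log k)$-level version, where paths cheaper than $\lpopt/k$ are charged trivially, so that the path cost is $O(\log k\log n)\lpopt$ and matches the rounding cost.

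\textbf{Main obstacle.} The main difficulty I expect is that the distances used in $\Gamma$, $D$ and $P^t$ are measured in $G/(F^{t-1}\cup R^t)$, which changes over time. The disjointness of claimed regions must therefore be argued carefully. My first attempt would be to define every ball in the original graph $G$ and compare $c(P^t)$ with $\min_{i<t}d_G(s^i,s^t)$ only through the hit/no-hit dichotomy, exactly as offline. This works because $P^t$ is bought in a contracted graph, which can only reduce its cost.
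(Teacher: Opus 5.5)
Your proposal is correct and follows essentially the same route as the paper. The rounding cost is bounded by $\rho_k\, c^\intercal x^k$. The greedy paths are handled with the original-graph shortcut sets $\Gamma^t(\ell)$, which are fixed independently of the thresholds against an oblivious adversary; a miss of $\Gamma^t(\ell^*)$ has probability at most $e^{-\rho_t/2}$, and otherwise $c(P^t)$ is charged to the per-level disjoint regions $D^t$ over $O(\log k)$ levels. The only differences are cosmetic: you discard rounds with $c(P^t)\le \lpopt/k$ where the paper uses $\lpopt/t$, and you bound failure rounds by $\lpopt$ where the paper uses $c^\intercal x^k$. Note that the $\beta/n$ lower bound of \cref{assmpt:beta-n-bound} does not apply here, since this algorithm does no preprocessing, so aiming for the $O(\log k)$-level version is the right choice.
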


The cost of the threshold rounding is at most $O(\rho_k) \cdot c^\intercal x^k = O(\log n \log k \cdot \lpopt)$, so to prove the theorem, we need only bound the cost of the greedy paths. For this, we use exactly the same shortcut sets as in the offline analysis. Let $s^0:=r$. For every $t$ and integer $\ell$, define
$$\Gamma^t(\ell) := \bigcup_{0\leq i<t} B(s^i,2^{\ell-2})\cap B(s^t,2^{\ell-2}).$$
These sets depend only on the graph and the terminal sequence and are fixed before the thresholds are sampled.

Let $\ell^t := \lfloor \log_2 c(P^t)\rfloor$. Call round $t$ cheap if $c(P^t)\leq \lpopt/t$. The total cost of cheap rounds is at most $\sum_{t=1}^k  \lpopt / t =O(\lpopt \cdot \log k),$ which is negligible compared to the final guarantee, so we can assume every round is non-cheap. For every non-cheap round, $c(P^t)\in (\lpopt/t,\lpopt]$, so throughout the $k$ rounds there are only $O(\log k)$ possible relevant cost levels, just like in the offline section. 
Let $\mathcal E_t$ be the event that round $t$ has $x^t(\Gamma^t(\ell^t))\geq 1/2$. We now bound the expected cost of all the greedy paths $P^t$.

\begin{lemma}[Greedy paths cost]
\label{lem:online-greedy-path-cost}
\[
\expect*{\sum_{t=1}^k c(P^t)} \leq O(\log k)\cdot c^\intercal x^k \leq O(\log k \log n) \cdot \lpopt.
\]
\end{lemma}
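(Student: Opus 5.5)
We follow the offline proof of \cref{thm:offline} almost verbatim, with $x^k$ (the final online covering solution) in the role of the LP solution $x$. Since the $x^t$ are monotone and each $x^t$ is feasible for the constraints revealed up to time $t$, every terminal $s^t$ has, by \cref{lem:path-cut-equivalence}, a unit $r$--$s^t$ flow whose congestion at each $v$ is at most $x^t_v \leq x^k_v$. Hence, \cref{lemma:alternate} and the ``half the flow avoids $\Gamma$'' argument can be applied to the flow under $x^t$. We split the expectation according to the events $\mathcal E_t$.

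\textbf{Step 1 (bad rounds).} Fix a non-cheap round $t$ and let $\ell^\star$ be the least level with $x^t(\Gamma^t(\ell^\star))\geq 1/2$. Because $\Gamma^t(\ell)$ depends only on the fixed terminal sequence, and $R^t$ contains every $v$ with $\theta_v\leq\min(1,\rho_t x^t_v)$ independently, we get
\[
\Pr\bigl[R^t\cap\Gamma^t(\ell^\star)=\emptyset\bigr]\leq \exp(-\rho_t/2)\leq (t+1)^{-C/2}.
\]
If $R^t$ hits $\Gamma^t(\ell^\star)$, then $c(P^t)\leq 2^{\ell^\star-1}$, so $\ell^t<\ell^\star$ and therefore $x^t(\Gamma^t(\ell^t))<1/2$. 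Thus $\Pr[\mathcal E_t]\leq (t+1)^{-C/2}$. On $\mathcal E_t$ we use the crude bound $c(P^t)\leq\lpopt$, which holds by \cref{assmpt:beta-n-bound}. Summing over $t$ with $C\geq 4$, the total contribution of bad rounds is $O(\lpopt)$.

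\textbf{Step 2 (good rounds).} On $\neg\mathcal E_t$, define the claimed region
\[
D^t:=B(s^t,2^{\ell^t-2})\setminus\Gamma^t(\ell^t).
\]
This is measured in the contracted graph so that every $r$--$s^t$ path leaves the ball at cost at least $2^{\ell^t-2}$. As offline, two good rounds $t\neq t'$ at the same level have disjoint $D$'s: a common vertex would lie in $B(s^{t'},\cdot)\cap B(s^t,\cdot)$ with $t'<t$, i.e.\ in $\Gamma^t(\ell^t)$.

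Since $x^k(\Gamma^t(\ell^t))\geq x^t(\Gamma^t(\ell^t))$ could be large, we run the flow argument with $x^t$ instead. The flow under $x^t$ puts at most $1/2$ of its mass through $\Gamma^t(\ell^t)$. The remaining flow, at least $1/2$, uses paths that each pay at least $2^{\ell^t-2}$ inside $D^t$. This gives
\[
\sum_{v\in D^t}c_vx^k_v\;\geq\;\sum_{v\in D^t}c_vx^t_v\;\geq\; c(P^t)/16.
\]
Summing over good rounds, grouping by level, and using disjointness within each level together with the $O(\log k)$ relevant levels of non-cheap rounds yields $O(\log k)\cdot c^\intercal x^k$.

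\textbf{Step 3 (conclusion).} Adding the cheap-round cost $O(\lpopt\log k)$ and the Buchbinder--Naor guarantee $c^\intercal x^k=O(\log n)\lpopt$ (the number of cut constraints is exponential, but $\log m$ there becomes $O(\log n)$ for the vertex-variable formulation) gives the lemma.

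\textbf{Main obstacle.} The main obstacle is Step 2's distance subtlety. The balls and the shortcut sets are defined in the fixed graph $G$, whereas $P^t$ is a shortest path in $G/(F^{t-1}\cup R^t)$. The ball-exit argument requires distances in the contracted graph, while the independence argument in Step 1 requires $\Gamma^t$ to be fixed in advance.

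I would resolve this by showing that for non-cheap good rounds, any vertex $v$ within distance $2^{\ell^t-2}$ of $s^t$ in the contracted graph but not in $G$ forces a shortcut through previously bought structure. Such structure is connected to some earlier $s^i$, so $v$ (or a bought vertex on the route) lies in $\Gamma^t(\ell^t)$-type territory. If this is messy, the fallback is to define the balls in $G$ and argue that $c(P^t)\leq d_G$ only helps, charging through the original-graph exit cost.
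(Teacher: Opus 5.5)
Your Steps 1--3 follow the paper's proof. Both use the same minimal level $\ell^\star$ and the same hitting argument, the same regions $D^t=B(s^t,2^{\ell^t-2})\setminus\Gamma^t(\ell^t)$, the same half-of-the-flow charging against $x^t\le x^k$, and the same per-level disjointness. The one thing to change is to commit to your fallback, which is exactly the paper's choice.

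With that choice, the ``main obstacle'' disappears. Take every ball in the fixed graph $G$. Then $d_G(s^t,r)\ge d_{G/(F^{t-1}\cup R^t)}(s^t,r)=c(P^t)\ge 2^{\ell^t}$, so $r\notin B_G(s^t,2^{\ell^t-2})$. Hence every $r$--$s^t$ path pays at least $2^{\ell^t-2}$ of original cost inside that ball. This is all the charging needs, since the LP is written with the original costs $c_v$. Disjointness and the independence used in Step~1 are then automatic, because $\Gamma^t(\cdot)$ and the $D^t$ are defined in the same fixed graph.

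Do not keep the contracted-graph balls you wrote in Step~2. A common vertex of two contracted balls need not lie in the $G$-balls that define $\Gamma^t(\ell^t)$, so the disjointness argument no longer goes through.

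Your primary repair also fails. Suppose a free vertex at contracted distance at most $2^{\ell^t-2}$ from $s^t$ lay in the component of $r$. Then $c(P^t)\le 2^{\ell^t-2}$, a contradiction. So the only free vertices that can enlarge the contracted ball are sampled vertices (from $R^t$ or an earlier $R^{t'}$). These lie in components containing neither $r$ nor any earlier terminal, so there is no shortcut to an earlier $s^i$ to charge them to.
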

\begin{proof}
    We do case analysis, per round, based on whether $\mathcal E_t$ occurs or not. Fix an arbitrary round $t$. 
  
    We first show that $R^t\cap \Gamma^t(\ell^*)=\emptyset$ happens with a small probability.
        Fix a round $t$. If there is no level $\ell$ such that $x^t(\Gamma^t(\ell))\geq 1/2$, then the claim is immediate. Otherwise, let $\ell^*$ be the minimum such level. Since $\Gamma^t(\ell^*)$ and $x^t$ are fixed independently of the thresholds,
        $$
        \Pr[R^t\cap \Gamma^t(\ell^*)=\emptyset] 
        = \prod_{v\in \Gamma^t(\ell^*)} \left( 1 - \min(1,\rho_t \cdot x^t_v) \right)
        \leq \exp(- \rho_t / 2),
        $$
        where the inequality is by the fact that $1-z \leq e^{-z}$ for all $z\in \mathbb R$. 
        Note that if some vertex $v \in \Gamma^t(\ell^*)$ satisfies $\rho_t \cdot x^t_v\geq 1$, then the miss probability is zero; otherwise the exponent is at least $\rho_t \cdot x^t(\Gamma^t(\ell^*))\geq \rho_t/2$.

        We now show that $\mathcal E_t$ can only occur if $R^t\cap \Gamma^t(\ell^*)=\emptyset$.
        Suppose $R^t\cap \Gamma^t(\ell^*)\neq \emptyset$, and fix $v\in R^t\cap \Gamma^t(\ell^*)$. Then for some $i<t$, $v\in B(s^i,2^{\ell^*-2})$. 
        Hence, $c(P^t) \leq d(s^t, v) + d(v, s^i) \leq 2^{\ell^* - 2} + 2^{\ell^* - 2} = 2^{\ell^* - 1}.$
        Thus $\ell^t<\ell^*$. By minimality of $\ell^*$, this implies $x^t(\Gamma^t(\ell^t))<1/2$. Therefore the event $\mathcal E_t$ can only occur if $R^t\cap \Gamma^t(\ell^*)=\emptyset$, which happens with probability at most $\exp(-\rho_t/2)$.

        The cost of rounds where $\neg \mathcal E_t$ occurs can be bounded using the same arguments as in the offline case. Since $x^t(\Gamma^t(\ell^t))<1/2$, the total amount of flow routed through $\Gamma^t(\ell^t)$ is less than $1/2$: 
        \[
            \sum_{P : P \cap \Gamma^t(\ell^t) = \emptyset} f_P^t = \sum_{v \in \Gamma^t(\ell^t)} \sum_{P : P \ni v} f_P^t\leq \sum_{v \in \Gamma^t(\ell^t)} x_v^t < 1/2,
        \]
        where the first inequality follows from the LP constraint; and the second inequality is by the definition of $\neg \mathcal{E}_t$.
        
        Define $D^t := B(s^t,2^{\ell^t-2})\setminus \Gamma^t(\ell^t)$. Note that $D^t$ is disjoint from $D^{t'}$ for $t' < t$ and $\ell^{t'} = \ell^t$. Otherwise their intersection vertex would be in both $B(s^t,2^{\ell^t-2})$ and $B(s^{t'},2^{\ell^t-2})$, and hence in $\Gamma^t(\ell^t)$, which is impossible.

        On one hand, the cost of vertices in $D^t$ is, up to a constant factor, an upper bound on $c(P^t)$: 
        \[
            \sum_{v \in D^t} c_v x^t_v \geq \sum_{v \in D^t} c_v \cdot \sum_{P \ni v} f_P^t = \sum_{P}f_P^t \left( \sum_{v \in P \cap D^t}  c(v)\right) \geq \sum_{P}f_P^t \cdot 2^{\ell^t - 3} \geq c(P^t) / 32 ,
        \]
        where the first inequality follows from the LP constraint; the second one follows from the fact that each path in $D^t$ has length at least $2^{\ell^t - 3}$; and the third inequality is by the definition of $\ell^t$.
        
        On the other hand, we know that $\{ D^t \}_{t=1}^k$ is a collection of disjoint sets, distributed over $O(\log k)$ levels. As a result, the total cost of all $D^t$'s is bounded by $c^\intercal x^k$ per level. Therefore, 
        \[
            \sum_{t = 1}^{k}c(P^t) \leq 32 \sum_{t=1}^k \sum_{v \in D^t} c_v x^t_v \leq O(\log k) \sum_{v \in V} c_v x^k_v.
        \]

    Now we can bound the expected cost of all the greedy paths: 
    \begin{align*}
        \expect*{\sum_{t = 1}^k c(P^t)} 
        & = \sum_t \expect*{c(P^t) | \cE_t} \cdot \prob{\cE_t} + \expect*{c(P^t) | \neg\cE_t} \cdot \prob{\neg \cE_t} \\ 
        & \leq \sum_t c^\intercal x^k \cdot \exp(-\rho_t/2) + O(\log k) \cdot c^\intercal x^k \\
        & \leq O(\lpopt) + O(\log k) \cdot c^\intercal x^k = O(\log k)\cdot c^\intercal x^k,
    \end{align*}
    where the first line is by law of total expectation, the second line follows from the fact that $c(P^t) \leq c^\intercal x^k$ regardless of the event $\mathcal E_t$, and the last line is by the definition of $\rho_t$.
\end{proof}

This concludes the proof of the theorem.

    \section{Runtime}

Now that we are satisfied with the correctness of the algorithm, we will analyze its runtime. 
We begin by analyzing the algorithm where we assume to have access to \(\beta\) that satisfies $\lpopt \leq \beta \leq 2 \cdot \lpopt$.
This algorithm is copied below for ease of reference.
In section~\ref{sec:guess-and-double}, we double check that guess-and-double does not impact the asymptotic runtime nor the competitive ratio.

Let $T_{\textsc{SP}}$ denote the runtime of one shortest path computation on a graph $G$ with $|V| = O(n)$ and $|E| = O(m)$.   
Using Fibonacci heaps \cite{DBLP:journals/jacm/FredmanT87} one can implement shortest path in $T_{\textsc{SP}} = O(m + n \log n)$ time. We claim that \cref{alg:nwsf} runs in time $O(k \cdot T_{\textsc{SP}})$, where $k$ is the number of terminals arriving online. 
To see this, it suffices to show that (1) the precomputation takes $O(m)$ time, and (2) the runtime per round is $O(T_{\textsc{SP}})$.

To implement precomputation, we loop over the vertices, and only keep the ones whose costs are within $[\beta/ n, \beta]$. Then we can initializing their weights. Note that there are only $O(\log n)$ many values of $\ell$. Hence, the initialization of node weights $x_v^0$, the forest $F^0$, and the tree per level $T_\ell^0$ takes at most $O(n)$ time. 

Within each iteration when a terminal pair $s = (s_0, s_1)$ arrives, to find $\kappa^t_{s_0}$ and $\kappa^t_{s_1}$, we perform two shortest path computation, centered at $s_0$ and $s_1$ respectively. At the end, we store $d^t(s_i, u)$ into an array for both terminals. These distances would be used later to test membership of $u \in \Gamma^t(s)$. In total, the initialization and shortest paths finding take at most $O(T_{\textsc{SP}})$. The sampling procedure on line 11 to 13 takes $O(n)$. If testing membership in $\Gamma^t(s)$ takes constant time, the rest of the loop can be done by scanning all vertices, and updating their weights accordingly. This incurs linear overhead, which is still $O(T_{\textsc{SP}})$.

It remains to show that, at the cost of $O(T_{\textsc{SP}})$ time to do a multi-source shortest path, we can make testing the membership of the set $\Gamma^t(s)$ in constant time. Recall that given an arbitrary vertex $u \in V$ and terminal $s_i$, querying $d^t(s_i, u)$ takes constant time. We compute a multi-source shortest path from our current level $T_{\ell}^{t - 1}$ by adding a super source node connecting each vertex in the tree. 

Now querying the distance to $T_\ell^t$ and $s_i$ both take constant time, we iterate over the vertices to check whether it belongs to $\Gamma^t(s)$ by the following procedure. For an arbitrary vertex $u \in V$ and arbitrary terminal $s_i$, if $d(s_i, u) + d(u, s_{1 - i}) \leq d^t(s) / 2$, then $u \in \Gamma^t(s)$ by definition. If not, we then check whether $d(s_i, u) + d(u, T^{t - 1}_\ell) \leq \kappa_{s_i}^t / 2$.  Then $u \in \Gamma^t(s)$ if the answer is positive, and vice versa. 

Hence, we conclude that the runtime of our algorithm is at most $O(k \cdot T_{\textsc{SP}})$.
    
    \section{Guess-and-Double}\label{sec:guess-and-double}

We perform a standard doubling strategy on \(\beta\): starting with \(\beta = d(s_0,s_1)\), we run \cref{alg:nwsf}.
Once the cost of our algorithm exceeds \(4C \cdot\beta \cdot \log n\) at time \(t\), where \(C\) is the constant hidden in the competitive ratio of \cref{thm:main_nwsf}, we replace \(\beta\) by $2\beta$ and immediately reinitialize \cref{alg:nwsf}. Let $\cA_1,\ldots,\cA_{M}$ be the copies of \cref{alg:nwsf} created thus, and let $\cA_i$ use guess $\beta_i \in [2^i \cdot \lpopt, 2^{i+1} \cdot \lpopt)$. Denote the requests sets fed to $\cA_1, \ldots, \cA_M$ as $T_1, \ldots, T_M$ respectively, such that $T_1, \ldots, T_k$ partition the full request set $T$. Let $M^*$ be the index of the copy $\cA_i$ whose guess $\beta_i \in [\lpopt, 2\cdot \lpopt)$, and let $H_i$ be the event that $\cA_i$ is started because $\cA_{i-1}$ exceeded its budget.

We now argue that this strategy incurs at most a constant factor overhead. For any $i= M^* + j$, the expected cost of $c(\cA_i) \leq C \cdot \beta_i \cdot \log n$ even after conditioning on $H_i$ (because $\cA_i$ is initialized with fresh randomness and processes the elements not fed to $\cA_1, \ldots, \cA_{i-1}$ in uniformly random order). Hence, $\prob{H_i \mid H_{i-1}} \leq \nicefrac{1}{4}$, and thus $\prob{H_i} \leq 4^{-j}$, which in turn means that 
\begin{align*}
    \expect{c(\cA_i)} &= \expect{c(\cA_i) \mid H_i} \cdot \prob{H_i} \leq \frac{1}{4^i} \cdot 2^i \cdot 4C \cdot \lpopt \cdot \log n = 2^{2-i} \cdot C \cdot \lpopt \cdot \log n.
    \intertext{Thus the total cost is}
    \expect*{c(\alg)} &= \sum_{i=1}^{M^*} \expect*{c(\cA_i)} + \sum_{i=M^*+1}^\infty \expect*{c(\cA_i)} \leq O(\log n \cdot \lpopt) + \sum_{j=1}^\infty  2^{2-i} \cdot C \cdot \lpopt \cdot \log n \\
    &= O(\log n \cdot \lpopt).
\end{align*}

We also note that this guess-and-double strategy preserves the asymptotic runtime of $O(k \cdot T_{\textsc{SP}})$, because each $\cA_i$ incurs runtime $O(|T_i| \cdot T_{\textsc{SP}})$ and thus in aggregate they take time $\sum_i O(|T_i| \cdot T_{\textsc{SP}}) = O(k \cdot T_{\textsc{SP}})$.

    {
    \footnotesize
    \bibliography{refs}

\newcommand{\etalchar}[1]{$^{#1}$}
\begin{thebibliography}{ALM{\etalchar{+}}98}

\bibitem[AAA{\etalchar{+}}06]{DBLP:journals/talg/AlonAABN06}
Noga Alon, Baruch Awerbuch, Yossi Azar, Niv Buchbinder, and Joseph Naor.
\newblock A general approach to online network optimization problems.
\newblock {\em {ACM} Trans. Algorithms}, 2(4):640--660, 2006.

\bibitem[AAB04]{DBLP:journals/tcs/AwerbuchAB04}
Baruch Awerbuch, Yossi Azar, and Yair Bartal.
\newblock On-line generalized steiner problem.
\newblock {\em Theor. Comput. Sci.}, 324(2-3):313--324, 2004.

\bibitem[AKR95]{DBLP:journals/siamcomp/AgrawalKR95}
Ajit Agrawal, Philip~N. Klein, and R.~Ravi.
\newblock When trees collide: An approximation algorithm for the generalized
  steiner problem on networks.
\newblock {\em {SIAM} J. Comput.}, 24(3):440--456, 1995.

\bibitem[ALM{\etalchar{+}}98]{DBLP:journals/jacm/AroraLMSS98}
Sanjeev Arora, Carsten Lund, Rajeev Motwani, Madhu Sudan, and Mario Szegedy.
\newblock Proof verification and the hardness of approximation problems.
\newblock {\em J. {ACM}}, 45(3):501--555, 1998.

\bibitem[BC97]{DBLP:conf/stoc/BermanC97}
Piotr Berman and Chris Coulston.
\newblock On-line algorithms for steiner tree problems (extended abstract).
\newblock In Frank~Thomson Leighton and Peter~W. Shor, editors, {\em
  Proceedings of the Twenty-Ninth Annual {ACM} Symposium on the Theory of
  Computing, El Paso, Texas, USA, May 4-6, 1997}, pages 344--353. {ACM}, 1997.

\bibitem[BEV25]{DBLP:conf/soda/Borst0V25}
Sander Borst, Marek Eli{\'{a}}s, and Moritz Venzin.
\newblock Stronger adversaries grow cheaper forests: online node-weighted
  steiner problems.
\newblock In Yossi Azar and Debmalya Panigrahi, editors, {\em Proceedings of
  the 2025 Annual {ACM-SIAM} Symposium on Discrete Algorithms, {SODA} 2025, New
  Orleans, LA, USA, January 12-15, 2025}, pages 3842--3864. {SIAM}, 2025.

\bibitem[BGRS13]{DBLP:journals/jacm/ByrkaGRS13}
Jaroslaw Byrka, Fabrizio Grandoni, Thomas Rothvo{\ss}, and Laura Sanit{\`{a}}.
\newblock Steiner tree approximation via iterative randomized rounding.
\newblock {\em J. {ACM}}, 60(1):6:1--6:33, 2013.

\bibitem[BHL13]{DBLP:conf/icalp/BateniHL13}
MohammadHossein Bateni, MohammadTaghi Hajiaghayi, and Vahid Liaghat.
\newblock Improved approximation algorithms for (budgeted) node-weighted
  steiner problems.
\newblock In Fedor~V. Fomin, Rusins Freivalds, Marta~Z. Kwiatkowska, and David
  Peleg, editors, {\em Automata, Languages, and Programming - 40th
  International Colloquium, {ICALP} 2013, Riga, Latvia, July 8-12, 2013,
  Proceedings, Part {I}}, Lecture Notes in Computer Science, pages 81--92.
  Springer, 2013.

\bibitem[BN09]{DBLP:journals/mor/BuchbinderN09}
Niv Buchbinder and Joseph Naor.
\newblock Online primal-dual algorithms for covering and packing.
\newblock {\em Math. Oper. Res.}, 34(2):270--286, 2009.

\bibitem[DS14]{DBLP:conf/stoc/DinurS14}
Irit Dinur and David Steurer.
\newblock Analytical approach to parallel repetition.
\newblock In David~B. Shmoys, editor, {\em Symposium on Theory of Computing,
  {STOC} 2014, New York, NY, USA, May 31 - June 03, 2014}, pages 624--633.
  {ACM}, 2014.

\bibitem[Fei98]{feige1998thresholdforsetcover}
Uriel Feige.
\newblock A threshold of \(\ln n\) for appoximating set cover.
\newblock {\em Journal of the ACM}, 45(4):634--652, 1998.

\bibitem[FT87]{DBLP:journals/jacm/FredmanT87}
Michael~L. Fredman and Robert~Endre Tarjan.
\newblock Fibonacci heaps and their uses in improved network optimization
  algorithms.
\newblock {\em J. {ACM}}, 34(3):596--615, 1987.

\bibitem[GK99]{DBLP:journals/iandc/GuhaK99}
Sudipto Guha and Samir Khuller.
\newblock Improved methods for approximating node weighted steiner trees and
  connected dominating sets.
\newblock {\em Inf. Comput.}, 150(1):57--74, 1999.

\bibitem[GKL21]{gupta2024randomordersetcover}
Anupam Gupta, Gregory Kehne, and Roie Levin.
\newblock Random order online set cover is as easy as offline.
\newblock In {\em 62nd {IEEE} Annual Symposium on Foundations of Computer
  Science, {FOCS} 2021, Denver, CO, USA, February 7-10, 2022}, pages
  1253--1264. {IEEE}, 2021.

\bibitem[GKL24]{gupta2023setcoveringeyeswide}
Anupam Gupta, Gregory Kehne, and Roie Levin.
\newblock Set covering with our eyes wide shut.
\newblock In David~P. Woodruff, editor, {\em Proceedings of the 2024 {ACM-SIAM}
  Symposium on Discrete Algorithms, {SODA} 2024, Alexandria, VA, USA, January
  7-10, 2024}, pages 4530--4553. {SIAM}, 2024.

\bibitem[GW95]{DBLP:journals/siamcomp/GoemansW95}
Michel~X. Goemans and David~P. Williamson.
\newblock A general approximation technique for constrained forest problems.
\newblock {\em {SIAM} J. Comput.}, 24(2):296--317, 1995.

\bibitem[HLP17]{DBLP:journals/siamcomp/HajiaghayiLP17}
MohammadTaghi Hajiaghayi, Vahid Liaghat, and Debmalya Panigrahi.
\newblock Online node-weighted steiner forest and extensions via disk
  paintings.
\newblock {\em {SIAM} J. Comput.}, 46(3):911--935, 2017.

\bibitem[IW91]{DBLP:journals/siamdm/ImaseW91}
Makoto Imase and Bernard~M. Waxman.
\newblock Dynamic steiner tree problem.
\newblock {\em {SIAM} J. Discret. Math.}, 4(3):369--384, 1991.

\bibitem[Kor04]{korman2004randomizationinsetcover}
Simon Korman.
\newblock On the use of randomization in the online set cover problem, 2004.

\bibitem[KR95]{DBLP:journals/jal/KleinR95}
Philip~N. Klein and R.~Ravi.
\newblock A nearly best-possible approximation algorithm for node-weighted
  steiner trees.
\newblock {\em J. Algorithms}, 19(1):104--115, 1995.

\bibitem[KZ97]{DBLP:journals/jco/KarpinskiZ97}
Marek Karpinski and Alexander Zelikovsky.
\newblock New approximation algorithms for the steiner tree problems.
\newblock {\em J. Comb. Optim.}, 1(1):47--65, 1997.

\bibitem[NPS11]{DBLP:conf/focs/NaorPS11}
Joseph Naor, Debmalya Panigrahi, and Mohit Singh.
\newblock Online node-weighted steiner tree and related problems.
\newblock In Rafail Ostrovsky, editor, {\em {IEEE} 52nd Annual Symposium on
  Foundations of Computer Science, {FOCS} 2011, Palm Springs, CA, USA, October
  22-25, 2011}, pages 210--219. {IEEE} Computer Society, 2011.

\bibitem[Pan15]{panigrahi2015online}
Debmalya Panigrahi.
\newblock Lecture 18: Online {Steiner} tree.
\newblock Lecture notes for COMPSCI 590.1: Advanced Topics in Computer Science:
  Graph Algorithms, Duke University, March 2015.
\newblock Accessed: 2026-03-31.

\bibitem[PS00]{DBLP:journals/jal/PromelS00}
Hans~J{\"{u}}rgen Pr{\"{o}}mel and Angelika Steger.
\newblock A new approximation algorithm for the steiner tree problem with
  performance ratio 5/3.
\newblock {\em J. Algorithms}, 36(1):89--101, 2000.

\bibitem[RZ00]{DBLP:conf/soda/RobinsZ00}
Gabriel Robins and Alexander Zelikovsky.
\newblock Improved steiner tree approximation in graphs.
\newblock In David~B. Shmoys, editor, {\em Proceedings of the Eleventh Annual
  {ACM-SIAM} Symposium on Discrete Algorithms, January 9-11, 2000, San
  Francisco, CA, {USA}}, pages 770--779. {ACM/SIAM}, 2000.

\bibitem[Zel93]{DBLP:journals/algorithmica/Zelikovsky93}
Alexander Zelikovsky.
\newblock An 11/6-approximation algorithm for the network steiner problem.
\newblock {\em Algorithmica}, 9(5):463--470, 1993.

\end{thebibliography}
    \bibliographystyle{alpha}
    }
\end{document}